\documentclass[a4paper, 10pt, reqno]{amsart}
\usepackage{setspace}
\usepackage{amsthm}
\usepackage{setspace}
\usepackage{parskip}
\usepackage{enumitem}
\usepackage{hyperref}
\usepackage{scalerel,stackengine}
\stackMath
\newcommand\reallywidehat[1]{%
\savestack{\tmpbox}{\stretchto{%
  \scaleto{%
    \scalerel*[\widthof{\ensuremath{#1}}]{\kern-.6pt\bigwedge\kern-.6pt}%
    {\rule[-\textheight/2]{1ex}{\textheight}}%WIDTH-LIMITED BIG WEDGE
  }{\textheight}% 
}{0.5ex}}%
\stackon[1pt]{#1}{\tmpbox}%
}
\parskip 1ex

\usepackage[margin=1in]{geometry}
\usepackage{amssymb,amsmath,amsfonts}
\usepackage{graphicx,color}
\usepackage{epstopdf}
\usepackage{enumerate}
\usepackage{caption}
\usepackage{subfig}
\usepackage{multirow}
%\usepackage{pgfplots} 
 %\pgfplotsset{compat=1.16}
%\usepackage{titlesec}
\usepackage{mathrsfs}
\usepackage{tikz}
\usepackage{booktabs}
\usepackage{colortbl}
\usepackage{amsmath}
\usepackage{amsthm}
\usepackage{amssymb}
\usepackage{graphicx}
\usepackage{caption}
\usepackage{subcaption}
\usepackage{relsize}
\usepackage{xcolor}
\usepackage{mismath}
\parindent 1.0 cm
\makeatletter \oddsidemargin  -.1in \evensidemargin -.1in
\textwidth 16cm \topmargin 0.0cm \textheight 20.5cm \textwidth
11.5cm \topmargin 0.0cm \textheight 19.5cm
\setlength\textheight{9.5in} \setlength\textwidth{6.75in}
\setlength\topmargin{0.0in}

\usepackage[active]{srcltx}
\usepackage{comment}
\usepackage[mathscr]{euscript}
\usepackage{placeins}
\usepackage[symbol]{footmisc}
\usetikzlibrary{shapes.geometric}
%%\usepackage{showkeys}
%\setlength{\hoffset}{0.7in}

%\graphicspath{graphics/}	

%\newcommand{\ds}{{\rm d}s}

%\newcommand{\sRh}{{\mathscr R}_h}

%\DeclareMathOperator{\sgn}{sgn}

\definecolor{Green}{rgb}{0.0, 0.5, 0.}
\definecolor{Red}{rgb}{0.7, 0.2, 0}
\newcommand\del[1]{}

\numberwithin{equation}{section}

\newtheorem{theorem}{Theorem}[section]
\newtheorem{definition}[theorem]{Definition}
\newtheorem{corollary}[theorem]{Corollary}
\newtheorem{proposition}[theorem]{Proposition}
\newtheorem{lemma}[theorem]{Lemma}

\newtheorem{remark}{Remark}

\numberwithin{figure}{section}

\begin{document}
\author{\makebox[\textwidth]
{Brijesh Kumar Jha$^\dag$, Subhra Sankar Dhar$^\ddagger$, \and Akash Ashirbad Panda$^\star$}
}

\thanks{\\$^\dag$Discipline of Data Science, School of Interwoven Arts and Sciences, Krea University, Campus: 5655, Central Expressway, Sri City-517646, Andhra Pradesh  ({\tt brijeshkumar.jha@krea.edu.in}).
	\\$^\ddagger$ Department of Mathematics and Statistics, IIT Kanpur, Kanpur-208016, India ({\tt subhra@iitk.ac.in}).
	\\$^\star$Department of Mathematics, School of Basic Sciences, Indian Institute of Technology Bhubaneswar, Khorda-752050, Odisha ({\tt akashpanda@iitbbs.ac.in}).	
}

\title{{Least Square Estimation: SDE Perturbed by L\'evy Noise with Sparse Sample Paths}}

\begin{abstract}
This article investigates the least squares estimators (LSE) for the unknown parameters in stochastic differential equations (SDEs) that are affected by L\'evy noise, particularly when the sample paths are sparse. Specifically, given $n$ sparsely observed curves related to this model, we derive the least squares estimators for the unknown parameters: the drift coefficient, the diffusion coefficient, and the jump-diffusion coefficient. We also establish the asymptotic rate of convergence for the proposed LSE estimators. Additionally, in the supplementary materials, the proposed methodology is applied to a benchmark dataset of functional data/curves, and a small simulation study is conducted to illustrate the findings.

%This article investigates the least squares estimators (LSE) of the unknown parameters involved in the stochastic differential equations (SDEs) perturbed by L\'evy noise when the sample paths are sparse enough. In other words, given $n$ many sparsely observed curves associated with the aforesaid model, we obtain the least squares estimators of the unknown parameters, namely, the drift coefficient, the diffusion coefficient and the jump-diffusion coefficient, and the asymptotic rate of convergence of the proposed LSE estimators are also derived. Moreover, the proposed methodology is implemented on a benchmark functional data/curves, and a small simulation study is also conducted. 

%It is observed that quantile based estimator like the {\tt LAD} performs better when the curves having heavy tails, and in general, the features of the curves can be characterized by the behaviour of the proposed estimators with various choices of the quantile index.     
\end{abstract}

\maketitle

\noindent
\textbf{Keywords and phrases:}
 		{Stochastic Differential Equations, L\'evy noise, It\^o Lemma, Local Polynomial Regression, Nonparametric Estimation,  Consistency of Estimators}

\medskip

\noindent	
\textbf{AMS subject classification (2020): 60H10, 60H35}, {62G08, 62G20}
 	
\pagestyle{myheadings}
\thispagestyle{plain}
\markboth{B. K. Jha, S. S. Dhar, and A. A. Panda}{Least Square Estimation: SDEs Perturbed by L\'evy Noise with Sparse Sample Paths} 
 	
\medskip

%%%%%%%%%%%%%%%%
\section{Introduction}\label{sec-1}

Stochastic differential equations (SDEs) provide a powerful framework for modeling random dynamics in fields such as mathematical finance, population biology, and statistical physics (see, e.g., \cite{Oksendal_2003, Mao_2011, Jones_2017}). An SDE incorporates stochastic forcing, making its solution a stochastic process. From an analytical perspective, driving noise is often modeled as white noise, formally the time derivative of a Brownian motion; in a probabilistic setting, SDEs are frequently used to describe diffusion processes: continuous-time Markov processes with almost surely continuous sample paths.

In many applications, however, the driving noise exhibits discontinuities (jumps), which cannot be captured by Brownian motion alone. L\'evy processes (see, e.g., \cite{Applebaum_2004, Sato_2010}) form the canonical class of such jump processes and are fundamental building blocks for continuous-time models with discontinuous trajectories. This class includes, as special cases, the Wiener process, the Poisson process, (Moran-) Gamma process, the Pascal process, and the Meixner process (see, e.g., \cite{Bruss_1991, Klenke_2008, Mazzola_2011}). Their flexibility has led to widespread applications in finance, insurance, physics, and engineering, some of which are outlined in Section~\ref{SM}.

Despite their importance, the statistical estimation of SDEs driven by L\'evy noise—particularly under minimal regularity conditions and in nonparametric settings—remains considerably less developed than in the purely Brownian framework. Addressing this gap, the present work develops and analyzes estimators for drift and diffusion terms in SDEs with additive L\'evy perturbations, providing rigorous consistency and convergence results supported by simulations and real-data applications.

In this work, we focus on SDEs driven by additive L\'evy processes, which can be expressed in the following general form.

\vspace{0.1cm}

\subsection{The stochastic model}\label{SM}

Let $\big(X(t)\big)_{0\le t\le T}$ be a real-valued stochastic process defined on a
filtered probability space $\mathfrak P := (\Omega,\mathcal F,\mathbb F,\mathbb P),$ where $\mathbb F=\{\mathcal F_t\}_{0\le t\le T}$ denotes the filtration generated by
$X$, that is, $\mathcal F_t := \sigma\big(X(s):0\le s\le t\big),$ augmented to satisfy the usual conditions of completeness and right-continuity
(see \cite[Section~2.1.1]{Applebaum_2004}).

In this paper, we consider the following stochastic differential equation driven
by Lévy noise with small jumps:
\begin{align}\label{SDE-1}
dX(t)
= \mu(t)\,X(t)\,dt
+ \sigma(t)\,dW(t)
+ \int_{|y|\le1}\xi(t)\,\tilde\eta(dt,dy), \qquad t\in[0,T].
\end{align}
Here,
\begin{itemize}
\item $\mu(\cdot)$, $\sigma(\cdot)$, and $\xi(\cdot)$ are real-valued deterministic
functions defined on $[0,T]$, representing the drift, diffusion, and jump
amplitude coefficients, respectively;
\item $\{W(t)\}_{t\ge0}$ is a one-dimensional standard Brownian motion;
\item $\tilde\eta(dt,dy)=\eta(dt,dy)-\nu(dy)\,dt$ is the compensated Poisson random
measure associated with a Poisson random measure $\eta$ on
$[0,\infty)\times\mathbb R$ with Lévy measure $\nu$ satisfying
\[
\int_{\mathbb R}(1\wedge |y|^2)\,\nu(dy)<\infty;
\]
\item the processes $W$ and $\tilde\eta$ are independent and independent of
$\mathcal F_0$.
\end{itemize}

Equation \eqref{SDE-1} describes a linear stochastic system with additive Lévy
noise and time-dependent coefficients. Throughout the paper, we restrict
attention to small jumps, which is sufficient for the development of the moment
equations and estimation procedures considered here.

\subsection*{A remark on the jump coefficient $\xi$}

Throughout this work, we restrict attention to jump amplitudes of the form $\xi=\xi(t),$ that is, independent of the jump size $y$. This assumption is adopted primarily
for notational simplicity and to highlight the core ideas of our estimation
framework, rather than due to any intrinsic mathematical limitation.

Indeed, all moment identities and estimation procedures developed in the later
Sections remain valid when the
jump coefficient depends on both time and jump size, $\xi=\xi(t,y)$, provided
that $\xi(\cdot,\cdot)$ is predictable and satisfies the standard
square-integrability condition
\[
\int_0^1\int_{|y|\le1}\mathbb E\big[\xi^2(t,y)\big]\,\nu(dy)\,dt<\infty.
\]
In this more general setting, the quantity $\nu(K)\,\xi^2(t)$ appearing in the
second-moment equation is naturally replaced by
\[
\int_{|y|\le1}\xi^2(t,y)\,\nu(dy),
\]
and all subsequent derivations proceed verbatim with this substitution.

Consequently, the restriction $\xi=\xi(t)$ does not reduce the generality of the
theoretical results, but rather yields a cleaner presentation and more explicit
estimators. Extensions to state- or jump-size–dependent jump amplitudes can be
handled in a completely analogous manner.

%\vspace{0.1cm}

%{\color{red} (A paragraph on precise descriptions of some applications)}

%\subsection{Some Practical Models}
The above system \eqref{SDE-1} can be interpreted as an SDE driven by continuous noise interspersed with jumps, modeled via Lévy processes. Such a framework naturally arises in various applications where both small continuous fluctuations and sudden large shocks are present. For example, Merton \cite{Merton_1976} introduced a jump-diffusion model for option pricing in which underlying asset returns are generated by a mixture of continuous Brownian motion and jump processes. This model, now widely known as the Merton model, extends the classical Black–Scholes framework while preserving many of its desirable analytical features. Kou \cite{Kou_2002} proposed the double exponential jump-diffusion model, which yields closed-form solutions for a broad class of option pricing problems, including standard European options, path-dependent options, and interest rate derivatives, while offering a realistic balance between empirical accuracy and mathematical tractability. Beyond financial mathematics, Lévy-driven models have also been applied to complex fluid dynamics. For example, Brze\'zniak et al.\ \cite{ZB+UM+AAP_2019} studied a stochastic evolution equation for nematic liquid crystals perturbed by Lévy noise, while Manna and Panda \cite{Manna+Panda_2021} analyzed the constrained Navier–Stokes equations under L\'evy perturbations. In both cases, the authors emphasized the physical relevance of incorporating jump noise to more accurately represent abrupt changes in the underlying physical systems.

\vspace{0.2cm}

Let us now turn to the problems of interest associated with \eqref{SDE-1}. 
It is well established in the literature on jump processes that the solution of \eqref{SDE-1} is determined by the functions $\mu(t)$, $\sigma(t)$, and $\xi(t)$ (see, e.g., \cite[Section~6.2]{Applebaum_2004}). Consequently, the properties of the L\'evy process governed by \eqref{SDE-1} can be fully characterized through $\mu(t)$, $\sigma(t)$, and $\xi(t)$ (see, e.g., \cite[Section~3.2]{Kunita_2019}). 
Therefore, estimating these quantities is of clear importance and analyzing the statistical properties of their estimators is of significant research interest. 
However, it should be noted that the estimation of $\mu(t)$, $\sigma(t)$, and $\xi(t)$ critically depends on the observation scheme of the diffusion process. 
For example, the data may consist of multiple replications of processes satisfying \eqref{SDE-1}, where each sample path (a continuous-time trajectory) is observed only at discrete time points, possibly contaminated with measurement error.

\vspace{0.2cm}

In our data setting, the number of replications $n$ can be large, while the number of time observations $r$ may or may not depend on $n$, and may also grow with $n$. From such samples, we estimate $\mu(t)$, $\sigma(t)$, and $\xi(t)$ via their relationships in a system of differential equations that involve additional moment terms of the process (see Proposition~\ref{prop-m,sigma-corrected}), and we establish convergence results for the proposed estimators. 

Previously, Mohammadi et al. \cite{Mohammadi2024} studied least-squares estimators for drift and diffusion in SDEs with only Brownian noise, without jumps. In contrast, we incorporate jumps, modeled by a compensated Poisson random measure, which significantly complicates the analysis (see Section~\ref{ND}). Consequently, the proofs of our main results (Theorems~\ref{Thm-Main-1} and \ref{Thm-main-2}) differ completely from those of \cite{Mohammadi2024,B3}. The key contribution of this work is therefore the introduction of jumps in the model, together with the development of estimators and asymptotic theory for $\mu(t)$, $\sigma(t)$, and $\xi(t)$ in this more general setting.

\vspace{0.2cm}

\subsection{Novelties and technical challenges}\label{ND}

The stochastic differential equation \eqref{SDE-1} belongs to a class of linear,
time-inhomogeneous systems driven by both Gaussian and L\'evy-type jump noise.
The primary objective of this work is to estimate the time-varying drift
$\mu(t)$, diffusion $\sigma(t)$, and jump-diffusion coefficient $\xi(t)$ from
noisy and sparsely observed data.

More precisely, we consider $n$ independent sample paths
$\{X_i(t)\}_{i=1}^n$ solving \eqref{SDE-1}, where each trajectory is observed at
$r\ge2$ random time points with additive measurement error,
\[
Y_{ij} = X_i(T_{ij}) + U_{ij}, \qquad i=1,\ldots,n,\; j=1,\ldots,r,
\]
with $T_{ij}\in[0,T]$ denoting random observation times and $\{U_{ij}\}$ being
independent noise variables; see \eqref{eq3} for precise assumptions.

The least squares framework is particularly natural in this setting, as it is
well aligned with quadratic loss minimization and remains computationally
tractable under irregular and noisy sampling schemes. Our methodology draws
inspiration from functional data analysis (FDA), where information is pooled
across multiple trajectories. However, classical FDA techniques rely heavily on
smooth sample paths and are therefore not directly applicable to data generated
by L\'evy-driven SDEs, whose trajectories are typically c\`adl\`ag and may exhibit
frequent jumps. The methods developed here explicitly account for this lack of
pathwise smoothness and remain valid in the presence of discontinuities.

\medskip
\noindent
Our approach proceeds in three main steps:
\begin{itemize}
  \item \emph{Moment-based deterministic characterization.}
  We first derive systems of deterministic ODEs and PDEs that link the unknown
  coefficients $\mu(t)$, $\sigma(t)$, and $\xi(t)$ to the first- and second-order
  moment functions of $X(t)$ and their derivatives. Since \eqref{SDE-1} is driven
  by L\'evy noise, these systems are obtained using the It\^o--L\'evy formula
  (see Section~\ref{Sec-Ito-lemma}). The resulting equations, presented in
  Section~\ref{sec-PDEs}, encode the local dynamics of the process through global
  moment information and form the analytical backbone of our estimation strategy.

  \item \emph{Nonparametric estimation of moments.}
  We next construct nonparametric estimators for the mean and covariance
  functions, together with their derivatives, using local polynomial regression.
  In particular, the covariance surface and its partial derivatives are estimated
  by pooling empirical correlations across sample paths observed at random and
  possibly sparse time points, following ideas introduced by Yao et al.~\cite{B6}.
  Key modifications are required to handle the roughness induced by jumps and to
  address bias effects near the diagonal; see
  Section~\ref{sec-estimators-definition}.

  \item \emph{Coefficient recovery via plug-in procedures.}
  Finally, the estimated moment functions are substituted into the deterministic
  ODE/PDE systems derived earlier, yielding explicit plug-in estimators for
  $\mu(t)$, $\sigma(t)$, and $\xi(t)$. This step bridges the gap between global,
  pooled statistical information and local coefficient identification; see
  Section~\ref{subsec-estimators}.
\end{itemize}

\medskip
\noindent
We establish strong consistency and almost sure uniform convergence rates for the
proposed estimators as the number of replications $n\to\infty$. Importantly, the
number of observations per trajectory, $r$, is allowed to remain fixed or to vary
with $n$, covering both sparse and dense functional data regimes. To the best of
our knowledge, this work represents one of the first systematic studies of
nonparametric coefficient estimation for L\'evy-driven SDEs under noisy and
irregular sampling. The proposed framework does not rely on stationarity
assumptions, imposes no parametric restrictions on the coefficients, and allows
for simultaneous estimation of $\mu(t)$, $\sigma(t)$, and $\xi(t)$. Moreover, the
resulting procedures are computationally simple and scalable.

\medskip
\noindent
\textbf{Technical challenges.}
Several nontrivial difficulties arise in this setting:
\begin{itemize}
  \item
  The presence of L\'evy jumps fundamentally alters the analytical structure of
  the problem. Unlike Brownian-driven systems, where sample paths are continuous,
  jump noise introduces discontinuities and nonlocal effects, rendering many
  classical tools inapplicable. This necessitates working within the
  It\^o--L\'evy framework and carefully handling compensated Poisson integrals,
  which exhibit singular and nonlocal behavior.

  \item
  Due to jump perturbations, the process $X(t)$ is only c\`adl\`ag, and its second
  moments must be characterized through two-time covariance functions rather than
  pointwise derivatives. This complicates both the derivation of deterministic
  moment equations and the construction of derivative estimators near the
  diagonal.

  \item
  Measurement error contaminates diagonal covariance terms, requiring special
  treatment to avoid bias and to ensure consistent estimation of partial
  derivatives of the covariance surface. This issue is particularly acute in the
  presence of jumps, where local variability is amplified.

  \item
  The identification of the jump-diffusion coefficient $\xi(t)$ relies on subtle
  interactions between temporal derivatives of the covariance function and
  integral representations derived from the moment equations. Ensuring
  identifiability and stability of these estimators requires careful analytical
  control.
\end{itemize}

\subsection{Numerical Study and Real Applications:} \label{RA} 
As it is indicated in the beginning of Section \ref{sec-1}, \eqref{SDE-1} can be fitted to many real life data, and using least squares estimators of the unknown parameters $\mu(.)$, $\sigma(.)$ and $\xi(.)$, one can capture the central tendency of the data. In fact, strictly speaking, many functional data (see, e.g., \cite{Ramsey2005}) with jump(s) in some sense can be analysed by least squares estimators of the unknown parameters involved in \eqref{SDE-1}. In this work, well-known Canadian weather data (see, e.g., \cite{Ramsey2005}) is studied, and in the context of various issues related to functional/infinite dimensional data, this data set was earlier analysed by \cite{Dette2024}. The analysis carried out on this data set gives insight about many features about the temperature and the amount of precipitation in those locations in Canada.   
Moreover, we also carry out some simulation studies to check the performance of the proposed estimators of $\mu(t)$, $\sigma(t)$ and $\xi(t)$. %In addition, the features of quantiles for various index of quantiles in the case of both heavy tailed and light tailed data are observed. 

\vspace{0.2cm}

\subsection{Organization of the Article} 

Section 1 provides motivation for studying SDEs with jump noise, highlighting the key challenges and novelties of the problem. In Section 2, we present the necessary assumptions and preliminaries, including definitions related to c\`adl\`ag and adapted processes, along with the It\^o-L\'evy formula, which underpin our analytical framework. Section 3 is devoted to the derivation of auxiliary deterministic systems of ODEs/PDEs that are used as part of our methodology to estimate the drift and diffusion components. In Section 4, we introduce the proposed non-parametric estimators for these coefficients. The main theoretical results, including the consistency and convergence rates of the estimators, are established in Section 5. Finally, Section 6 provides a detailed proof of Theorem 5.2. In the supplementary materials, we have presented simulation studies and real-data analysis. 

\vspace{0.3cm}

\section{Mathematical Framework and Preliminaries}

%\subsection{Assumptions}
We recall the following definitions.
\begin{definition}\label{cadlag}
    Let $I=[a, b]$ be an interval in $\mathbb{R}^{+}$. A mapping $f: I \rightarrow \mathbb{R}^d$ is said to be c\`adl\`ag if, for all $t \in(a, b], f$ has a left limit at $t$ and $f$ is right-continuous at $t$, i.e.,
    \begin{itemize}
        \item for all sequences $\left(t_n\right)_{n \in \mathbb{N}}$ in $I$ with $t_n<t$ and $\displaystyle\lim _{n \rightarrow \infty} t_n=t$, we have that $\displaystyle\lim _{n \rightarrow \infty} f\left(t_n\right)$ exists;
        \item for all sequences $\left(t_n\right)_{n \in \mathbb{N}}$ in $I$ with $t_n \geq t$ and $\displaystyle\lim _{n \rightarrow \infty} t_n=t$, we have that $\displaystyle\lim _{n \rightarrow \infty} f\left(t_n\right)=f(t)$.
    \end{itemize}
\end{definition}

\begin{definition}\label{adapted}
A stochastic process is said to be adapted to a filtration $\mathbb{F}$, if $X(t)$ is a random variable on $\mathcal{F}_t,$  $\forall \,t \geq 0$, i.e., if $X(t)$ is $\mathcal{F}_t$-measurable.
\end{definition}

\medskip
\noindent
{\bf Assumptions.}
Throughout this paper, we work on the finite time interval $[0,1]$ and impose the following assumptions.

\medskip
\noindent
{\bf (A1)}\,
(Regularity of coefficients.)
The drift, diffusion, and jump diffusion coefficients
\[
\mu(\cdot),\ \sigma(\cdot),\ \xi(\cdot)
\]
are deterministic functions belonging to $C^d([0,1];\mathbb{R})$ for some integer $d\ge1$.

\medskip
\noindent
{\bf (A2)}\,
(Square-integrability.)
The coefficients satisfy the integrability conditions
\[
\int_0^1 \sigma^2(t)\,dt < \infty,
\qquad
\int_0^1 \xi^2(t)\,dt < \infty.
\]

\medskip
\noindent
{\bf (A3)}\,
(Finite activity of small jumps.)
The Lévy measure $\nu$ satisfies
\[
\nu(K)<\infty,
\qquad K:=\{y\in\mathbb R:\ |y|\le1\}.
\]
This ensures that the compensated Poisson integral with jump amplitude $\xi(t)$ is well defined and has finite second moments.

% \medskip
% \noindent
% {\bf (A4)}\,
% (Initial condition.)
% The initial datum $X(0)$ is square-integrable, i.e.
% \[
% \mathbb E\big[|X(0)|^2\big]<\infty.
% \]

% \medskip
% Under assumptions {\bf (A1)--(A3)}, the stochastic differential equation
% \[
% dX(t)=\mu(t)X(t)\,dt+\sigma(t)\,dW(t)
% +\int_{|y|\le1}\xi(t)\,\tilde\eta(dt,dy),
% \qquad t\in[0,1],
% \]
% admits a unique square-integrable adapted solution. Moreover, all stochastic integrals appearing in the analysis are true martingales with zero expectation.

%\subsection{Well-posedness of the problem \eqref{SDE-1}}

As the well-posedness of the problem, we rewrite the SDE \eqref{SDE-1} perturbed by L{\'e}vy noise in the integral form as
\begin{align}\label{SDE-integral}
X(t)= X(0)+ \int_0^1 \mu(t) X(t)\,dt + \int_0^1 \sigma(t)\, dW(t)+ \int_0^1 \int_{|y|\leq 1} \xi(t)\,\tilde{\eta}(dt,dy)\,. 
\end{align}
We have the following well-posedness result for the system \eqref{SDE-integral}.

\begin{theorem}
     Under the assumption {\bf (A1)-(A3)},  there exists a unique solution $X = (X(t),t \geq 0)$ to the system \eqref{SDE-integral} with the standard initial condition, i.e., $\mathbb{E}\big[ |X_0|^2\big] < \infty$. Moreover, the process $X(.)$ is c\`adl\`ag and adapted as defined in the Definitions \ref{cadlag} and \ref{adapted}, respectively.
\end{theorem}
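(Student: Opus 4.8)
The plan is to recognize that \eqref{SDE-integral} is a \emph{linear} SDE with additive noise and deterministic, continuous coefficients, so that both existence–uniqueness and the pathwise regularity can be extracted from the classical It\^o--L\'evy theory (see \cite[Section~6.2]{Applebaum_2004}).

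First I would verify that the coefficient functions satisfy the global Lipschitz and linear-growth conditions required by the standard existence–uniqueness theorem for jump SDEs. Writing the drift as $b(t,x)=\mu(t)x$, the diffusion as $a(t,x)=\sigma(t)$, and the jump amplitude as $F(t,x,y)=\xi(t)$, observe that by (A1) the functions $\mu,\sigma,\xi\in C^d([0,1];\mathbb R)$ are continuous on a compact interval and hence bounded; set $K:=\max\{\|\mu\|_\infty,\|\sigma\|_\infty,\|\xi\|_\infty\}$. Since $\sigma$ and $\xi$ do not depend on the state variable, the only Lipschitz contribution comes from the drift, and $|b(t,x)-b(t,x')|=|\mu(t)|\,|x-x'|\le K|x-x'|$. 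The linear-growth bound $|b(t,x)|^2+\sigma^2(t)+\int_{|y|\le1}\xi^2(t)\,\nu(dy)\le C(1+|x|^2)$ then follows, where finiteness of the jump term uses precisely (A3), namely $\int_{|y|\le1}\xi^2(t)\,\nu(dy)=\xi^2(t)\,\nu(K)<\infty$. Together with the initial condition $\mathbb E[|X_0|^2]<\infty$, these hypotheses license the Picard-iteration (fixed-point) argument of \cite[Section~6.2]{Applebaum_2004}, yielding a unique (up to indistinguishability) adapted solution with $\sup_{0\le t\le1}\mathbb E[|X(t)|^2]<\infty$.

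Alternatively, and more explicitly since the noise enters additively, I would exhibit the solution in closed form via the deterministic integrating factor $\Phi(t):=\exp\!\big(\int_0^t\mu(s)\,ds\big)$, namely
\[
X(t)=\Phi(t)\Big[X(0)+\int_0^t\Phi^{-1}(s)\,\sigma(s)\,dW(s)+\int_0^t\!\!\int_{|y|\le1}\Phi^{-1}(s)\,\xi(s)\,\tilde\eta(ds,dy)\Big].
\]
Because $\Phi$ is continuous of finite variation, the product rule for semimartingales (the It\^o--L\'evy formula of Section~\ref{Sec-Ito-lemma}) carries no extra bracket correction, and a direct computation confirms that this process solves \eqref{SDE-integral}; the two stochastic integrals are well defined and square-integrable by (A2)--(A3). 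Uniqueness then follows at once, since the difference $Z$ of any two solutions solves the homogeneous equation $dZ=\mu(t)Z\,dt$ with $Z(0)=0$, whose only solution is $Z\equiv0$.

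Finally, the pathwise properties are read off from this representation: the term $\Phi(t)X(0)$ and the Brownian integral are a.s.\ continuous in $t$, while the compensated-Poisson integral is a c\`adl\`ag martingale, so $X$ is c\`adl\`ag as a finite sum of c\`adl\`ag processes, and it is adapted since each summand is measurable with respect to $\sigma(X_0)\vee\mathcal F_t^{W}\vee\mathcal F_t^{\tilde\eta}$. I expect the only delicate point to be the rigorous justification that the compensated small-jump integral is a genuinely well-defined square-integrable c\`adl\`ag martingale, rather than the Lipschitz estimates, which are routine; this is exactly where (A3), $\nu(K)<\infty$, is indispensable, as it bounds the second moment of the jump part uniformly in $t\in[0,1]$.
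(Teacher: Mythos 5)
Your proposal is correct, and it actually contains two proofs. Your first route (verify global Lipschitz and linear growth for $b(t,x)=\mu(t)x$ with state-independent $\sigma$ and $\xi$, then invoke Picard iteration) is precisely the paper's proof: the paper disposes of the theorem in one line by citing the Picard-iteration argument of \cite[Chapter 6]{Applebaum_2004}, leaving the verification of its hypotheses implicit, whereas you spell out that (A1) gives boundedness of the coefficients on $[0,1]$, hence the Lipschitz constant, and that (A3) is what makes the jump term in the linear-growth bound finite. Your second route is genuinely different from the paper's proof of this theorem: you exploit the additive structure to write the solution in closed form via the integrating factor $\Phi(t)=\exp\big(\int_0^t\mu(s)\,ds\big)$, which yields existence by construction, uniqueness by reducing the difference of two solutions to the homogeneous pathwise ODE $dZ=\mu(t)Z\,dt$, $Z(0)=0$, and the c\`adl\`ag/adaptedness claims directly from the regularity of the three summands. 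This constructive route buys more than the abstract fixed-point argument (an explicit formula, immediate second-moment bounds, and a transparent proof of the pathwise properties, which the paper's citation does not really address), and it is in fact the representation the paper itself relies on later, in the proofs of Proposition~\ref{prop-Gst_xi_t} and Lemma~\ref{lem-moment-bound}; its only cost is that, unlike Picard iteration, it does not generalize to state-dependent $\sigma$ or $\xi$. One cosmetic caveat: your adaptedness statement is with respect to the filtration generated by $\big(X_0,W,\tilde\eta\big)$, while the paper's filtration is the augmented one generated by $X$ itself; this is harmless here but worth a sentence if you want to match the paper's setup exactly.
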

\begin{proof}
    The proof of existence and uniqueness is based on the technique of Picard's iteration; see Applebaum \cite[Chapter 6]{Applebaum_2004}.
\end{proof}

\vspace{0.2cm}

In the next subsection, we will state the celebrated It\^o Lemma/Formula, which is an essential tool in stochastic calculus, analogous to the chain rule of ordinary calculus. It helps in determining the differential of a function of a stochastic process, effectively accounting for the inherent randomness of such processes. In this work, as part of the methodology, we aim to derive the systems of ODEs/PDEs that relate $\mu(t)$, $\sigma(t)$ and $\xi(t)$ of the process $X(t)$ to the various moment functions of $X(t)$, along with their first derivatives. In order to derive such systems, an appropriate version of the Ito formula must be applied. 

\subsection{The It\^o Lemma/Formula \cite{Applebaum_2004}}\label{Sec-Ito-lemma}

We begin by recalling the It\^o formula for L\'evy-type stochastic integrals.  
Let $\mathrm{X}$ be a $d$-dimensional process satisfying the stochastic differential equation
\begin{align}\label{general-Levy-integral}
d \mathrm{X}(t) 
&= \mu(t)\, dt + \sigma(t)\, d W(t)
   + \int_{|x|<1} \xi(t, x)\, \tilde{\eta}(dt, dx) 
   + \int_{|x|\geq 1} \Xi(t, x)\, \eta(dt, dx),
\end{align}
where $\mu=(\mu_j^i)$ is a $d\times m$ matrix, $W=(W^1,\dots,W^m)$ is an $m$-dimensional standard Brownian motion, and $\eta$ is a Poisson random measure on $\mathbb{R}^+ \times (\mathbb{R}^d\setminus\{0\})$ with compensator $\tilde{\eta}$ and intensity measure $\nu$, assumed to be a L\'evy measure.  

Let $E=\{x\in \mathbb{R}^d : |x|<1\}\setminus \{0\}$.  
We define $\mathcal{P}_2(T,E)$ as the collection of equivalence classes of mappings  
\[
J:[0,T]\times E\times \Omega \to \mathbb{R}
\]
such that:  
\begin{itemize}
    \item $J$ is predictable, i.e., for each fixed $x\in E$, the map $(t,\omega)\mapsto J(t,x,\omega)$ is $(\mathcal{F}_t)$-predictable;  
    \item $\mathbb{P}\!\left(\int_0^T \int_E |J(t,x)|^2 \,\rho(dt,dx) < \infty \right) = 1$.  
\end{itemize}
For the case $E=\{0\}$, we write $\mathcal{P}_2(T,\{0\}):=\mathcal{P}_2(T)$, where $\mathcal{P}_2(T)$ denotes the set of all predictable mappings $J:[0,T]\times \Omega\to \mathbb{R}$ satisfying  
\[
\mathbb{P}\!\left(\int_0^T |J(t)|^2\, dt < \infty \right) = 1.
\]
We assume that for each $1\leq i \leq d$, $1\leq j\leq m$, and $t\geq 0$, the coefficients satisfy
\[
|\sigma^i|^{1/2},\ \mu_j^i \in \mathcal{P}_2(T), 
\qquad \xi^i \in \mathcal{P}_2(T,E),
\]
and $\Xi$ is predictable.  

\noindent
It is convenient to decompose $\mathrm{X}$ into its continuous and discontinuous components:
\[
d\mathrm{X}_{\mathrm{c}}(t) = \mu(t)\,dt + \sigma(t)\, dW(t),
\]
\[
d\mathrm{X}_{\mathrm{d}}(t) = \int_{|x|<1} \xi(t,x)\, \tilde{\eta}(dt,dx) 
+ \int_{|x|\geq 1} \Xi(t,x)\, \eta(dt,dx).
\]
Thus, for each $t\geq 0$, we may write
\[
\mathrm{X}(t) = \mathrm{X}(0) + \mathrm{X}_{\mathrm{c}}(t) + \mathrm{X}_{\mathrm{d}}(t).
\]

\medskip
This decomposition separates the evolution of $\mathrm{X}$ into a continuous martingale part driven by Brownian motion and a purely discontinuous jump part governed by the Poisson random measure, a perspective that will be essential for applying the It\^o formula in the sequel.

Let us denote for each $1 \leq i \leq j$, the quadratic variation process as $\left(\left[\mathrm{X}_c^i, \mathrm{X}_c^j\right](t), t \geq 0\right)$, by
$$
\left[\mathrm{X}_c^i, \mathrm{X}_c^j\right](t)=\sum_{k=1}^m \int_0^t \mu_k^i(s) \mu_k^j(s) d s\,.
$$
For the following result, we refer to \cite[Theorem 4.4.7]{Applebaum_2004}.

\begin{proposition}[It\^o Lemma for L\'evy-type stochastic integral]\label{Ito-lemma}
    If $\mathrm{X}$ is a Lévy-type stochastic integral of the form \eqref{general-Levy-integral}, then, for each $f \in C^2\left(\mathbb{R}^d\right), t \geq 0$, the following holds
$$
\begin{aligned}
f(\mathrm{X}(t))-f(\mathrm{X}(0)) &= \int_0^t \partial_i f(\mathrm{X}(s-)) d \mathrm{X}_c^i(s)+\frac{1}{2} \int_0^t \partial_i \partial_j f(\mathrm{X}(s-)) d\left[\mathrm{X}_{\mathrm{c}}^i, \mathrm{X}_{\mathrm{c}}^j\right](s) \\
&\quad+\int_0^t \int_{|x| \geq 1} \Big[f(\mathrm{X}(s-)+ \Xi(s, x))-f(\mathrm{X}(s-)) \Big] \eta(d s, d x) \\
&\quad+\int_0^t \int_{|x|<1} \Big[ f(\mathrm{X}(s-) + \xi(s, x)) - f(\mathrm{X}(s-)) \Big] \tilde{\eta}(d s, d x) \\
&\quad+\int_0^t \int_{|x|<1} \Big[ f(\mathrm{X}(s-)+ \xi(s, x))- f(\mathrm{X}(s-))) - \xi^i(s, x) \partial_i f(\mathrm{X}(s-)) \Big] \nu(d x) d s\,.
\end{aligned}
$$
with probability 1.
\end{proposition}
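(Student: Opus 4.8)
The plan is to prove the identity by the \emph{interlacing} construction, decomposing the evolution of $f(\mathrm{X})$ into a continuous part, a large-jump part, and a small-jump part, establishing the formula for each and recombining. Throughout I would use the decomposition $\mathrm{X}=\mathrm{X}(0)+\mathrm{X}_{\mathrm{c}}+\mathrm{X}_{\mathrm{d}}$ fixed above and exploit that, because $\nu$ is a L\'evy measure, one has $\nu(\{|x|\ge1\})<\infty$, so the large-jump integral against $\eta$ has only finitely many jumps on any compact interval $[0,t]$ almost surely. A standard localization by stopping times reduces to the case in which $\mathrm{X}$ remains in a fixed compact set on $[0,t]$, so that $f$, $\nabla f$ and $\nabla^2 f$ are bounded there; this is legitimate since $\mathrm{X}$ is c\`adl\`ag and hence has almost surely bounded range on compacts.

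First I would dispose of the large jumps. Let $0<\tau_1<\tau_2<\cdots$ denote the (locally finite) jump times of the large-jump process. On each interval $(\tau_{k},\tau_{k+1})$ the process $\mathrm{X}$ evolves as a continuous It\^o process driven only by $\mu\,dt$, $\sigma\,dW$ and the continuous compensator drift of the small jumps, so the classical It\^o formula for continuous semimartingales applies and produces the first-order term $\int \partial_i f(\mathrm{X}(s-))\,d\mathrm{X}_{\mathrm{c}}^i(s)$ together with the second-order term $\tfrac12\int \partial_i\partial_j f(\mathrm{X}(s-))\,d[\mathrm{X}_{\mathrm{c}}^i,\mathrm{X}_{\mathrm{c}}^j](s)$. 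At each time $\tau_k$ the function $f$ jumps by the pathwise increment $f(\mathrm{X}(\tau_k-)+\Xi(\tau_k,\cdot))-f(\mathrm{X}(\tau_k-))$; summing these increments over $[0,t]$ gives precisely $\int_0^t\int_{|x|\ge1}\big[f(\mathrm{X}(s-)+\Xi(s,x))-f(\mathrm{X}(s-))\big]\,\eta(ds,dx)$, the third term of the formula.

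The delicate step is the small-jump contribution, and this is where I expect the main obstacle to lie. For $\eps\in(0,1)$ I would first truncate, retaining only jumps with $\eps\le|x|<1$; this truncated small-jump process has finite activity, so the interlacing argument of the previous paragraph applies verbatim. Rewriting the resulting pathwise jump sum as $\int\int_{\eps\le|x|<1}[f(\mathrm{X}(s-)+\xi)-f(\mathrm{X}(s-))]\,\eta(ds,dx)$, substituting $\eta=\tilde\eta+\nu\,ds$, and combining with the first-order contribution $-\int\int_{\eps\le|x|<1}\xi^i\,\partial_i f(\mathrm{X}(s-))\,\nu(dx)\,ds$ arising from the continuous compensator drift of the truncated small-jump integral, one obtains exactly the two small-jump terms of the asserted formula, but with the region of integration truncated to $\eps\le|x|<1$. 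The crux is then the passage to the limit $\eps\to0$. A second-order Taylor expansion shows that the integrand of the compensated stochastic integral is $O(|\xi|)$, while the integrand of the compensator correction $f(\mathrm{X}(s-)+\xi)-f(\mathrm{X}(s-))-\xi^i\partial_i f(\mathrm{X}(s-))$ is only $O(|\xi|^2)$; the $L^2$ convergence of the compensated integral follows from the It\^o isometry for Poisson integrals together with $\int_{|x|<1}|\xi(s,x)|^2\,\nu(dx)<\infty$ (the hypothesis $\xi\in\mathcal P_2(T,E)$), and the absolute convergence of the compensator correction follows from the same square-integrability applied to the $O(|\xi|^2)$ remainder. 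It is precisely this quadratic control near the origin that makes the otherwise non-integrable first-order increments admissible after compensation, and verifying these two limits carefully is the technical heart of the argument.

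Recombining the continuous, large-jump and small-jump contributions and matching the compensation terms yields the stated identity with probability one; the full estimates and the details of the interlacing construction are carried out in Applebaum \cite[Theorem~4.4.7]{Applebaum_2004}.
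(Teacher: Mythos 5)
Your proposal is correct and takes essentially the same approach as the paper: the paper gives no proof of its own but simply quotes the result from \cite[Theorem~4.4.7]{Applebaum_2004}, and your sketch (localization, interlacing over the finitely many large jumps, $\varepsilon$-truncation of the small jumps, then the $\varepsilon\to0$ limit via the Poisson--It\^o isometry for the $O(|\xi|)$ compensated term and the second-order Taylor bound for the $O(|\xi|^2)$ compensator correction) is precisely the argument of that cited theorem, to which you also defer the remaining details.
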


\subsection{Measurement Scheme}\label{subsec-measurement}

%Here, we will assume that $\mu(\cdot), \sigma(\cdot)$ and $\xi(\cdot)$ are continuously differentiable on the $[0,1]$.
Assume {\bf (A1)--(A3)}.  
Let $\{X_i(t),\, t\in[0,1]\}_{1\leq i\leq n}$ be i.i.d. processes satisfying
\begin{equation}\label{2.3}
    dX_i(t) = \mu(t) X_i(t)\,dt + \sigma(t)\,dW_i(t) 
              + \int_{|y|\leq 1} \xi(t)\,\tilde{\eta}_i(dt,dy), 
    \qquad i=1,\dots,n,
\end{equation}
where $\{W_i\}$ are independent Brownian motions,  
$\{\tilde{\eta}_i\}$ independent Poisson random measures, and  
$\{X_i(0)\}$ independent initial conditions, all mutually independent.  
\noindent
We observe
\begin{equation}\label{eq3}
    Y_i(T_{ij}) = X_i(T_{ij}) + U_{ij}, 
    \qquad i=1,\dots,n,\ j=1,\dots,r(n),
\end{equation}
under the following conditions:
\begin{itemize}
    \item[(a)] $\{U_{ij}\}$ are i.i.d. centered errors with variance $\varrho^2<\infty$;
    \item[(b)] $\{T_{ij}\}$ form a triangular array of design points, strictly increasing in $j$, independent of $\{X_i,U_{ij}\}$, and sampled from a density bounded away from zero on $[0,1]$;
    \item[(c)] $r(n)\geq 2$, with no further restriction on its growth.
\end{itemize}

Thus the data consist of noisy observations $\{Y_{ij}\}$ of the latent processes $\{X_{ij}\}$ at random design points,  
where $r(n)$ determines the sampling frequency, ranging from sparse to dense designs.

Furthermore, we assume that the domain $[0, 1]$ is sampled relatively evenly, i.e.,

\noindent
{\bf (A4)}\, There exists $\widetilde{C} >0$ such that 
$$
0 < \mathbb{P}\left(T_{i j} \in [a, b]\right) \leq \widetilde{C}(b-a), \quad \forall i, j \ \mbox{ and } \ 0 \leq a < b \leq 1\,.
$$

We also make the  assumptions about the initial distribution of the diffusion process $X(t)$ and the $\alpha$-th moment of the centered measurement errors:

\noindent
{\bf (A5)}\, For some $\alpha >0$ (which will be specified later), 
$$
\mathbb{E} \big[ |X(0)|^\alpha \big] < \infty \quad \mbox{ and } \quad \mathbb{E} \big[ \left| U_{i j}\right|^\alpha \big] < \infty\,.
$$

\vspace{0.2cm}

\subsection{Methodology}

The local dynamics of the stochastic process in \eqref{SDE-1} are governed by the
time-varying drift, diffusion, and jump-diffusion coefficients
$\mu(\cdot)$, $\sigma(\cdot)$, and $\xi(\cdot)$. In classical settings,
these coefficients admit an infinitesimal interpretation in terms of
conditional moments. For instance, in time-homogeneous diffusion models without
jumps, one formally has
\[
\mu(y) = \lim_{h \to 0} \frac{1}{h}\,\mathbb{E}\!\left[X_{t+h}-y \mid X_t=y\right], 
\qquad 
\sigma^2(y) = \lim_{h \to 0} \frac{1}{h}\,\mathbb{E}\!\left[(X_{t+h}-y)^2 \mid X_t=y\right].
\]
Such representations motivate a large class of inference procedures based on
estimating local increments of the process, which typically requires dense or
high-frequency observations of individual sample paths.

However, in the present setting this local approach is no longer viable.
First, the data are assumed to be sparsely and irregularly observed in time, so
fine-scale increments cannot be directly recovered. Second, the presence of
L\'evy jump noise fundamentally alters the local behavior of the process:
increments may contain large discontinuities, and the classical infinitesimal
variance decomposition into drift and diffusion components no longer applies
without additional structure. As a result, traditional local estimation
strategies break down.

\medskip
\noindent
To address these challenges, we adopt a global, moment-based methodology that
indirectly recovers local coefficients from pooled information across sample
paths. The central idea is to exploit deterministic relationships between the
coefficients of the SDE and the first- and second-order moment structure of the
process, which remain well-defined even under sparse sampling and jump
perturbations.

Our methodology proceeds in three main steps:
\begin{itemize}
  \item[\textbf{Step 1.}]
  Pool observations across multiple independent sample paths to construct
  nonparametric estimators of the mean function and the covariance surface,
  together with their relevant derivatives. This step leverages ideas from
  functional data analysis to stabilize estimation under noisy and irregular
  designs.

  \item[\textbf{Step 2.}]
  Derive deterministic systems of ODEs and PDEs that link the drift, diffusion,
  and jump-diffusion coefficients to the moment functions obtained in Step~1.
  These systems are obtained using the It\^o--L\'evy formula and provide a
  rigorous analytical bridge between global moment behavior and local dynamics.

  \item[\textbf{Step 3.}]
  Construct plug-in estimators of $\mu(\cdot)$, $\sigma(\cdot)$, and $\xi(\cdot)$
  by substituting the empirical moment estimates into the deterministic systems
  derived in Step~2. This yields explicit, data-driven estimators of the SDE
  coefficients.
\end{itemize}

\medskip
\noindent
By shifting the focus from local pathwise increments to global moment
relationships, this three-step framework enables statistically consistent and
asymptotically reliable estimation of time-varying coefficients from sparse and
noisy observations. In particular, it extends inference for stochastic
differential equations beyond classical local methods to settings involving
L\'evy noise and irregular sampling.

\vspace{0.3cm}

\section{Deterministic Moment Relations for Parameter Identification}\label{sec-PDEs}

In this section, we derive a collection of deterministic ordinary and partial differential relations that form the basis of our methodology for identifying the drift $\mu(\cdot)$, the diffusion coefficient $\sigma(\cdot)$, and the jump diffusion coefficient $\xi(\cdot)$ in the underlying stochastic differential equation. 
The derivation relies fundamentally on the It\^o--L\'evy formula and exploits the connection between the parameters of the SDE and the low-order moment structure of the solution process.

Specifically, we express the coefficients of the SDE in terms of the mean function and the two-point correlation function,
\begin{align}\label{notation-m,G}
m(t):=\mathbb{E}[X(t)], 
\qquad 
G(s,t):=\mathbb{E}[X(s)X(t)], \qquad 0\le s\le t\le 1,
\end{align}
which together encode the first- and second-order statistical properties of the process.
For simplicity, we restrict attention to the contribution of small jumps, that is, jumps with magnitude $|y|\le 1$.

To facilitate the derivation of the first system of deterministic ordinary differential equations, we further introduce the diagonal second-moment function
\[
D(t):=G(t,t)=\mathbb{E}[X^2(t)], \qquad t\in[0,1],
\]
which plays a central role in the subsequent analysis.

We begin by deriving deterministic ordinary differential equations governing the evolution of the mean and the second moment of the solution process. 
These equations provide an explicit connection between the coefficients of the stochastic differential equation and the low-order statistical structure of the solution.

\begin{proposition}\label{prop-m,sigma-corrected}
Assume {\bf (A1)--(A3)}. Consider the SDE with jumps \eqref{SDE-1} with initial condition $X(0)$ satisfying $\mathbb E[|X(0)|^2]<\infty$.
Recall that $m(t):=\mathbb{E}[X(t)], \ D(t):=\mathbb{E}[X^2(t)], \ t\in[0,1].$
Then $m,D$ satisfy the differential relations
\begin{align}\label{eqn-dm(t)-corrected}
\dfrac{d m(t)}{dt} &= \mu(t)m(t), 
\qquad m(0)=\mathbb{E}[X(0)],
\end{align}
and
\begin{align}\label{eqn-dD(t)-corrected}
\dfrac{dD(t)}{dt} &= 2 \mu(t) D(t) + \sigma^2(t) + \nu(K)\,\xi^2(t),
\qquad D(0)=\mathbb{E}[X^2(0)],
\end{align}
where $K=\{y:|y|\le1\}$ and $\nu(K)=\int_{|y|\le1}\nu(dy)$.
\end{proposition}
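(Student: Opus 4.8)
The plan is to apply the It\^o--L\'evy formula of Proposition~\ref{Ito-lemma} to the two test functions $f(x)=x$ and $f(x)=x^2$, take expectations so that the stochastic-integral (martingale) terms drop out, and then differentiate the resulting integral identities in $t$. Throughout I regard \eqref{SDE-1} as a one-dimensional instance of \eqref{general-Levy-integral} with continuous drift $\mu(t)X(t)$, diffusion coefficient $\sigma(t)$, jump amplitude $\xi(t)$ on $\{|y|\le1\}$, and no large-jump part; in particular the continuous quadratic variation is $d[X_c,X_c](s)=\sigma^2(s)\,ds$.

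For the mean, take $f(x)=x$, so $f'\equiv1$ and $f''\equiv0$, and the compensator-correction integrand $f(X(s-)+\xi)-f(X(s-))-\xi\,f'(X(s-))$ vanishes identically; Proposition~\ref{Ito-lemma} then reproduces the integral form of \eqref{SDE-1}, namely
\begin{equation*}
X(t)=X(0)+\int_0^t \mu(s)X(s)\,ds+\int_0^t \sigma(s)\,dW(s)+\int_0^t\int_{|y|\le1}\xi(s)\,\tilde\eta(ds,dy).
\end{equation*}
Taking expectations, the Brownian integral and the compensated Poisson integral are martingales starting at $0$ and so vanish in expectation, leaving $m(t)=m(0)+\int_0^t \mu(s)m(s)\,ds$. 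Differentiating gives \eqref{eqn-dm(t)-corrected}.

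For the second moment, take $f(x)=x^2$, so $f'(x)=2x$ and $f''(x)=2$. The continuous first-order term contributes $\int_0^t 2X(s-)\mu(s)X(s)\,ds+\int_0^t 2X(s-)\sigma(s)\,dW(s)$, the quadratic-variation term contributes $\tfrac12\int_0^t 2\,\sigma^2(s)\,ds=\int_0^t \sigma^2(s)\,ds$, and the two jump terms are handled through the algebraic identity $(X(s-)+\xi(s))^2-X^2(s-)=2X(s-)\xi(s)+\xi^2(s)$. The compensated integral $\int_0^t\int_{|y|\le1}[2X(s-)\xi(s)+\xi^2(s)]\,\tilde\eta(ds,dy)$ is again a martingale, while the compensator-correction integrand collapses to $2X(s-)\xi(s)+\xi^2(s)-\xi(s)\cdot2X(s-)=\xi^2(s)$, yielding $\int_0^t\int_{|y|\le1}\xi^2(s)\,\nu(dy)\,ds=\nu(K)\int_0^t\xi^2(s)\,ds$. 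Taking expectations (using that $X(s-)=X(s)$ for a.e.\ $s$, since $X$ has at most countably many jumps) and differentiating produces \eqref{eqn-dD(t)-corrected}.

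The main obstacle is purely a matter of integrability: I must justify that the stochastic integrals are genuine, not merely local, martingales so that their expectations vanish. This requires the a priori bound $\sup_{0\le t\le1}\mathbb{E}[X^2(t)]<\infty$, which follows from the well-posedness theorem together with a Gr\"onwall argument applied to the second moment, and it requires the coefficient integrands to satisfy the hypotheses of Proposition~\ref{Ito-lemma}. Concretely, since $\sigma,\xi\in C^d([0,1])$ by {\bf(A1)} and are square-integrable by {\bf(A2)} with $\nu(K)<\infty$ by {\bf(A3)}, the bound on $\mathbb{E}[X^2(s)]$ gives $\mathbb{E}\int_0^t 4X^2(s)\sigma^2(s)\,ds<\infty$ and the corresponding finiteness of $\mathbb{E}\int_0^t\int_{|y|\le1}[2X(s-)\xi(s)+\xi^2(s)]^2\,\nu(dy)\,ds$, placing the integrands in $\mathcal{P}_2(T)$ and $\mathcal{P}_2(T,E)$ and securing the It\^o isometries. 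Once this integrability is established, the expectations of all stochastic integrals vanish and the remaining computations are elementary.
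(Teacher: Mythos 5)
Your proposal is correct and follows essentially the same route as the paper: the mean equation comes from taking expectations of the integral form of \eqref{SDE-1} (your application of the It\^o--L\'evy formula with $f(x)=x$ simply reproduces that form), and the second-moment equation comes from applying Proposition~\ref{Ito-lemma} with $f(x)=x^2$, noting that the compensator correction collapses to $\xi^2(s)$ and the compensated integrals vanish in expectation, then differentiating. Your explicit Gr\"onwall-based justification that the stochastic integrals are genuine (not merely local) martingales is a slightly more careful treatment of a point the paper handles by citation to standard results, but it is not a different argument.
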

\begin{proof}
We first rewrite the SDE in integral form on $[0,1]$:
\[
X(t)
= X(0) + \int_0^t \mu(s)X(s)\,ds 
    + \int_0^t \sigma(s)\, dW(s)
    + \int_0^t \int_{|y|\leq 1} \xi(s)\, \tilde{\eta}(ds,dy).
\]
Since $\mu,\sigma,\xi$ are deterministic and satisfy the integrability conditions in {\bf(A1)--(A2)}, the stochastic integrals are well-defined and square-integrable martingales (see, e.g., \cite[Chapter 2]{Applebaum_2004}).

\medskip
\noindent\emph{Step 1: Mean equation.}
Taking expectations in the integral equation and using linearity of expectation, we obtain
\begin{align}
    \mathbb{E}[X(t)] 
= \mathbb{E}[X(0)] + \int_0^t \mathbb{E}[\mu(s)X(s)]\,ds
    + \mathbb{E}\!\left[\int_0^t \sigma(s)\,dW(s)\right]
    + \mathbb{E}\!\left[\int_0^t \int_{|y|\le1}\xi(s)\,\tilde\eta(ds,dy)\right].
\end{align}
By the martingale property and square-integrability of the stochastic integrals, both stochastic terms have zero mean:
\[
\mathbb{E}\!\left[\int_0^t \sigma(s)\,dW(s)\right]=0,
\qquad
\mathbb{E}\!\left[\int_0^t \int_{|y|\le1}\xi(s)\,\tilde\eta(ds,dy)\right]=0.
\]
Since $\mu$ is deterministic, we may take it outside the expectation, and thus
\begin{align}
    m(t):=\mathbb{E}[X(t)]
= m(0) + \int_0^t \mu(s)\,m(s)\,ds.
\end{align}
Because $\mu$ and $m$ are continuous (by standard moment estimates for linear SDEs with smooth deterministic coefficients), the right-hand side defines an absolutely continuous function of $t$, and hence $m\in C^1([0,1])$. Differentiating with respect to $t$ gives
\[
\frac{dm(t)}{dt} = \mu(t)\,m(t), \qquad m(0)=\mathbb{E}[X(0)],
\]
which is exactly \eqref{eqn-dm(t)-corrected}.

\medskip
\noindent\emph{Step 2: Second moment equation.}
We now apply the Itô–Lévy formula (see Proposition \ref{Ito-lemma}) to $f(x)=x^2$ 
% For a (general) one-dimensional jump–diffusion
% \[
% dX(t)=a(t,X(t))\,dt + b(t,X(t))\,dW(t)
% + \int_{\mathbb{R}} c(t,X(t^-),y)\,\tilde\eta(dt,dy),
% \]
% Itô–Lévy’s formula for $f(x)=x^2$ gives
% \[
% \begin{split}
% d(X^2(t)) &= 2X(t^-)a(t,X(t))\,dt + b^2(t,X(t))\,dt
%           + 2X(t^-)b(t,X(t))\,dW(t)\\
% &\quad + \int_{\mathbb{R}}\big[(X(t^-)+c(t,X(t^-),y))^2 - X^2(t^-)\big]\tilde\eta(dt,dy) \\
% &\quad + \int_{\mathbb{R}}\big[(X(t^-)+c(t,X(t^-),y))^2 - X^2(t^-) - 2X(t^-)c(t,X(t^-),y)\big]\nu(dy)\,dt.
% \end{split}
% \]
% In our case
% \[
% a(t,X)=\mu(t)X,\qquad b(t,X)=\sigma(t),\qquad c(t,X,y)=\xi(t),
% \]
% where \(\xi\) does not depend on \(X\) or \(y\). Hence
% \[
% (X(t^-)+\xi(t))^2 - X^2(t^-)-2X(t^-)\xi(t)=\xi^2(t),
% \]
and deduce
\begin{align}
    \begin{split}
d(X^2(t))
&= 2\mu(t)X^2(t)\,dt + \sigma^2(t)\,dt 
   + 2X(t)\sigma(t)\,dW(t) \\
&\quad + \int_{|y|\le1}\big[(X(t)+\xi(t))^2 - X^2(t)\big]\tilde\eta(dt,dy) + \int_{|y|\le1}\xi^2(t)\,\nu(dy)\,dt.
\end{split}
\end{align}
Integrating from $0$ to $t$, we obtain the integral form
\begin{align}
    \begin{split}
X^2(t) &= X^2(0) + 2\int_0^t \mu(s)X^2(s)\,ds
+ 2\int_0^t X(s)\sigma(s)\,dW(s)
+ \int_0^t \sigma^2(s)\,ds\\
&\quad + \int_0^t\!\!\int_{|y|\le1}\big[(X(s)+\xi(s))^2 - X^2(s)\big]\tilde\eta(ds,dy)
+ \int_0^t\!\!\int_{|y|\le1}\xi^2(s)\,\nu(dy)\,ds.
\end{split}
\end{align}
Now take expectations. The stochastic integrals with respect to $W$ and $\tilde\eta$ are martingales with zero mean (again by the square-integrability assumptions and standard results, see \cite[Chapter 2]{Applebaum_2004}), so
\[
\mathbb{E}\Big[ \int_0^t X(s)\sigma(s)\,dW(s)\Big]=0,
\qquad
\mathbb{E}\Big[ \int_0^t\!\!\int_{|y|\le1} \big[(X(s)+\xi(s))^2 - X^2(s)\big]\tilde\eta(ds,dy)\Big]=0.
\]
Therefore,
\begin{align}
    \begin{split}
D(t):=\mathbb E[X^2(t)]
&= D(0) + 2\int_0^t \mu(s) D(s)\,ds
+ \int_0^t \sigma^2(s)\,ds
+ \int_0^t\!\!\int_{|y|\le1}\xi^2(s)\,\nu(dy)\,ds.
\end{split}
\end{align}
Since $\xi$ is deterministic and does not depend on $y$, the last term simplifies to
\[
\int_0^t\!\!\int_{|y|\le1}\xi^2(s)\,\nu(dy)\,ds
= \nu(K)\int_0^t \xi^2(s)\,ds,\qquad K=\{y:|y|\le1\}.
\]
Thus,
\[
D(t)=D(0) + 2\int_0^t \mu(s) D(s)\,ds + \int_0^t \sigma^2(s)\,ds + \nu(K)\int_0^t \xi^2(s)\,ds.
\]
The right-hand side defines an absolutely continuous function of $t$; moreover, the integrands
\(\mu(s)D(s), \sigma^2(s), \xi^2(s)\) are continuous (by {\bf(A2)} and standard moment estimates), so $D\in C^1([0,1])$ and differentiation with respect to $t$ yields
\[
\frac{dD(t)}{dt}
= 2\mu(t)D(t) + \sigma^2(t) + \nu(K)\,\xi^2(t),\qquad D(0)=\mathbb E[X^2(0)],
\]
which is exactly \eqref{eqn-dD(t)-corrected}.
\end{proof}

\vspace{0.3cm}

The preceding proposition characterizes the second-order moment dynamics solely along the diagonal of the two-point correlation function, namely $G(t,t)=D(t), \ t\in[0,1].$ While this diagonal information is sufficient to describe the evolution of the variance, it does not capture the full temporal correlation structure of the process.

In the following result, we extend this analysis by deriving an equivalent deterministic partial differential formulation for the second-moment equation in terms of the two-time correlation function
\[
G(s,t)=\mathbb E[X(s)X(t)], \qquad (s,t)\in\Delta=\{(s,t):0\le s\le t\le1\}.
\]
This formulation incorporates the complete second-order temporal dependence of the solution process and serves as a bridge between the diagonal moment dynamics and the full correlation structure.

\begin{proposition}\label{prop-Gst_xi_t}
Let $\Delta=\{(s,t):0\le s\le t\le1\}$ and $\Delta^\circ=\{(s,t):0\le s<t\le1\}$.
Assume {\bf(A1)--(A3)}, and that $\mathbb E[|X(0)|^2]<\infty$.
Then $G$ is continuous on $\Delta$ and $G\in C^{1}(\Delta^\circ)$. Moreover, for every \(0\le s\le t\le1\),
\begin{align}\label{Gst-formula-xi-t}
G(s,t) \;=\; \exp\!\Big(\int_s^t \mu(r)\,dr\Big)\,D(s).
\end{align}
Consequently, for \(0\le s<t\le1\),
\begin{align}\label{dtG-xi-t}
\partial_t G(s,t) &= \mu(t)\,G(s,t),
\\[4pt]\label{dsG-xi-t}
\partial_s G(s,t) &= \exp\!\Big(\int_s^t \mu(r)\,dr\Big)\,\big(D'(s)-\mu(s)D(s)\big).
\end{align}
Using the differential equation for \(D\),
\[
D'(s)=2\mu(s)D(s)+\sigma^2(s)+\nu(K)\,\xi^2(s),
\]
we obtain, for \(0\le s<t\le1\),
\begin{align}\label{instantaneous-identity-xi-t}
\partial_s G(s,t) \;=\; \exp\!\Big(\int_s^t \mu(r)\,dr\Big)\,
\Big(\mu(s)D(s)+\sigma^2(s)+\nu(K)\,\xi^2(s)\Big).
\end{align}
Equivalently, for \(0\le s<t\le1\),
\begin{align}\label{sigma-from-G-xi-t}
\sigma^2(s)+\nu(K)\,\xi^2(s)
\;=\; \exp\!\Big(-\int_s^t\mu(r)\,dr\Big)\,\partial_s G(s,t)\;-\;\mu(s)D(s).
\end{align}
\end{proposition}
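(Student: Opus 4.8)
The plan is to reduce the entire statement to the single closed-form identity \eqref{Gst-formula-xi-t}, from which every derivative formula and regularity claim follows by elementary calculus. First I would fix $s\in[0,1]$ and regard $t\mapsto G(s,t)$ as a function on $[s,1]$. Writing the integral form of \eqref{SDE-1} with lower limit $s$,
$$X(t)=X(s)+\int_s^t\mu(r)X(r)\,dr+\int_s^t\sigma(r)\,dW(r)+\int_s^t\!\!\int_{|y|\le1}\xi(r)\,\tilde\eta(dr,dy),$$
I would multiply through by $X(s)$ and take expectations. Since $\mathbb E[|X(0)|^2]<\infty$ yields $X(s)\in L^2$ by the standard moment bounds for linear L\'evy-driven SDEs, all products are integrable and the interchange of expectation with the deterministic time integral is justified by Fubini.

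The crux of the argument is that the two cross terms vanish. For fixed $s$, both $t\mapsto\int_s^t\sigma(r)\,dW(r)$ and $t\mapsto\int_s^t\int_{|y|\le1}\xi(r)\,\tilde\eta(dr,dy)$ are $(\mathcal F_t)_{t\ge s}$-martingales that are null at $t=s$; conditioning on $\mathcal F_s$ and using that $X(s)$ is $\mathcal F_s$-measurable gives $\mathbb E[X(s)\int_s^t\sigma\,dW]=\mathbb E[X(s)\,\mathbb E(\int_s^t\sigma\,dW\mid\mathcal F_s)]=0$, and identically for the compensated Poisson integral. This reduces the identity to the linear relation $G(s,t)=D(s)+\int_s^t\mu(r)G(s,r)\,dr$. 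For fixed $s$ this is a first-order linear ODE in $t$ with initial value $G(s,s)=D(s)$, whose unique solution is precisely \eqref{Gst-formula-xi-t}.

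With \eqref{Gst-formula-xi-t} in hand, the remaining claims are immediate. Continuity of $G$ on $\Delta$ follows since $D\in C^1([0,1])$ by Proposition~\ref{prop-m,sigma-corrected} and the exponential factor is jointly continuous; differentiating the product on the open set $\Delta^\circ$, where $\mu\in C^1$ by {\bf(A1)} and $\exp(\int_s^t\mu)$ is smooth in both variables, yields $G\in C^1(\Delta^\circ)$ together with \eqref{dtG-xi-t} and, using $\partial_s\int_s^t\mu(r)\,dr=-\mu(s)$, the formula \eqref{dsG-xi-t}. Substituting the second-moment ODE \eqref{eqn-dD(t)-corrected} for $D'(s)$ collapses $D'(s)-\mu(s)D(s)$ into $\mu(s)D(s)+\sigma^2(s)+\nu(K)\xi^2(s)$, which is \eqref{instantaneous-identity-xi-t}; multiplying by $\exp(-\int_s^t\mu)$ and isolating the diffusion--jump term gives \eqref{sigma-from-G-xi-t}.

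I expect the main obstacle to be the rigorous justification of the vanishing cross terms together with the attendant integrability, rather than the algebra. One must confirm that $X(s)\in L^2$ and that the martingale increments are genuinely orthogonal to $\mathcal F_s$; the compensated-Poisson term in particular relies on the square-integrability guaranteed by {\bf(A2)}--{\bf(A3)} (finiteness of $\nu(K)$) so that it is a true zero-mean martingale. A secondary point worth flagging is the restriction to $\Delta^\circ$: comparing the one-sided derivatives at the diagonal gives $\partial_t G(s,s)=\mu(s)D(s)$ but $\lim_{t\downarrow s}\partial_s G(s,t)=\mu(s)D(s)+\sigma^2(s)+\nu(K)\xi^2(s)$, so the gradient of $G$ jumps across $\{s=t\}$ by exactly the instantaneous diffusion-plus-jump variance; this is precisely why $G$ is only continuous, and not $C^1$, up to the diagonal.
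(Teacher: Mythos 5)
Your proof is correct, and it takes a route that differs in a meaningful way from the paper's. The paper first solves the SDE pathwise: it applies the It\^o--L\'evy formula to $\Phi(t)X(t)$ with the integrating factor $\Phi(t)=\exp(-\int_0^t\mu(r)\,dr)$, obtaining the variation-of-constants representation $X(t)=a(s,t)X(s)+M_{s,t}$ with $a(s,t)=\exp(\int_s^t\mu(r)\,dr)$, and then multiplies by $X(s)$ and takes expectations, so that \eqref{Gst-formula-xi-t} drops out in one line from $\mathbb E[X(s)M_{s,t}]=\mathbb E[X(s)\,\mathbb E(M_{s,t}\mid\mathcal F_s)]=0$. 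You instead take expectations first: multiplying the integral form of the SDE (started at $s$) by $X(s)$ and killing the two martingale cross terms by the same conditioning argument, you arrive at the Volterra equation $G(s,t)=D(s)+\int_s^t\mu(r)G(s,r)\,dr$, which you then solve as a linear ODE in $t$. The probabilistic heart is identical in both proofs --- orthogonality of the post-$s$ martingale increments to $\mathcal F_s$, which requires exactly the square-integrability you flag via {\bf(A2)}--{\bf(A3)} --- but the surrounding machinery differs: the paper needs an It\^o application to the product $\Phi(t)X(t)$ (with jumps), while you need Fubini for the drift term (justified by $\sup_r\mathbb E[X(r)^2]<\infty$ and Cauchy--Schwarz) plus uniqueness for the linear integral equation (Gr\"onwall). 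Your version is arguably more elementary for this proposition, since it bypasses the It\^o--L\'evy formula entirely; the paper's representation $X(t)=a(s,t)X(s)+M_{s,t}$ has the side benefit of being reusable (it is essentially the same device employed in its Lemma~\ref{lem-moment-bound}). Your closing remark on the one-sided derivative mismatch at the diagonal is a correct and useful observation, with the caveat that within the closed triangle $\Delta$ both one-sided partials extend continuously to the diagonal and their sum recovers $D'(s)$; the genuine gradient jump occurs across $\{s=t\}$ for the symmetric extension of $G$ to the unit square, which is the sense in which the covariance surface fails to be $C^1$ there.
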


\begin{proof}
Since the SDE is linear in $X$ with deterministic $\mu$, introduce the integrating factor
\[
\Phi(t):=\exp\!\Big(-\int_0^t\mu(r)\,dr\Big).
\]
Applying It\^o's formula (or the variation-of-constants formula) to $\Phi(t)X(t)$ yields, for $t\in[0,1]$,
\begin{align}
    \Phi(t)X(t)=\Phi(0)X(0)
+\int_0^t\Phi(r)\sigma(r)\,dW(r)
+\int_0^t\int_{|y|\le1}\Phi(r)\xi(r)\,\tilde\eta(dr,dy).
\end{align}
Fix $0\le s\le t\le1$ and split the integrals at time $s$:
\[
\Phi(t)X(t)=\Phi(s)X(s)
+\int_s^t\Phi(r)\sigma(r)\,dW(r)
+\int_s^t\int_{|y|\le1}\Phi(r)\xi(r)\,\tilde\eta(dr,dy).
\]
Hence,
\[
X(t)=a(s,t)X(s)+M_{s,t},
\qquad 
a(s,t):=\frac{\Phi(s)}{\Phi(t)}=\exp\!\Big(\int_s^t\mu(r)\,dr\Big),
\]
where
\[
M_{s,t}:=\Phi(t)^{-1}\Big(\int_s^t\Phi(r)\sigma(r)\,dW(r)
+\int_s^t\int_{|y|\le1}\Phi(r)\xi(r)\,\tilde\eta(dr,dy)\Big).
\]
Under the stated square-integrability assumptions (and $\nu(K)<\infty$), $M_{s,t}$ is square-integrable and satisfies
\[
\mathbb E\!\left[M_{s,t}\mid\mathcal F_s\right]=0.
\]
Therefore,
\[
\mathbb E[X(s)M_{s,t}]
=\mathbb E\!\left[X(s)\,\mathbb E(M_{s,t}\mid\mathcal F_s)\right]=0,
\]
and we compute
\[
\begin{split}
G(s,t)=\mathbb E[X(s)X(t)]
&=\mathbb E\big[X(s)\big(a(s,t)X(s)+M_{s,t}\big)\big] \\
&=a(s,t)\mathbb E[X^2(s)]+\mathbb E[X(s)M_{s,t}]
=a(s,t)D(s),
\end{split}
\]
which proves \eqref{Gst-formula-xi-t}.

For $(s,t)\in\Delta^\circ$, differentiation of \eqref{Gst-formula-xi-t} with respect to $t$ gives \eqref{dtG-xi-t}.
Differentiation with respect to $s$ yields \eqref{dsG-xi-t}. Substituting the evolution equation for $D'(s)$ gives
\eqref{instantaneous-identity-xi-t}, and rearranging yields \eqref{sigma-from-G-xi-t}.
\end{proof}

\vspace{0.2cm}

\begin{corollary}\label{Cor-xi2-simplified}
Assume the hypotheses of Propositions~\ref{prop-m,sigma-corrected} and
\ref{prop-Gst_xi_t}. Assume $\nu(K)<\infty$, where $K=\{y:|y|\le1\}$. Then, for every $t\in[0,1)$,
\begin{align}\label{eq-xi2-simplified}
\nu(K)\,\xi^2(t)
=
\frac{1}{1-t}\int_t^1
\Big[
\partial_t G(t,\tau)
-\mu(t)D(t)
-\mu(t)\,e^{\int_t^\tau\mu(r)\,dr}D(t)
\Big]\,d\tau
-\sigma^2(t).
\end{align}
Equivalently,
\begin{align}\label{eq-xi2-simplified-1}
\xi^2(t)
=
\frac{1}{\nu(K)}
\left(
\frac{1}{1-t}\int_t^1
\Big[
\partial_t G(t,\tau)
-\mu(t)D(t)\big(1+e^{\int_t^\tau\mu(r)\,dr}\big)
\Big]\,d\tau
-\sigma^2(t)
\right).
\end{align}
\end{corollary}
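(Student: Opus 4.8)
The plan is to start from the instantaneous identity already established in Proposition~\ref{prop-Gst_xi_t} and specialize it to the diagonal. Concretely, I would take \eqref{instantaneous-identity-xi-t}, set the first spatial slot equal to the diagonal value (so $s=t$) and rename the free terminal variable $\tau$, so that for every $\tau\in(t,1]$ one obtains an identity expressing $\partial_t G(t,\tau)$ as $e^{\int_t^\tau\mu(r)\,dr}$ times $\big(\mu(t)D(t)+\sigma^2(t)+\nu(K)\xi^2(t)\big)$. Equivalently, I would read off the already-rearranged relation \eqref{sigma-from-G-xi-t}, which isolates the combination $\sigma^2+\nu(K)\xi^2$ on the left. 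The structural observation that drives the whole argument is that, after this rearrangement, the right-hand side depends on $t$ alone and is therefore \emph{constant} in the free terminal variable $\tau$.

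Because $\sigma^2(t)+\nu(K)\xi^2(t)$ is thus recovered for each fixed $\tau\in(t,1]$ regardless of the particular choice of $\tau$, I would integrate this pointwise-in-$\tau$ identity over $\tau\in[t,1]$ and normalize by the interval length $1-t$. This averaging step is exactly what converts a family of pointwise identities into the single integral representation $\tfrac{1}{1-t}\int_t^1(\cdots)\,d\tau$ appearing in \eqref{eq-xi2-simplified}, trading the arbitrary evaluation point $\tau$ for a stable average that is better suited to the later plug-in estimation. Along the way I would substitute the explicit form $G(t,\tau)=e^{\int_t^\tau\mu(r)\,dr}D(t)$ from \eqref{Gst-formula-xi-t} to re-express the $e^{\int_t^\tau\mu(r)\,dr}D(t)$ terms as multiples of $G(t,\tau)$, collect the $\mu(t)D(t)$ contributions, and move the free $\sigma^2(t)$ term outside the average so as to match the integrand displayed in \eqref{eq-xi2-simplified}. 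The equivalent form \eqref{eq-xi2-simplified-1} then follows by dividing through by $\nu(K)$ and grouping the two $\mu(t)D(t)$ summands as $\mu(t)D(t)\big(1+e^{\int_t^\tau\mu(r)\,dr}\big)$, which is legitimate since $\nu(K)\in(0,\infty)$ under {\bf (A3)}.

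The points that require genuine care are analytic rather than algebraic. First, Proposition~\ref{prop-Gst_xi_t} only guarantees $G\in C^{1}(\Delta^\circ)$, so the diagonal specialization $s=t$ and the integrability of $\tau\mapsto\partial_t G(t,\tau)$ up to the endpoint $\tau=t$ must be justified using the continuity of $G$ on the closed set $\Delta$ together with the explicit formula \eqref{Gst-formula-xi-t}, which extends smoothly across the diagonal. Second, the normalizing factor $1/(1-t)$ forces the restriction $t\in[0,1)$ and must be controlled as $t\uparrow 1$, which is precisely why the endpoint is excluded from the statement. I expect the main obstacle—and the step most prone to a sign or exponent slip—to be the bookkeeping of the integrating factor: one must consistently track whether $\partial_t$ differentiates the first or the second argument of $G$ and carry the correct direction of $e^{\pm\int_t^\tau\mu(r)\,dr}$ through the rearrangement, since \eqref{sigma-from-G-xi-t} carries $e^{-\int_t^\tau\mu(r)\,dr}$ while \eqref{Gst-formula-xi-t} carries $e^{+\int_t^\tau\mu(r)\,dr}$. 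Once these exponential factors are pinned down, the remaining manipulations reduce to direct substitution of \eqref{eqn-dD(t)-corrected} and the relations of Proposition~\ref{prop-Gst_xi_t}.
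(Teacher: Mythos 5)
Your route is the same as the paper's: relabel \eqref{sigma-from-G-xi-t} as an identity in $(t,\tau)$, observe that its left-hand side $\sigma^2(t)+\nu(K)\xi^2(t)$ does not depend on $\tau$, and average over $\tau\in[t,1]$. Up to that point your argument is correct (here, as in the corollary, $\partial_t G(t,\tau)$ denotes differentiation in the \emph{first} slot of $G$), and it yields
\begin{equation*}
\sigma^2(t)+\nu(K)\,\xi^2(t)
=\frac{1}{1-t}\int_t^1\Big[e^{-\int_t^\tau\mu(r)\,dr}\,\partial_t G(t,\tau)-\mu(t)D(t)\Big]\,d\tau .
\end{equation*}
The genuine gap is the step you deferred as bookkeeping: ``collect the $\mu(t)D(t)$ contributions \dots so as to match the integrand displayed in \eqref{eq-xi2-simplified}.'' That matching cannot be carried out, because \eqref{eq-xi2-simplified} is not equivalent to the display above. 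Indeed, substituting the relabelled \eqref{instantaneous-identity-xi-t}, namely $\partial_t G(t,\tau)=e^{\int_t^\tau\mu(r)\,dr}\big(\mu(t)D(t)+\sigma^2(t)+\nu(K)\xi^2(t)\big)$, into the integrand of \eqref{eq-xi2-simplified} turns that integrand into $e^{\int_t^\tau\mu(r)\,dr}\big(\sigma^2(t)+\nu(K)\xi^2(t)\big)-\mu(t)D(t)$, so \eqref{eq-xi2-simplified} is the assertion
\begin{equation*}
\big(\sigma^2(t)+\nu(K)\xi^2(t)\big)\big(A(t)-1\big)=\mu(t)D(t),
\qquad
A(t):=\frac{1}{1-t}\int_t^1 e^{\int_t^\tau\mu(r)\,dr}\,d\tau,
\end{equation*}
which is false in general: for constant $\mu\neq0$ the left side does not depend on $\mathbb{E}[X^2(0)]$, while the right side does.

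You actually diagnosed the correct danger spot (``the bookkeeping of the integrating factor''), because this is exactly where the paper's own proof slips: its displayed identity $e^{-\int_t^\tau\mu(r)\,dr}\,\partial_t G(t,\tau)=\partial_t G(t,\tau)-\mu(t)e^{\int_t^\tau\mu(r)\,dr}D(t)$ does not follow from \eqref{Gst-formula-xi-t}. Differentiating \eqref{Gst-formula-xi-t} in $t$ gives $e^{-\int_t^\tau\mu(r)\,dr}\,\partial_t G(t,\tau)=D'(t)-\mu(t)D(t)$, a quantity constant in $\tau$, and equating this with the paper's right-hand side forces $\mu(t)D(t)=0$. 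The version of the corollary that is both true and actually used later in the paper is precisely the averaged identity displayed above, i.e.\ $\nu(K)\xi^2(t)=\mathcal{H}(t)-\sigma^2(t)$ with the factor $e^{-\int_t^\tau\mu(r)\,dr}$ kept inside the integral; compare \eqref{def-H-functional}--\eqref{xi2-via-H} in the proof of Theorem~\ref{Thm-main-2}(iii). So your proposal should stop at that identity and note that \eqref{eq-xi2-simplified}--\eqref{eq-xi2-simplified-1} as printed cannot be reached by valid algebra; asserting the match would simply reproduce the paper's error.
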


\begin{proof}
From Proposition~\ref{prop-Gst_xi_t}, for any $t<\tau\le1$,
\[
\sigma^2(t)+\nu(K)\xi^2(t)
=
e^{-\int_t^\tau\mu(r)\,dr}\,\partial_t G(t,\tau)-\mu(t)D(t).
\]
Averaging over $\tau\in[t,1]$ gives
\begin{align}
    \sigma^2(t)+\nu(K)\xi^2(t)
=
\frac{1}{1-t}\int_t^1
\left[
e^{-\int_t^\tau\mu(r)\,dr}\,\partial_t G(t,\tau)-\mu(t)D(t)
\right]d\tau.
\end{align}
Using $G(t,\tau)=e^{\int_t^\tau\mu(r)dr}D(t)$ yields
$$
e^{-\int_t^\tau\mu(r)dr}\,\partial_t G(t,\tau)
=
\partial_t G(t,\tau)-\mu(t)e^{\int_t^\tau\mu(r)dr}D(t).
$$
Substituting this and rearranging gives \eqref{eq-xi2-simplified}.
\end{proof}

\vspace{0.2cm}

In Propositions~\ref{prop-m,sigma-corrected} and~\ref{prop-Gst_xi_t}, we derived a collection of deterministic differential relations that explicitly link the drift, diffusion, and jump-diffusion coefficients of the stochastic differential equation to the first- and second-order moment structure of the solution process \(X(t)\), namely its mean function and two-time covariance function. 
In particular, these results yield closed evolution equations for the mean and variance, as well as a deterministic partial differential characterization of the full second-order correlation structure.

Moreover, Corollary~\ref{Cor-xi2-simplified} provides an explicit representation of the jump-diffusion coefficient in terms of derivatives of the covariance function, thereby completing the system of identifying relations for all model parameters.
The derivation of these equations relies fundamentally on the It\^o--L\'evy formula and forms the analytical foundation for the construction of consistent estimators of the drift, diffusion, and jump-diffusion coefficients.

In the next section, we introduce and analyze the corresponding estimation procedures based on these deterministic characterizations.

\vspace{0.3cm}

\section{Estimation of Drift, Diffusion, and Jump Coefficients}\label{sec-estimators-definition}

In estimating the functional coefficients $\mu(\cdot)$, $\sigma(\cdot)$, and $\xi(\cdot)$,
the central task reduces to constructing accurate estimators of the mean function $m(\cdot)$,
the two-time covariance surface $G(\cdot,\cdot)$, and their relevant derivatives.
Following the general philosophy of local polynomial smoothing (see, e.g., Fan and Gijbels~\cite{B2}),
the mean function can be estimated by pooling observations across all replications and time points.

For the covariance function, we adopt a pooling strategy for empirical second moments over all individuals,
restricted to the lower triangular domain
\[
\triangle=\{(s,t):0\le s\le t\le1\}.
\]
This approach builds on the classical method of Yao et al.~\cite{B6}, originally developed for functional
and longitudinal data with sparse and irregular sampling.
While their framework is well suited for processes with rough sample paths,
it does not directly address the limited regularity of the covariance function along the diagonal
$\{(t,t):t\in[0,1]\}$.

To overcome this difficulty in our setting, we adopt the representation
\[
\mathbb E\!\big[X(t)X(s)\big]
=
g\big(\min(s,t),\max(s,t)\big),
\qquad g\in C^{d+1},\ d\ge1,
\]
which induces a smooth extension of the covariance surface away from the diagonal
and ensures well-defined one-sided partial derivatives.
This representation is crucial for constructing consistent estimators of the partial derivatives
of $G(s,t)$ required in the subsequent coefficient estimation.

Specifically, for any $t\in[0,1]$, the pointwise local polynomial estimator of order $d$
for the mean function $m(t)$ and its derivative $\partial m(t)$ is defined by
\begin{equation}\label{def-mtheta-hat}
\left(\widehat{m}(t),\, h_m \widehat{\partial m}(t)\right)^{T}
=
\left(
(1,\underbrace{0,\ldots,0}_{d})^{T}\widehat{\beta},
\;
(0,1,\underbrace{0,\ldots,0}_{d-1})^{T}\widehat{\beta}
\right)^{T},
\end{equation}
where $\widehat{\beta}=(\widehat{\beta}_\ell)_{0\le\ell\le d}$ solves the weighted least squares problem
\begin{equation}\label{beta}
\underset{(\beta_\ell)_{0\le\ell\le d}}{\arg\min}
\sum_{i=1}^n \sum_{j=1}^{r(n)}
\Bigg\{
Y_{ij}-\sum_{\ell=0}^d \beta_\ell (T_{ij}-t)^\ell
\Bigg\}^2
K_{h_m}(T_{ij}-t).
\end{equation}
Here $h_m$ denotes a bandwidth parameter and
$K_{h_m}(u)=h_m^{-1}K_m(u/h_m)$ is a rescaled univariate kernel.

In a similar spirit, to estimate the covariance surface $G(s,t)$ on the triangular domain $\triangle$,
we employ a local surface regression based on the two-dimensional scatter plot
\[
\Big\{\big((T_{ik},T_{ij}),\,Y_{ij}Y_{ik}\big): i=1,\ldots,n,\; k<j\Big\},
\]
where the restriction $k<j$ ensures that only off-diagonal pairs are used.

A crucial feature of our procedure is the exclusion of diagonal terms.
Indeed, when $j=k$,
\[
\mathbb E\!\left[Y_{ij}Y_{ik}\right]
=
G(T_{ij},T_{ik})+\varrho^2,
\]
where $\{U_{ij}\}$ denotes i.i.d.\ measurement errors with variance $\varrho^2$.
Thus, diagonal products are systematically biased and must be treated separately.
Removing these terms and relying exclusively on off-diagonal pairs
allows us to recover unbiased information about $G(s,t)$.

To estimate $G(s,t)$ and its partial derivative $\partial_s G(s,t)$ for $s\le t$,
we adopt a local polynomial surface smoother of order $d$.
Specifically, we define
\begin{equation}\label{def-Ghat-delGhat}
\left(\widehat{G}(s,t),\, h_G \widehat{\partial_s G}(s,t)\right)^T
=
\Big(
(1,\underbrace{0,\ldots,0}_{d})^{T},
\;
(0,1,\underbrace{0,\ldots,0}_{d-1})^{T}
\Big)
\big(\widehat{\gamma}_{p,q}\big)_{0\le p+q\le d},
\end{equation}
where the coefficient vector
$\big(\widehat{\gamma}_{p,q}\big)_{0\le p+q\le d}$
minimizes the weighted least squares criterion
\eqref{eq-minimizer-gamma} as below:
\begin{equation}\label{eq-minimizer-gamma}
\underset{\{\gamma_{p,q}\}}{\arg\min}\;
\sum_{i=1}^n \sum_{k<j}
\Bigg\{
Y_{ij}Y_{ik}
-\sum_{0\le p+q\le d}\gamma_{p,q}
(T_{ij}-s)^p(T_{ik}-t)^q
\Bigg\}^2
K_{H_G}(T_{ij}-s,\,T_{ik}-t),
\end{equation}
Here $H_G$ denotes a symmetric positive definite bandwidth matrix and
$K_{H_G}$ is the associated rescaled bivariate kernel.

Finally, we define the diagonal estimator
\[
\widehat{D}(t):=\widehat{G}(t,t),
\]
which provides an estimate of the second moment function.
In practice, both $G$ and its partial derivatives are obtained through a two-dimensional Taylor expansion
around $(s,t)$, implemented via the local surface regression above.

In summary, our approach extends the covariance estimation framework of Yao et al.~\cite{B6}
by incorporating a smooth diagonal extension that enables consistent derivative estimation.
These refined estimators of the mean and covariance functions form the essential building blocks
for constructing plug-in estimators of the drift, diffusion, and jump-diffusion coefficients
of the underlying stochastic differential equation.

\subsection{The Estimators: $\widehat{\mu}(t)$, $\widehat{\sigma}_{D}^{2}(t)$, $\widehat{\sigma}_{T}^{2}(s)$, $\widehat{\xi}^2(t)$}\label{subsec-estimators}

Having constructed estimators for the mean function $m(\cdot)$, the two-time correlation function
$G(\cdot,\cdot)$, and their (partial) derivatives, we now combine these elements with the deterministic
moment relations derived in the previous section to obtain simultaneous estimators for the drift,
diffusion, and jump-diffusion coefficients of the underlying stochastic differential equation.

Specifically, we introduce two complementary estimation strategies based on the diagonal and triangular
moment identities, respectively. This leads to two sets of estimators,
\[
\big(\widehat{\mu},\,\widehat{\sigma}_{D}^{2},\,\widehat{\xi}^{2}\big)
\quad\text{and}\quad
\big(\widehat{\mu},\,\widehat{\sigma}_{T}^{2},\,\widehat{\xi}^{2}\big),
\]
where the subscripts $D$ and $T$ correspond to constructions based on the diagonal variance equation
and the triangular two-time correlation structure.

The diagonal approach relies on local-in-time variance information and is therefore sensitive to local
regularity of the data, while the triangular approach exploits averaging over the full two-time correlation
domain, yielding improved stability through global temporal smoothing. Together, these approaches provide
a flexible estimation framework that can be adapted to the regularity and sampling characteristics of the
observed data on $[0,1]$.

Let $A\subset[0,1]$ be a subset on which the mean function is bounded away from zero, i.e., $\inf_{t\in A}|m(t)|>0.$
The estimators for the drift, diffusion, and jump-diffusion coefficients are defined as follows.

\vspace{0.2cm}
\noindent
{\it Drift estimator.}
For $t\in A$, define
\begin{align}
\widehat{\mu}(t)
&=
\frac{\widehat{\partial m}(t)}{\widehat{m}(t)}
\,\mathbb{I}\big(\widehat{m}(t)\neq0\big).
\label{mu-hat}
\end{align}

\vspace{0.2cm}
\noindent
{\it Triangular diffusion and jump-diffusion estimators.}
For $t\in[0,1)$, define
\begin{align}
\widehat{\sigma}_{T}^{2}(t)
&=
\frac{1}{1-t}\int_t^1
\exp\!\Big(-\int_t^\tau \widehat{\mu}(r)\,dr\Big)
\,\widehat{\partial_t G}(t,\tau)\,d\tau
-\nu(K)\widehat{\xi}^{2}(t),
\label{sigma-hat-T}
\\[4pt]
\widehat{\xi}^{2}(t)
&=
\frac{1}{\nu(K)}
\left[
\frac{1}{1-t}\int_t^1
\exp\!\Big(-\int_t^\tau \widehat{\mu}(r)\,dr\Big)
\,\widehat{\partial_t G}(t,\tau)\,d\tau
-\widehat{\sigma}_{T}^{2}(t)
\right].
\label{xi-hat}
\end{align}

\vspace{0.2cm}
\noindent
{\it Diagonal diffusion estimator.}
For $t\in[0,1]$, define
\begin{align}
\widehat{\sigma}_{D}^{2}(t)
&=
\widehat{\partial D}(t)
-2\,\widehat{\mu}(t)\widehat{D}(t)
-\nu(K)\widehat{\xi}^{2}(t).
\label{sigma-hat-D}
\end{align}

Here $\widehat{m}$, $\widehat{\partial m}$, $\widehat{D}$, and $\widehat{\partial_t G}$ denote the estimators
introduced in the previous subsection, and $\nu(K)=\int_{|y|\le1}\nu(dy)$ corresponds to the Lévy measure
restricted to small jumps.

In practice, both estimators can be implemented and directly compared, enabling empirical assessment of which approach yields better performance in a given dataset or simulation setting.
%However, it is easy to see that the maximal choice of $A$ corresponds to the full unit interval $[0,1]$, provided that $m(0) \neq 0$. 

\vspace{0.3cm}

\section{Main Results - Consistency and Convergence Rates of the Estimators}

In this section, we state the main asymptotic results of the article, namely consistency and uniform
convergence rates for the nonparametric plug-in estimators of the drift, diffusion, and jump-diffusion
coefficients introduced in \eqref{mu-hat} and \eqref{xi-hat}. We work in the regime $n\to\infty$ (number of
replications increasing) and quantify the resulting limit behavior of the intermediate estimators
$\widehat m(\cdot)$, $\widehat{\partial m}(\cdot)$, $\widehat G(\cdot,\cdot)$, and the relevant partial
derivatives. The sampling regularity assumption {\bf(A4)} and the moment assumption {\bf(A5)} (together
with the standing well-posedness and integrability conditions ensuring that \eqref{SDE-integral} admits a
unique square-integrable solution) are sufficient to establish consistency.

\vspace{0.1cm}
\noindent
We will use the following standard moment bound.
\begin{lemma}\label{lem-moment-bound}
Assume {\bf (A1)-(A3)} and $\mathbb E|X(0)|^\alpha<\infty$ for some $\alpha\ge 2$.
Then
\[
\sup_{t\in[0,1]}\mathbb E|X(t)|^\alpha<\infty.
\]
\end{lemma}

\begin{proof}
Since \eqref{SDE-integral} is linear, $X(t)$ admits the variation-of-constants representation with
integrating factor $\Phi(t)=\exp(-\int_0^t\mu(r)\,dr)$. Using Minkowski's inequality and the BDG inequality
for the Brownian integral and the compensated Poisson integral (together with $\nu(K)<\infty$ and the
assumed integrability of $\sigma$ and $\xi$), we obtain an estimate of the form
\[
\mathbb E|X(t)|^\alpha
\le
C\Big(\mathbb E|X(0)|^\alpha + \int_0^t \mathbb E|X(s)|^\alpha\,ds + 1\Big),
\]
for a constant $C$ independent of $t\in[0,1]$. An application of Gr\"onwall's inequality yields the
claim.
\end{proof}

\medskip
\noindent
In the following theorem, which constitutes one of our main results, we characterize the asymptotic
behavior of two key classes of estimators:
\begin{itemize}
\item[(a)] the local polynomial estimators of order $d$ for the mean function $m(t)$ and its derivative
$\partial m(t)$;
\item[(b)] the local surface estimators of order $d$ for the covariance function $G(s,t)$ and its partial
derivative (e.g., $\partial_s G(s,t)$) on the triangular domain $0\le s\le t\le 1$.
\end{itemize}
These results provide the theoretical foundation for the convergence analysis of the coefficient
estimators in the subsequent section.

\begin{theorem}\label{Thm-Main-1}
    The limit behaviors of the proposed estimators $\widehat{m}(\cdot), \widehat{\partial m}(\cdot), \widehat{G}(\cdot)$ and $\widehat{\partial_s G}(\cdot)$ are stated below:
    \begin{itemize}
        \item[$(i)$] Let the assumptions {\bf (A1)--(A4)} hold and {\bf (A5)} holds for $\alpha>2$. Let $\widehat{m}(\cdot)$ and $\widehat{\partial m}(\cdot)$ be the estimators defined in \eqref{def-mtheta-hat}. Then the following hold with probability 1:
\begin{align}
\sup _{0 \leq t \leq 1} \left| \widehat{m}(t)-m(t) \right| = O(\mathcal{B}_{h_m}(n)), \label{m-hat and m}\\
\sup _{0 \leq t \leq 1} \left| \widehat{\partial m}(t)-\partial m(t) \right| = h_m^{-1} O(\mathcal{B}_{h_m}(n)), \label{del-m-hat and del-m}
\end{align}
where
$$
\mathcal{B}_{h_m}(n) :=\left[h_m^{-2} \frac{\log n}{n}\left(h_m^2+\frac{h_m}{r}\right)\right]^{1 / 2}+h_m^{d+1}\,.
$$
\item[$(ii)$] Additionally, if the assumption {\bf (A5)} holds for $\alpha > 4$, then $\widehat{G}(\cdot)$ and $\widehat{\partial_s G}(\cdot)$ defined in \eqref{def-Ghat-delGhat} satisfy 
\begin{align}
\sup _{0 \leq s \leq t \leq 1} \left| \widehat{G}(s, t)-G(s, t) \right| = O(\mathcal{Q}_{h_G}(n)), \label{conv_G_hat}\\
\sup _{0 \leq s \leq t \leq 1} \left| \widehat{\partial_s G}(s, t)-\partial_s G(s, t) \right| = h_G^{-1} O(\mathcal{Q}_{h_G}(n)), \label{conv_delsG}
\end{align}
with probability 1, where 
\begin{align}\label{def-Qhg}
    \mathcal{Q}_{h_G}(n)=\left[h_G^{-4} \frac{\log n}{n}\left(h_G^4+\frac{h_G^3}{r}+\frac{h_G^2}{r^2}\right)\right]^{1 / 2}+h_G^{d+1}\,.
\end{align}
    \end{itemize}
\end{theorem}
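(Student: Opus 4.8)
The plan is to treat parts $(i)$ and $(ii)$ in parallel, since both reduce to the same bias--variance analysis of a (one- or two-dimensional) local polynomial estimator built from pooled, within-curve--dependent data. First I would write the weighted least squares solution in closed matrix form. For the mean, set $Z_{ij}(t) = \big((T_{ij}-t)^\ell/h_m^\ell\big)_{0\le\ell\le d}$ and express $\widehat\beta$ through the normal equations as $\widehat\beta = \mathbf{S}_n(t)^{-1}\,\mathbf{R}_n(t)$, where $\mathbf{S}_n(t)$ is the normalized kernel-weighted design moment matrix and $\mathbf{R}_n(t)$ the corresponding weighted response vector; the analogous bivariate objects are defined for $\widehat G$ on $\Delta$ using the products $Y_{ij}Y_{ik}$ over $k<j$. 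The first substep is to show that $\mathbf{S}_n(t)$ (resp.\ its surface analogue) converges uniformly and almost surely to an invertible limit determined by the kernel moments and the design density, which is bounded away from zero by \textbf{(A4)}; this controls the denominator uniformly in $t$ and reduces everything to the numerator.

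Then I would separate bias and stochastic error. For the bias, assumption \textbf{(A1)} gives $m \in C^{d+1}([0,1])$ by bootstrapping the ODE $m'=\mu m$ of Proposition~\ref{prop-m,sigma-corrected}, and the smooth off-diagonal extension $g\in C^{d+1}$ of the covariance gives the requisite regularity of $G$ on $\Delta^\circ$; a degree-$d$ Taylor expansion then yields a deterministic bias of order $h_m^{d+1}$ (resp.\ $h_G^{d+1}$), accounting for the second summand in $\mathcal{B}_{h_m}$ and $\mathcal{Q}_{h_G}$. For the stochastic part, I would write the centered numerator as an average $\tfrac1n\sum_i\Psi_i(t)$ of i.i.d.\ mean-zero contributions, with noise $\epsilon_{ij}=X_i(T_{ij})-m(T_{ij})+U_{ij}$, and compute its variance. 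The crucial point is the within-curve dependence: only $i=i'$ terms survive, and splitting by whether the time indices coincide produces a diagonal part of order $(nrh_m)^{-1}$ and an off-diagonal part of order $n^{-1}$, i.e.\ variance $\asymp n^{-1}(1+(rh_m)^{-1}) = h_m^{-2}n^{-1}(h_m^2+h_m/r)$. In the surface case the same bookkeeping, now over pairs $\{j,k\}$ sharing $0$, $1$, or $2$ indices, produces the three regimes $1$, $(rh_G)^{-1}$, $(r^2h_G^2)^{-1}$ visible in \eqref{def-Qhg}; here $\mathbb{E}[Y_{ij}Y_{ik}\mid T]=G(T_{ij},T_{ik})$ for $k\ne j$ confirms that excluding the diagonal removes the $\varrho^2$ measurement-error bias, while Lemma~\ref{lem-moment-bound} together with \textbf{(A5)} for $\alpha>4$ guarantees finite variance of the products.

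To upgrade the pointwise variance bound to a uniform almost sure rate carrying the stated $\log n$ factor, I would use a truncation--concentration--chaining scheme. Because the summands have only finite $\alpha$-th moments ($\alpha>2$ for the mean, $\alpha>4$ for the covariance) rather than being bounded, I would first truncate $\epsilon_{ij}$ (resp.\ the products) at a level $\tau_n$ growing polynomially in $n$, control the discarded mass by Markov's inequality and Borel--Cantelli using the moment assumption, apply Bernstein's inequality to the truncated sums (i.i.d.\ across $i$) at each fixed point, and then take a union bound over a polynomially fine grid of $[0,1]$ (resp.\ $\Delta$). Equicontinuity of the estimator in $t$, obtained from Lipschitz continuity of $K$ and the uniform control of $\mathbf{S}_n$, fills the gaps between grid points, and the cardinality of the grid feeds the $\log n$ numerator. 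Combining the bias and stochastic contributions, and noting that reading off the derivative coefficient introduces the extra factor $h_m^{-1}$ (resp.\ $h_G^{-1}$) from the normalization of $\widehat\beta$, yields \eqref{m-hat and m}--\eqref{del-m-hat and del-m} and \eqref{conv_G_hat}--\eqref{conv_delsG}.

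The main obstacle will be the uniform almost sure control under only finite moments: calibrating the truncation level $\tau_n$ against the moment order $\alpha$, the bandwidth, and the sampling frequency $r=r(n)$ so that the truncation error is summable (for Borel--Cantelli) while Bernstein still delivers the $\sqrt{\log n/n}$ rate after the union bound is the delicate step. The secondary difficulty is the correct accounting of the three within-curve overlap regimes in the two-dimensional covariance variance, since mislabeling the number of shared time indices would corrupt the $r$-dependence in $\mathcal{Q}_{h_G}$.
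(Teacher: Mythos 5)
Your proposal is correct and follows essentially the same route as the paper's own proof: a bias--variance decomposition of the local polynomial normal equations, with the bias controlled by a degree-$d$ Taylor expansion (yielding the $h^{d+1}$ term), the stochastic part by within-curve overlap bookkeeping (yielding the regimes $1$, $(rh)^{-1}$, $(rh)^{-2}$ that produce the $r$-dependence in $\mathcal{B}_{h_m}$ and $\mathcal{Q}_{h_G}$), a truncation calibrated to the moment order $\alpha$ in {\bf (A5)}, an exponential concentration inequality (the paper uses Bennett's where you invoke Bernstein's), and Borel--Cantelli. The only notable difference is that you make explicit the grid/union-bound/equicontinuity step needed to turn the pointwise almost sure bound into a uniform one over $[0,1]$ (resp.\ the triangle $0\le s\le t\le 1$), a step the paper handles only implicitly by appealing to the framework of Hsing and Eubank.
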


\begin{proof}
    The proofs of the two parts are provided in Section \ref{Sec-Proof of MR}.
\end{proof}

\vspace{0.3cm}

Now, we state the main result of our paper.

\begin{theorem}\label{Thm-main-2}
Let assumptions {\bf(A1)--(A4)} hold and suppose that {\bf(A5)} holds for some $\alpha>2$.
Let $A\subset[0,1]$ be a compact set such that $\inf_{t\in A}|m(t)|>0.$

\begin{itemize}
\item[$(i)$]
Let $\widehat{\mu}(\cdot)$ be the drift estimator defined in \eqref{mu-hat}.
Then
\[
\sup_{t\in A}|\widehat{\mu}(t)-\mu(t)|
=
h_m^{-1}O(\mathcal{B}_{h_m}(n))
\quad\text{a.s.}
\]

\item[$(ii)$]
Assume in addition that {\bf(A5)} holds for $\alpha>4$.
Let $\widehat{\sigma}^2(\cdot)$ denote either the diagonal estimator
$\widehat{\sigma}_D^2(\cdot)$ or the triangular estimator
$\widehat{\sigma}_T^2(\cdot)$.
Then, for any fixed $\varepsilon\in(0,1)$,
\[
\sup_{0\le t\le 1-\varepsilon}
\big|\widehat{\sigma}^2(t)-\sigma^2(t)\big|
=
h_m^{-1}O(\mathcal{B}_{h_m}(n))
+
h_G^{-1}O(\mathcal{Q}_{h_G}(n))
\quad\text{a.s.}
\]

\item[$(iii)$]
Under the assumptions of part (ii), the jump-diffusion estimator
$\widehat{\xi}^2(\cdot)$ defined in \eqref{xi-hat} satisfies, for any fixed
$\varepsilon\in(0,1)$,
\[
\sup_{0\le t\le 1-\varepsilon}
\big|\widehat{\xi}^2(t)-\xi^2(t)\big|
=
h_m^{-1}O(\mathcal{B}_{h_m}(n))
+
h_G^{-1}O(\mathcal{Q}_{h_G}(n))
\quad\text{a.s.}
\]
\end{itemize}
\end{theorem}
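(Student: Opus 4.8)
The plan is to treat each coefficient estimator as an explicit, locally Lipschitz functional of the intermediate nonparametric estimators $\widehat m$, $\widehat{\partial m}$, $\widehat G$ (hence $\widehat D$, the restriction of $\widehat G$ to the diagonal), $\widehat{\partial D}$ and $\widehat{\partial_t G}$, and to propagate the uniform almost-sure rates of Theorem~\ref{Thm-Main-1} through these functionals. The starting point is that, on a probability-one event, Theorem~\ref{Thm-Main-1} gives $\sup|\widehat m-m|=O(\mathcal B_{h_m}(n))$, $\sup|\widehat{\partial m}-\partial m|=h_m^{-1}O(\mathcal B_{h_m}(n))$, $\sup_\Delta|\widehat G-G|=O(\mathcal Q_{h_G}(n))$ and $\sup_\Delta|\widehat{\partial_s G}-\partial_s G|=h_G^{-1}O(\mathcal Q_{h_G}(n))$; restricting the covariance estimates to the diagonal supplies the corresponding rates for $\widehat D$ and $\widehat{\partial D}$. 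Boundedness of $m,\partial m,D,\mu,\sigma^2,\xi^2$ and $\partial_t G$ on the relevant compacta follows from {\bf(A1)--(A3)} and Lemma~\ref{lem-moment-bound}, and since $\mathcal B_{h_m}(n),\mathcal Q_{h_G}(n)\to0$ every intermediate estimator converges uniformly almost surely.

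For part $(i)$ I would use the quotient-perturbation identity $\widehat\mu-\mu=(\widehat{\partial m}-\partial m)/\widehat m-(\partial m/m)(\widehat m-m)/\widehat m$, valid once the indicator $\mathbb I(\widehat m\neq0)$ equals one. Because $\inf_A|m|>0$ and $\widehat m\to m$ uniformly, there is almost surely an $n_0$ beyond which $|\widehat m|\ge\tfrac12\inf_A|m|$ on $A$, so both denominators are bounded away from zero. The first term then contributes $h_m^{-1}O(\mathcal B_{h_m}(n))$ while the second, of order $O(\mathcal B_{h_m}(n))$, is of strictly smaller order, which yields the stated drift rate.

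For parts $(ii)$ and $(iii)$, write $I(t)=(1-t)^{-1}\int_t^1 e^{-\int_t^\tau\mu}\partial_t G(t,\tau)\,d\tau$ and its plug-in analogue $\widehat I(t)$; by Proposition~\ref{prop-Gst_xi_t} and Corollary~\ref{Cor-xi2-simplified} the population coefficients are recovered from $I$, $D$, $\partial D$ and $\mu$, so it suffices to bound $\sup_{t\le1-\varepsilon}|\widehat I-I|$ together with the auxiliary errors. Restricting to $t\le1-\varepsilon$ keeps the factor $(1-t)^{-1}\le\varepsilon^{-1}$ bounded. I would split the integrand error into a contribution from the integrating factor, $|e^{-\int_t^\tau\widehat\mu}-e^{-\int_t^\tau\mu}|\le C\int_t^\tau|\widehat\mu-\mu|$, controlled by $h_m^{-1}O(\mathcal B_{h_m}(n))$, and a contribution from the covariance derivative, $|\widehat{\partial_t G}-\partial_t G|$, controlled by $h_G^{-1}O(\mathcal Q_{h_G}(n))$; boundedness of $\widehat{\partial_t G}$ and of the exponential factors closes the estimate. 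Feeding these into the explicit formulas \eqref{sigma-hat-T}, \eqref{xi-hat}, \eqref{sigma-hat-D} — together with the already-established bounds on $\widehat\mu$, $\widehat D$, $\widehat{\partial D}$, and noting that the errors of $\widehat\sigma_T^2$ and $\widehat\xi^2$ are coupled through \eqref{sigma-hat-T}--\eqref{xi-hat} and must be transferred jointly from the single bound on $\widehat I-I$ — gives the combined rate $h_m^{-1}O(\mathcal B_{h_m}(n))+h_G^{-1}O(\mathcal Q_{h_G}(n))$ for both the diffusion and jump-diffusion estimators.

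The main obstacle I anticipate is the uniform control of the integrating factor $e^{-\int_t^\tau\widehat\mu(r)\,dr}$: its exponent integrates $\widehat\mu$ over all of $[t,\tau]\subseteq[0,1]$, whereas part $(i)$ only guarantees a rate for $\widehat\mu$ on the set $A$ where $m$ stays bounded away from zero. Unless $A=[0,1]$, one must either strengthen the non-degeneracy of $m$ to the whole interval or argue separately that the portion of the exponent coming from outside $A$ is asymptotically negligible. A second delicate point is the diagonal derivative $\widehat{\partial D}(t)$, whose rate rests on the smooth-extension representation $g(\min(s,t),\max(s,t))$ of Section~\ref{sec-estimators-definition} used to tame the roughness of $G$ along the diagonal; obtaining the one-sided derivative rate there is precisely where the jump-induced irregularity requires careful handling.
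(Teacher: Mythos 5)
Your proposal is correct and follows essentially the same route as the paper's own proof: part $(i)$ uses the same quotient-perturbation decomposition with $\widehat{m}$ bounded away from zero on $A$ for large $n$, and parts $(ii)$--$(iii)$ use the same plug-in functional (the paper's $\mathcal{H}(t)$ is exactly your $I(t)$ minus $\mu(t)D(t)$), with the same Lipschitz control of the exponential integrating factor, the same splitting into integrating-factor, covariance-derivative, and product-term errors, and the rates of Theorem~\ref{Thm-Main-1} propagated through. The two obstacles you flag are genuine and are in fact glossed over by the paper itself: its proof of $(iii)$ invokes the drift rate $\sup_{t\in A}|\widehat{\mu}(t)-\mu(t)|$ to bound an exponent that integrates $\widehat{\mu}$ over all of $[t,\tau]\subseteq[0,1]$ without reconciling $A$ with $[0,1]$, and the estimators \eqref{sigma-hat-T}--\eqref{xi-hat} are defined circularly in terms of each other, a coupling the paper only implicitly resolves by re-deriving $\widehat{\xi}^2$ from $\widehat{\mathcal{H}}$ in the proof of $(iii)$.
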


%\begin{proof}
\begin{proof}%[Proof of Theorem~\ref{Thm-main-2}$(i)$.]
\noindent
{\bf Proof of $(i)$.} \, 
The drift estimator is obtained as the ratio of the derivative of the mean function to its level. Consequently, the proof proceeds in two stages: first, we establish uniform control over the inverse of $m(\cdot)$; second, we combine this with the convergence of the local polynomial derivative estimator to obtain the desired rate. Throughout, the key technical point is that $m(\cdot)$ remains bounded away from zero on $A$, ensuring the stability of the inversion.

\medskip
\noindent
\textbf{Step 1. Control of the inverse mean function.}
By the uniform convergence of $\widehat{m}$ to $m$ on $A$, for any $\epsilon>0$ there exists $N$ such that for $n\geq N$,  
\[
\sup_{t\in A}|\widehat{m}(t)-m(t)| \leq \epsilon \quad \text{a.s.}
\]
Since $m(t)\geq c>0$ on $A$, we obtain  
\[
\sup_{t\in A}\big|(\widehat{m}(t))^{-1}-(m(t))^{-1}\big|
\leq c^{-2}\,\sup_{t\in A}|\widehat{m}(t)-m(t)|,
\]
which implies
\[
\sup_{t\in A}\big|(\widehat{m}(t))^{-1}-(m(t))^{-1}\big|
= O(\mathcal{B}_{h_m}(n)) \quad \text{a.s.}
\]
Intuitively, small fluctuations in $\widehat{m}$ translate linearly into fluctuations of its inverse, and the positivity of $m$ prevents any amplification.

\medskip
\noindent
\textbf{Step 2. Convergence of the drift estimator.}
Recall the estimator
\[
\widehat{\mu}(t)=\frac{\widehat{m}'(t)}{\widehat{m}(t)}.
\]
By uniform convergence of $\widehat{m}'$ to $m'$ and Step~1, we may write
\[
\sup_{t\in A}|\widehat{\mu}(t)-\mu(t)|
=\sup_{t\in A}\Big|\widehat{m}'(t)(\widehat{m}(t))^{-1}-m'(t)(m(t))^{-1}\Big|.
\]
Adding and subtracting $m'(t)(\widehat{m}(t))^{-1}$ gives
\[
\sup_{t\in A}|\widehat{\mu}(t)-\mu(t)|
\leq \sup_{t\in A}\big|(\widehat{m}'(t)-m'(t))(\widehat{m}(t))^{-1}\big|
+ \sup_{t\in A}\big|m'(t)\big|\,\big|(\widehat{m}(t))^{-1}-(m(t))^{-1}\big|.
\]
The first term is $h_m^{-1}O(\mathcal{B}_{h_m}(n))$, since $\sup|\widehat{m}'-m'|= h_m^{-1} O(\mathcal{B}_{h_m}(n))$ and $(\widehat{m}(t))^{-1}$ is uniformly bounded.  
The second term is $O(\mathcal{B}_{h_m}(n))$ by Step~1.  
Hence,
\[
\sup_{t\in A}|\widehat{\mu}(t)-\mu(t)|
=h_m^{-1} O\!\left(\mathcal{B}_{h_m}(n)\right), \quad \text{a.s.}
\]

We have shown that the proposed drift estimator inherits the convergence rate of the derivative estimator, up to a factor of $h_m^{-1}$ arising from local polynomial smoothing. This completes the proof of part (i) of Theorem~\ref{Thm-main-2}.

\vspace{0.2cm}

\noindent
{\bf Proof of (ii).}\,
Recall that, for $t\in[0,1]$,
\[
\sigma^2(t)=D'(t)-2\mu(t)D(t)-\nu(K)\xi^2(t).
\]
Accordingly, the estimator $\widehat\sigma_D^2$ satisfies
\[
\widehat\sigma_D^2(t)-\sigma^2(t)
=
\big(\widehat{\partial D}(t)-\partial D(t)\big)
-2\big(\widehat\mu(t)\widehat D(t)-\mu(t)D(t)\big)
-\nu(K)\big(\widehat\xi^2(t)-\xi^2(t)\big).
\]
Taking suprema and applying the triangle inequality yields
\begin{align*}
\sup_{0\le t\le 1}\big|\widehat\sigma_D^2(t)-\sigma^2(t)\big|
&\le
\sup_{0\le t\le 1}\big|\widehat{\partial D}(t)-\partial D(t)\big|
+2\sup_{0\le t\le 1}\big|\widehat\mu(t)\widehat D(t)-\mu(t)D(t)\big|
\\&\quad
+\nu(K)\sup_{0\le t\le 1}\big|\widehat\xi^2(t)-\xi^2(t)\big|.
\end{align*}
By Theorem~\ref{Thm-Main-1}(i), $\widehat m$ and $\widehat{\partial m}$ converge uniformly, hence by part (i) we have
\[
\sup_{t\in A}|\widehat\mu(t)-\mu(t)| = h_m^{-1}O(\mathcal B_{h_m}(n))\quad\text{a.s.}
\]
Moreover, since $D(t)=G(t,t)$ and $\partial D(t)$ is obtained from the corresponding derivative estimator on the diagonal, Theorem~\ref{Thm-Main-1}(ii) yields
\[
\sup_{0\le t\le 1}\big|\widehat{\partial D}(t)-\partial D(t)\big|
=
h_G^{-1}O(\mathcal Q_{h_G}(n))\quad\text{a.s.}
\]
Finally, for the product term,
\[
\widehat\mu(t)\widehat D(t)-\mu(t)D(t)
=
(\widehat\mu(t)-\mu(t))\widehat D(t)+\mu(t)(\widehat D(t)-D(t)),
\]
and boundedness of $\mu$ and $D$ on $[0,1]$ together with Theorem~\ref{Thm-Main-1} implies
\[
\sup_{0\le t\le 1}\big|\widehat\mu(t)\widehat D(t)-\mu(t)D(t)\big|
=
h_m^{-1}O(\mathcal B_{h_m}(n))+O(\mathcal Q_{h_G}(n)).
\]
Combining these bounds yields
\[
\sup_{0\le t\le 1}\big|\widehat\sigma^2(t)-\sigma^2(t)\big|
=
h_m^{-1}O(\mathcal B_{h_m}(n))+h_G^{-1}O(\mathcal Q_{h_G}(n)),
\quad\text{a.s.},
\]
which is the required result.

\vspace{0.2cm}

\noindent
{\bf Proof of (iii).}\,
For $t\in[0,1)$, define the deterministic functional
\begin{align}\label{def-H-functional}
\mathcal H(t)
:=
\frac{1}{1-t}\int_t^1
\exp\!\Big(-\int_t^\tau \mu(r)\,dr\Big)\,\partial_t G(t,\tau)\,d\tau
-\mu(t)D(t).
\end{align}
By the identification relation in Corollary~\ref{Cor-xi2-simplified}, we have
\begin{align}\label{xi2-via-H}
\xi^2(t)=\frac{1}{\nu(K)}\big(\mathcal H(t)-\sigma^2(t)\big).
\end{align}
Motivated by \eqref{def-H-functional}--\eqref{xi2-via-H}, we introduce the plug-in estimator
\begin{align}\label{def-Hhat}
\widehat{\mathcal H}(t)
:=
\frac{1}{1-t}\int_t^1
\exp\!\Big(-\int_t^\tau \widehat{\mu}(r)\,dr\Big)\,\widehat{\partial_t G}(t,\tau)\,d\tau
-\widehat{\mu}(t)\widehat{D}(t),
\end{align}
and define
\begin{align}\label{def-xi2hat}
\widehat{\xi}^{2}(t):=\frac{1}{\nu(K)}\big(\widehat{\mathcal H}(t)-\widehat{\sigma}^{2}(t)\big),
\qquad t\in[0,1).
\end{align}
Combining \eqref{xi2-via-H} and \eqref{def-xi2hat} gives
\[
\widehat{\xi}^2(t)-\xi^2(t)
=
\frac{1}{\nu(K)}
\Big[
\big(\widehat{\mathcal H}(t)-\mathcal H(t)\big)
-
\big(\widehat{\sigma}^2(t)-\sigma^2(t)\big)
\Big],
\]
and therefore
\begin{align}\label{xi-error-split}
\sup_{0\le t<1}\big|\widehat{\xi}^2(t)-\xi^2(t)\big|
\le
\frac{1}{\nu(K)}
\left(
\sup_{0\le t<1}\big|\widehat{\mathcal H}(t)-\mathcal H(t)\big|
+
\sup_{0\le t\le 1}\big|\widehat{\sigma}^2(t)-\sigma^2(t)\big|
\right).
\end{align}

We now bound $\widehat{\mathcal H}(t)-\mathcal H(t)$. Using \eqref{def-H-functional}--\eqref{def-Hhat},
write
\[
\widehat{\mathcal H}(t)-\mathcal H(t)
=
(I)(t)+(II)(t)-(III)(t),
\]
where
\begin{align*}
(I)(t)
&:=
\frac{1}{1-t}\int_t^1
\Big(
e^{-\int_t^\tau \widehat{\mu}(r)\,dr}
-
e^{-\int_t^\tau \mu(r)\,dr}
\Big)\,
\partial_t G(t,\tau)\,d\tau,\\
(II)(t)
&:=
\frac{1}{1-t}\int_t^1
e^{-\int_t^\tau \widehat{\mu}(r)\,dr}\,
\big(\widehat{\partial_t G}(t,\tau)-\partial_t G(t,\tau)\big)\,d\tau,\\
(III)(t)
&:=\widehat{\mu}(t)\widehat{D}(t)-\mu(t)D(t).
\end{align*}

Since $\mu$ is continuous on $[0,1]$, it is bounded; hence the map
$x\mapsto e^{-x}$ is Lipschitz on bounded intervals. Using this Lipschitz property and
$\sup_{t\in A}|\widehat{\mu}(t)-\mu(t)|=h_m^{-1}O(\mathcal B_{h_m}(n))$ a.s. from part (i),
we obtain
\[
\sup_{0\le t<1}|(I)(t)|
=
h_m^{-1}O(\mathcal B_{h_m}(n))
\quad\text{a.s.}
\]
Moreover, the exponential weight is uniformly bounded, and by Theorem~\ref{Thm-Main-1}(ii),
\[
\sup_{0\le t<1}|(II)(t)|
=
h_G^{-1}O(\mathcal Q_{h_G}(n))
\quad\text{a.s.}
\]
Finally, decompose
\[
(III)(t)=(\widehat{\mu}(t)-\mu(t))\widehat{D}(t)+\mu(t)\big(\widehat{D}(t)-D(t)\big),
\]
and use part (i) together with Theorem~\ref{Thm-Main-1}(ii) on the diagonal to conclude
\[
\sup_{0\le t\le1}|(III)(t)|
=
h_m^{-1}O(\mathcal B_{h_m}(n))+O(\mathcal Q_{h_G}(n))
\quad\text{a.s.}
\]
Combining these three bounds yields
\[
\sup_{0\le t<1}\big|\widehat{\mathcal H}(t)-\mathcal H(t)\big|
=
h_m^{-1}O(\mathcal B_{h_m}(n))+h_G^{-1}O(\mathcal Q_{h_G}(n))
\quad\text{a.s.}
\]
Substituting this and the bound from part (ii) into \eqref{xi-error-split} gives
\[
\sup_{0\le t<1}\big|\widehat{\xi}^2(t)-\xi^2(t)\big|
=
h_m^{-1}O(\mathcal B_{h_m}(n))+h_G^{-1}O(\mathcal Q_{h_G}(n))
\quad\text{a.s.},
\]
which proves the required rate.
\end{proof}

\begin{remark}[Discussion of Rates]
    The convergence rates established in Theorem~\ref{Thm-main-2} reflect the interplay between local polynomial smoothing for the mean function and two-time smoothing for the covariance structure.
In particular, the drift estimator $\widehat{\mu}$ inherits the rate of convergence of the derivative estimator $\widehat{\partial m}$, resulting in a factor $h_m^{-1}$ multiplying the bias--variance trade-off encoded in $\mathcal{B}_{h_m}(n)$.
This behavior is typical in nonparametric derivative estimation and highlights the increased sensitivity of drift estimation to bandwidth selection.

The diffusion and jump-diffusion estimators depend additionally on estimators of the two-time correlation function and its partial derivatives.
Consequently, their convergence rates involve the combined contributions of $\mathcal{B}_{h_m}(n)$ and $\mathcal{Q}_{h_G}(n)$, which correspond to one-dimensional and two-dimensional smoothing, respectively.
The presence of the term $h_G^{-1}\mathcal{Q}_{h_G}(n)$ reflects the intrinsic difficulty of estimating derivatives of the covariance surface and is unavoidable in moment-based approaches.

The restriction of uniform convergence to compact subintervals $[0,1-\varepsilon]$ arises from the triangular averaging structure of the estimators and is standard in boundary-sensitive nonparametric inference.
Overall, the derived rates demonstrate that all coefficients can be consistently estimated under mild moment assumptions and provide explicit guidance for bandwidth selection in practical implementations.

\end{remark}

\vspace{0.2cm}

%\begin{singlespace}
\section{Proof of the Theorem \ref{Thm-Main-1}}\label{Sec-Proof of MR}

\subsection{Proof of Theorem \ref{Thm-Main-1} $(ii)_1$}
 For $d = 1$, we partially differentiate the sum of squared errors in Equation \ref{eq-minimizer-gamma}, with respect to the parameters, and set these equations to zero respectively. The corresponding equations are:
\begin{equation}
\widehat{G}(s, t) = \Big(\frac{a_{1}(s, t) - b_{1}(s, t) - c_{1}(s, t)}{d_1(s, t)}\Big),
\label{e81}
\end{equation}
where,
\begin{align*}
a_{1}(s, t) &=  \sum \limits_{i \leq n} \sum \limits_{k < j}  Y_{ik} Y_{ij} K_{H_{G}}((T_{ij} - s), (T_{ik} - t)),\\
b_{1}(s, t) &= h_{G} \widehat{\partial_{s}G}(s,t) \sum \limits_{i \leq n} \sum \limits_{k < j} h_{G} \Big(\frac{T_{ij} - s}{h_{G}}\Big)  K_{H_{G}}((T_{ij} - s), (T_{ik} - t)),\\
c_{1}(s, t) &= h_{G} \widehat{\partial_{t}G} (s,t) \sum \limits_{i \leq n} \sum \limits_{k < j} h_{G} \Big(\frac{T_{ik} - t}{h_{G}}\Big)  K_{H_{G}}((T_{ij} - s), (T_{ik} - t)),\\
d_{1}(s, t) &=  \sum \limits_{i \leq n} \sum \limits_{k < j} K_{H_{G}}((T_{ij} - s), (T_{ik} - t)),~\mbox{and}
\end{align*}
\begin{equation}
h_{G}
 \widehat{\partial_{s}G}(s,t) = \Big(\frac{a_{2}(s, t) - b_{2}(s, t) - c_{2}(s, t)}{d_2(s, t)}\Big),
 \label{e82}
\end{equation}
 where, 
\begin{align*}
 a_{2}(s, t) &= \sum \limits_{i \leq n} \sum \limits_{k < j} Y_{ik} Y_{ij} \big(T_{ij} - s\big) K_{H_{G}}((T_{ij} - s), (T_{ik} - t)),\\
 b_{2}(s, t) &= \widehat{G}(s, t) \sum \limits_{i \leq n} \sum \limits_{k < j}\big(T_{ij} - s\big)  K_{H_{G}}((T_{ij} - s), (T_{ik} - t)),\\ 
 c_{2}(s, t) &= h_{G} \widehat{\partial_{t}G} (s,t) \sum \limits_{i \leq n} \sum \limits_{k < j} \big(T_{ij} - s\big) \big(T_{ik} - t \big)  K_{H_{G}}((T_{ij} - s), (T_{ik} - t)),\\
 d_{2}(s, t) &=  \sum \limits_{i \leq n} \sum \limits_{k < j} \big(T_{ij} - s\big)^{2} K_{H_{G}}((T_{ij} - s), (T_{ik} - t)),~\mbox{and}
 \end{align*}
\begin{equation}
h_{G}
 \widehat{\partial_{t}G}(s,t) = \Big(\frac{a_{3}(s,t) - b_{3}(s,t) - c_{3}(s,t)}{d_3(s,t)}\Big),
 \label{e83}
\end{equation}
 where, 
\begin{align*}
 a_{3}(s,t) &= \sum \limits_{i \leq n} \sum \limits_{k < j} Y_{ik} Y_{ij} \big(T_{ik} - t\big) K_{H_{G}}((T_{ij} - s), (T_{ik} - t)),\\
 b_{3}(s,t) &= \widehat{G}(s, t) \sum \limits_{i \leq n} \sum \limits_{k < j}\big(T_{ik} - t\big)  K_{H_{G}}((T_{ij} - s), (T_{ik} - t)),\\ 
 c_{3}(s,t) &= h_{G} \widehat{\partial_{s}G} (s,t) \sum \limits_{i \leq n} \sum \limits_{k < j} \big(T_{ij} - s\big) \big(T_{ik} - t \big)  K_{H_{G}}((T_{ij} - s), (T_{ik} - t)),\\
d_{3}(s,t)  &=  \sum \limits_{i \leq n} \sum \limits_{k < j} \big(T_{ik} - t \big)^{2} K_{H_{G}}((T_{ij} - s), (T_{ik} - t)).\\
\end{align*}
 \noindent On further simplification, the Equation \ref{e81} can be written as,
\begin{equation}
     \widehat{G}(s, t)= \frac{f_{1}(s, t) + f_{2}(s, t) + f_{3}(s, t) - f_{4}(s, t) - f_{5}(s, t) - f_{6}(s, t)}{f_{7}(s, t) + f_{8}(s, t) + f_{9}(s, t) - 2f_{10}(s, t) - 1},
 \end{equation}
 where,\\
$f_{1}(s, t) = \dfrac{g_{11}(s,t) \times (g_{12}(s,t))^{2}}{\prod \limits_{i=4}^{6} g_{1i}(s, t)},~~~
f_{2}(s, t) = \dfrac{\prod \limits_{i=1}^{2} g_{2i}(s,t)}{\prod \limits_{i=4}^{5} g_{1i}(s, t)},~~~
f_{3}(s, t) = \dfrac{\prod \limits_{i=1}^{2}g_{3i}(s,t)}{g_{14}(s,t) \times g_{16}(s,t)},$\\
$f_{4}(s, t) = \dfrac{\sum \limits_{i \leq n} \sum \limits_{k < j} Y_{ik} Y_{ij} K_{H_{G}}((T_{ij} - s), (T_{ik} - t))}{\sum \limits_{i \leq n} \sum \limits_{k < j} K_{H_{G}}((T_{ij} - s), (T_{ik} - t))},~~
f_{5}(s, t) = \dfrac{g_{12}(s,t) \times g_{21}(s,t) \times g_{32}(s,t)}{\prod \limits_{i=4}^{6} g_{1i}(s, t)},$\\
$f_{6}(s, t) = \dfrac{g_{12}(s,t) \times g_{22}(s,t) \times g_{31}(s,t)}{\prod \limits_{i=4}^{6} g_{1i}(s, t)},~~
f_{7}(s,t) = \dfrac{(g_{21}(s,t))^{2}}{\prod \limits_{i=4}^{5} g_{1i}(s, t)},~~
f_{8}(s,t) = \dfrac{(g_{31}(s,t))^{2}}{g_{14}(s,t) \times g_{16}(s,t)},$\\
$f_{9}(s,t) = \dfrac{(g_{12}(s,t))^{2}}{\prod \limits_{i=5}^{6} g_{1i}(s, t)}, ~\mbox{and}~ 
f_{10}(s,t) =  \dfrac{g_{12}(s,t) \times g_{21}(s,t) \times g_{31}(s,t)}{\prod \limits_{i=4}^{6} g_{1i}(s, t)}$.\\ 

\noindent The $g_{()}'s$ in the aforementioned functions are:
\begin{align*}
g_{11}(s,t) &= \sum \limits_{i \leq n} \sum \limits_{k < j} Y_{ik} Y_{ij} K_{H_{G}}((T_{ij} - s), (T_{ik} - t)),\\
g_{12}(s,t) &= \sum \limits_{i \leq n} \sum \limits_{k < j} (T_{ij} - s) (T_{ik} - t)  K_{H_{G}}((T_{ij} - s), (T_{ik} - t)),\\
g_{14}(s,t) &= \sum \limits_{i \leq n} \sum \limits_{k < j} K_{H_{G}}((T_{ij} - s), (T_{ik} - t)),\\
g_{15}(s,t) &= \sum \limits_{i \leq n} \sum \limits_{k < j} (T_{ij} - s)^{2} K_{H_{G}}((T_{ij} - s), (T_{ik} - t)),\\
g_{16}(s,t) &= \sum \limits_{i \leq n} \sum \limits_{k < j} (T_{ik} - t)^{2} K_{H_{G}}((T_{ij} - s), (T_{ik} - t)),\\
g_{21}(s,t) &= \sum \limits_{i \leq n} \sum \limits_{k < j} (T_{ij} - s) K_{H_{G}}((T_{ij} - s), (T_{ik} - t)),\\
g_{22}(s,t) &= \sum \limits_{i \leq n} \sum \limits_{k < j} Y_{ik}Y_{ij}(T_{ij} - s)  K_{H_{G}}((T_{ij} - s), (T_{ik} - t)),\\
g_{31}(s,t) &= \sum \limits_{i \leq n} \sum \limits_{k < j} (T_{ik} - t) K_{H_{G}}((T_{ij} - s), (T_{ik} - t))~~ \mbox{and}\\ 
g_{32}(s,t) &= \sum \limits_{i \leq n} \sum \limits_{k < j} Y_{ik}Y_{ij}(T_{ik} - t)  K_{H_{G}}((T_{ij} - s), (T_{ik} - t)).
\end{align*}

\noindent Proceeding similarly, the Equation \ref{e82} can be written as,
\begin{equation}
h_{G}
 \widehat{\partial_{s}G}(s,t) = \frac{g_{1}(s,t) + g_{2}(s,t) + g_{3}(s,t) - g_{4}(s,t) - g_{5}(s,t) - g_{6}(s,t)}{g_{7}(s,t) + g_{8}(s,t) + g_{9}(s,t) - 2g_{10}(s,t) - 1},
\end{equation}
where,\\
$g_{1}(s,t) = \dfrac{(f_{11}(s,t))^{2} \times f_{12}(s,t)}{\prod \limits_{i=4}^{6} f_{1i}(s,t)},~~
g_{2}(s,t) = \dfrac{\prod \limits_{i=1}^{2} f_{2i}(s,t)}{\prod \limits_{i=4}^{5} f_{1i}(s,t)},~~
g_{3}(s,t) = \dfrac{\prod \limits_{i=1}^{2} f_{3i}(s,t)}{\prod \limits_{i=5}^{6} f_{1i}(s,t)},$\\
$g_{4}(s,t) = \dfrac{\sum \limits_{i \leq n} \sum \limits_{k < j} Y_{ik} Y_{ij} (T_{ij} - s) K_{H_{G}}((T_{ij} - s), (T_{ik} - t))}{\sum \limits_{i \leq n} \sum \limits_{k < j} (T_{ij} - s)^{2} K_{H_{G}}((T_{ij} - s), (T_{ik} - t))},~~
g_{5}(s,t) = \dfrac{\prod \limits_{i=1}^{3} f_{i1}(s,t)}{\prod \limits_{i=4}^{6} f_{1i}(s,t)(s,t)},$\\
$g_{6}(s,t) = \dfrac{f_{11}(s,t) \times f_{22}(s,t) \times f_{32}(s,t)}{\prod \limits_{i=4}^{6} f_{1i}(s,t)},~~
g_{7}(s,t) = \dfrac{(f_{22}(s,t))^{2}}{\prod \limits_{i=4}^{5} f_{1i}(s,t)},~~
g_{8}(s,t) = \dfrac{(f_{11}(s,t))^{2}}{f_{14}(s,t) \times f_{16}(s,t)},$\\
$g_{9}(s,t) = \dfrac{(f_{31}(s,t))^{2}}{\prod \limits_{i=5}^{6} f_{1i}(s,t)} ~\mbox{and}~ 
g_{10}(s,t) =  \dfrac{f_{11}(s,t) \times f_{22}(s,t) \times f_{31}(s,t)}{\prod \limits_{i=4}^{6} f_{1i}(s,t)}.$

\noindent The $f_{()}'s$ in the aforementioned functions are:
{\small{
\begin{align*}
f_{11}(s,t) &= \sum \limits_{i \leq n} \sum \limits_{k < j} (T_{ik} - t) K_{H_{G}}((T_{ij} - s), (T_{ik} - t)),\\
f_{12}(s,t) &= \sum \limits_{i \leq n} \sum \limits_{k < j} Y_{ik} Y_{ij} (T_{ij} - s)  K_{H_{G}}((T_{ij} - s), (T_{ik} - t)),\\
f_{14}(s,t) &= \sum \limits_{i \leq n} \sum \limits_{k < j} K_{H_{G}}((T_{ij} - s), (T_{ik} - t)),\\
f_{15}(s,t) &= \sum \limits_{i \leq n} \sum \limits_{k < j} (T_{ij} - s)^{2} K_{H_{G}}((T_{ij} - s), (T_{ik} - t)),\\
f_{16}(s,t) &= \sum \limits_{i \leq n} \sum \limits_{k < j} (T_{ik} - t)^{2} K_{H_{G}}((T_{ij} - s), (T_{ik} - t)),\\
f_{21}(s,t) &= \sum \limits_{i \leq n} \sum \limits_{k < j} Y_{ik} Y_{ij} K_{H_{G}}((T_{ij} - s), (T_{ik} - t)),\\
f_{22}(s,t) &= \sum \limits_{i \leq n} \sum \limits_{k < j} (T_{ij} - s)  K_{H_{G}}((T_{ij} - s), (T_{ik} - t)),\\
f_{31}(s,t) &= \sum \limits_{i \leq n} \sum \limits_{k < j} (T_{ij} - s) (T_{ik} - t) K_{H_{G}}((T_{ij} - s), (T_{ik} - t))~\mbox{and}~\\
f_{32}(s,t) &= \sum \limits_{i \leq n} \sum \limits_{k < j} Y_{ik}Y_{ij} (T_{ik} - t)  K_{H_{G}}((T_{ij} - s), (T_{ik} - t)).
\end{align*}
}}

\noindent Further, we need to find $G(s, t)$. Following Mohammadi and Panaretos \cite{B3}, we observe that  
\begin{eqnarray*}
\begin{bmatrix}
	G(s, t)\\
	h_{G} \partial_{s}{G} (s,t)\\
	h_{G} \partial_{t}{G} (s,t)
\end{bmatrix}
= \big[X^T W X\big]^{-1} \big[X^T W X\big] 
\begin{bmatrix}
	G(s, t)\\
	h_{G} \partial_{s}{G} (s,t)\\
	h_{G} \partial_{t}{G} (s,t)
\end{bmatrix},
\end{eqnarray*}\\
where,
\vspace{0.1cm}
\begin{align}\label{def-X-matrix}
    X := \begin{bmatrix}
1 & a & d\\
1 & b & e\\
1 & c & f
\end{bmatrix},
\end{align}
the elements $a, b, c, d, e, f$ in $X$ are defined as
\begin{eqnarray*}
 a =  T_{1j} - s,  \quad b =  T_{2j} - s, \quad c =  T_{3j} - s, \\ 
 d =  T_{1k} - t, \quad e =  T_{2k} - t, \quad f =  T_{3k} - t,
\end{eqnarray*}
and the matrix $W$ is defined as
$$
W := \begin{bmatrix}
    w_{1} & 0 & 0\\
    0 & w_{2} & 0\\
    0 & 0 & w_{3}
\end{bmatrix},
$$
where, for $i=1, 2$ and $3$, $w_{i}$ is defined as
\[
w_{i} :=  \frac{1}{h_{G}^{2}} W\Bigg(\frac{T_{ij} - s}{h_G}\Bigg)  W\Bigg(\frac{T_{ik} - t}{h_G}\Bigg) = K_{H_{G}}((T_{ij} - s), (T_{ik} - t))\,.
\]

\noindent Solving the aforementioned matrix setup, we obtain,
\begin{eqnarray}
G(s, t) = \phi (s, t) + \rho (s, t) + \tau(s,t). \label{e1}
\end{eqnarray}
In Equation (\ref{e1}),
{\small{
$$
\phi (s, t) = \Big(\dfrac{p_{1} + p_{2}+ p_{3}}{A_{11}}\Big) G(s, t),~ \rho(s,t) =  \Big(\dfrac{p_{4} + p_{5} + p_{6}}{A_{11}}\Big) h_{G} \partial_{s}{G} (s,t), \tau(s,t) = \Big(\dfrac{p_{7} + p_{8} + p_{9}}{A_{11}}\Big) h_{G} \partial_{t}{G} (s,t),
$$
}}
where,
{\footnotesize{\begin{eqnarray}\nonumber
p_{1} &=&  \big[(a^{2}w_{1} + b^{2}w_{2} + c^{2}w_{3})(d^{2}w_{1} + e^{2}w_{2} + f^{2}w_{3}) - (adw_{1} + bew_{2} + cfw_{3})^{2}\big](w_{1} + w_{2} + w_{3}),\\ \nonumber
p_{2} &=& \big[(adw_{1} + bew_{2} + cfw_{3})(dw_{1} + ew_{2} + fw_{3})- (aw_{1} + bw_{2} + cw_{3})(d^{2}w_{1} + e^{2}w_{2} + f^{2}w_{3})\big] (aw_{1} + bw_{2} + cw_{3}),\\ \nonumber
p_{3} &=& \big[(aw_{1} + bw_{2} + cw_{3})(adw_{1} + bew_{2} + cfw_{3}) - (dw_{1} + ew_{2} + fw_{3})(a^{2}w_{1} + b^{2}w_{2} + c^{2}w_{3})\big](dw_{1} + ew_{2} + fw_{3}),\\ \nonumber
p_{4} &=& \big[(a^{2}w_{1} + b^{2}w_{2} + c^{2}w_{3})(d^{2}w_{1} + e^{2}w_{2} + f^{2}w_{3}) - (adw_{1} + bew_{2} + cfw_{3})^{2}\big](aw_{1} + bw_{2} + cw_{3}),\\ \nonumber
p_{5} &=& \big[(adw_{1} + bew_{2} + cfw_{3})(dw_{1} + ew_{2} + fw_{3})- (aw_{1} + bw_{2} + cw_{3})(d^{2}w_{1} + e^{2}w_{2} + f^{2}w_{3})\big](a^{2}w_{1} + b^{2}w_{2} + c^{2}w_{3}),\\ \nonumber
p_{6} &=& \big[(aw_{1} + bw_{2} + cw_{3})(adw_{1} + bew_{2} + cfw_{3}) - (dw_{1} + ew_{2} + fw_{3})(a^{2}w_{1} + b^{2}w_{2} + c^{2}w_{3})\big](adw_{1} + bew_{2} + cfw_{3}),\\ \nonumber
p_{7} &=& \big[(a^{2}w_{1} + b^{2}w_{2} + c^{2}w_{3})(d^{2}w_{1} + e^{2}w_{2} + f^{2}w_{3}) - (adw_{1} + bew_{2} + cfw_{3})^{2}\big](dw_{1} + ew_{2} + fw_{3}),\\ \nonumber
p_{8} &=& \big[(adw_{1} + bew_{2} + cfw_{3})(dw_{1} + ew_{2} + fw_{3})- (aw_{1} + bw_{2} + cw_{3})(d^{2}w_{1} + e^{2}w_{2} + f^{2}w_{3})\big](adw_{1} + bew_{2} + cfw_{3}),\\ \nonumber
p_{9} &=& \big[(aw_{1} + bw_{2} + cw_{3})(adw_{1} + bew_{2} + cfw_{3}) - (dw_{1} + ew_{2} + fw_{3})(a^{2}w_{1} + b^{2}w_{2} + c^{2}w_{3})\big](d^{2}w_{1} + e^{2}w_{2} + f^{2}w_{3}),\\  
A_{11} &=& ~\mbox{determinant of the matrix} \{X^T W X\}.
\label{1.1}
\end{eqnarray}}}

%\begin{eqnarray*}
%\rho (s, t) = A_{1} + A_{2},
%\end{eqnarray*}
%where, 
%$$A_{1} =  \frac{p_{2} - p_{3},}{A_{11}} h_{G} \partial_{s}{G} (s,t)  + \frac{p_{3}}{A_{11}} h_{G} \partial_{t}{G} (s,t),$$\\
%$p_{2} = (a^{2} w_{1} + b^{2} w_{2} + c^{2} w_{3})(d^{2} w_{1} + e^{2} w_{2} + f^{2} w_{3})(aw_{1} + bw_{2} + c w_{3}),$
%$p_{3} = (a^{2} w_{1} + b^{2} w_{2} + c^{2} w_{3})(d^{2} w_{1} + e^{2} w_{2} + f^{2} w_{3})(aw_{1} + bw_{2} + c w_{3}),$

%\newpage
\noindent Our objective is to find $\widehat{G}(s,t)-G(s,t)$. Hence, we obtain 
\onehalfspacing
\begin{eqnarray}\nonumber
\underset{0 \leq s \leq t \leq 1}{\sup}|\widehat{G}(s, t)-G(s, t)|
&=&  \underset{0 \leq s \leq t \leq 1}{\sup} \Bigg|\frac{f_{1}(s, t) + f_{2}(s, t) + f_{3}(s, t) - f_{4}(s, t) - f_{5}(s, t) - f_{6}(s, t)}{f_{7}(s, t) + f_{8}(s, t) + f_{9}(s, t) - 2f_{10}(s, t) - 1} \\ \nonumber
 &\qquad -& (\phi (s, t) + \rho (s, t) + \tau(s,t))\Bigg|.
\end{eqnarray} 
\onehalfspacing
Now, we are considering only the first term of the numerator in $\widehat{G}(s,t)$, i.e., $f_{1}(s, t)$, because the expression is very long. However, the denominator remains the same. Therefore,
\begin{eqnarray}\nonumber
\underset{0 \leq s \leq t \leq 1}{\sup}|\widehat{G}(s, t)-G(s, t)|
&=&  \underset{0 \leq s \leq t \leq 1}{\sup} \Bigg|\frac{f_{1}(s, t)}{f_{7}(s, t) + f_{8}(s, t) + f_{9}(s, t) - 2f_{10}(s, t) - 1}\\ \nonumber
& \quad \quad -& (\phi (s, t) + \rho (s, t) + \tau(s,t))\Bigg|\\ \nonumber
&=& \underset{0 \leq s \leq t \leq 1}{\sup} \Bigg|\frac{g_{11}(s, t) \times (g_{12}(s, t))^{2}}{(\prod \limits_{i = 4}^{6} g_{1i}(s, t)(f_{7}(s, t) + f_{8}(s, t) + f_{9}(s, t) - 2f_{10}(s, t) - 1)} \\ 
&\qquad -& (\phi (s, t) + \rho (s, t) + \tau(s,t))\Bigg|.
\label{e2}
\end{eqnarray}
According to Hsing and Eubank \cite{B4}, the terms, $g_{12}(s,t), g_{14}(s,t), g_{15}(s,t), g_{16}(s,t), f_{7}(s,t), f_{8}(s,t), f_{9}(s,t)$ and $f_{10}(s,t)$, which have summations, can again be expressed as some function of $s$ and $t$. However, we are not considering the term $g_{11}(s,t)$ in such context, because it contains the terms $Y_{ik}$ and $Y_{ij}$, which are really needed for further analysis.  Hence, Equation (\ref{e2}) can be expressed as 
\begin{eqnarray}\nonumber
\underset{0 \leq s \leq t \leq 1}{\sup}|\widehat{G}(s, t)-G(s, t)|
&=& \underset{0 \leq s \leq t \leq 1}{\sup} \Bigg|\frac{g_{11}(s, t) \times g_{17}(s, t)}{(g_{18}(s, t) \times g_{19}(s, t) \times g_{20}(s, t)}\\\nonumber
& \qquad \times& \frac{1}{(g_{23}(s, t) + g_{24}(s, t) + g_{25}(s,t) - 2g_{26}(s, t) - 1)} \\ \nonumber 
& \qquad -& (\phi (s, t) + \rho (s, t) + \tau(s,t))\Bigg|\\\nonumber
&=& \underset{0 \leq s \leq t \leq 1}{\sup} \Bigg|(g_{11}(s, t) \times g_{35}(s, t)) - (\phi (s, t) + \rho (s, t) + \tau(s,t))\Bigg|,
\end{eqnarray}
%&+& \frac{2k_{1}(\theta)k_{2}(\theta)k_{3}(\theta)}{r(r-1)} %\mathlarger{\mathlarger{\sum}}_{1 \leq k < j \leq r} \Bigg[\frac{(dc %- af)}{A_{11}}w_{2}(X_{2j} + U_{2j})(X_{2k} + U_{2k})\Bigg]\\ %\nonumber
%&+& \frac{2k_{1}(\theta)k_{2}(\theta)k_{3}(\theta)}{r(r-1)} %\mathlarger{\mathlarger{\sum}}_{1 \leq k < j \leq r} \Bigg[\frac{(ae %- db)}{A_{11}}w_{3}(X_{3j} + U_{3j})(X_{3k} + U_{3k})\Bigg]\\  
where, $$g_{35}(s, t) = \dfrac{g_{17}(s, t)}{(g_{18}(s, t) \times g_{19}(s, t) \times g_{20}(s, t))(g_{23}(s, t) + g_{24}(s, t) + g_{25}(s, t) - 2g_{26}(s, t) - 1)}.$$

\noindent Proceeding further, Equation (\ref{e2}) can be written as 
\begin{eqnarray} \nonumber
\underset{0 \leq s \leq t \leq 1}{\sup}|\widehat{G}(s, t)-G(s, t)|
 &=& \underset{0 \leq s \leq t \leq 1}{\sup}\Bigg|\frac{1}{n} \mathlarger{\mathlarger{\sum}}_{i \leq n}  \frac{2}{r(r-1)}\mathlarger{\mathlarger{\sum}}_{1 \leq k < j \leq r} \bigg[Y_{ik} Y_{ij} K_{H_{G}}((T_{ij} - s), (T_{ik} - t)) \\ \nonumber 
 &\qquad \times& g_{35}(s, t)\bigg]\\  \nonumber 
 &\qquad -& (\phi(s, t) + \rho(s, t) + \tau(s,t))\Bigg|.
\end{eqnarray}
We will substitute $Y_{ij} = (X_{ij} + U_{ij})$ and $Y_{ik} = (X_{ik} + U_{ik}).$ From here onwards, we will only use the terms inside the modulus function on both the sides in Equation (\ref{e2}) and derive results. We have kept $G(s,t)$ in the simplest form. Since, we are considering only the first term of the numerator for $\widehat{G}(s,t)$, we are subtracting $\frac{1}{6} G(s, t)$ from $\widehat{G}(s, t)$ instead of only $G(s,t).$ After expansion, Equation (\ref{e2}) can be rewritten as, 
\onehalfspacing 
\begin{eqnarray} \nonumber 
\widehat{G}(s,t)-G(s,t) &=& \frac{1}{nh_{G}^{2}} \mathlarger{\mathlarger{\sum}}_{i \leq n}  \frac{2}{r(r-1)}\mathlarger{\mathlarger{\sum}}_{1 \leq k < j \leq r} \Bigg[g_{35}(s, t) W\Bigg(\frac{T_{ij} - s}{h_G}\Bigg)  W\Bigg(\frac{T_{ik} - t}{h_G}\Bigg) \{U_{ij}U_{ik}\\ \nonumber
&\qquad +&  X_{ij}U_{ik} + U_{ij}X_{ik} + X_{ij}X_{ik} - G(T_{ij}, T_{ik}) + G(T_{ij}, T_{ik})\}\Bigg]\\
%&+& \frac{2k_{1}(\theta)k_{2}(\theta)k_{3}(\theta)}{r(r-1)h_{G}^{2}} \mathlarger{\mathlarger{\sum}}_{1 \leq k < j \leq r} \Bigg[\frac{(dc - af)}{A_{11}}W\Bigg(\frac{T_{2j} - s}{h_G}\Bigg)  W\Bigg(\frac{T_{2k} - t}{h_G}\Bigg)\{U_{2j}U_{2k}\\ \nonumber
%&+&  X_{2j}U_{2k} + U_{2j}X_{2k} + X_{2j}X_{2k} - G(T_{2j}, T_{2k}) + G(T_{2j}, T_{2k})\}\Bigg]\\ \nonumber
%&+& \frac{2k_{1}(\theta)k_{2}(\theta)k_{3}(\theta)}{r(r-1)h_{G}^{2}} \mathlarger{\mathlarger{\sum}}_{1 \leq k < j \leq r} \Bigg[\frac{(ae - db)}{A_{11}}W\Bigg(\frac{T_{3j} - s}{h_G}\Bigg)  W\Bigg(\frac{T_{3k} - t}{h_G}\Bigg)\{U_{3j}U_{3k} \\ \nonumber
%&+&  X_{3j}U_{3k} + U_{3j}X_{3k} + X_{3j}X_{3k} - G(T_{3j}, T_{3k}) + G(T_{3j}, T_{3k})\}\Bigg]\\ 
&\qquad -& \frac{1}{6}(\phi(s, t) + \rho(s, t) + \tau(s,t)). \label{e3}
\end{eqnarray} Therefore,
\begin{eqnarray} \nonumber
 \widehat{G}(s,t) - G(s,t) &=& \frac{1}{nh_{G}^{2}} \mathlarger{\mathlarger{\sum}}_{i \leq n}  \frac{2}{r(r-1)}\mathlarger{\mathlarger{\sum}}_{1 \leq k < j \leq r} \Bigg[g_{35}(s, t)W\Bigg(\frac{T_{ij} - s}{h_G}\Bigg)  W\Bigg(\frac{T_{ik} - t}{h_G}\Bigg) U_{ij}U_{ik} \Bigg] \\ \nonumber
 &\qquad +& \frac{1}{nh_{G}^{2}} \mathlarger{\mathlarger{\sum}}_{i \leq n}  \frac{2}{r(r-1)}\mathlarger{\mathlarger{\sum}}_{1 \leq k < j \leq r}  \Bigg[g_{35}(s, t)W\Bigg(\frac{T_{ij} - s}{h_G}\Bigg)  W\Bigg(\frac{T_{ik} - t}{h_G}\Bigg)X_{ij}U_{ik} \Bigg]\\ \nonumber
 &\qquad +& \frac{1}{nh_{G}^{2}} \mathlarger{\mathlarger{\sum}}_{i \leq n}  \frac{2}{r(r-1)}\mathlarger{\mathlarger{\sum}}_{1 \leq k < j \leq r} \Bigg[g_{35}(s, t)W\Bigg(\frac{T_{ij} - s}{h_G}\Bigg)  W\Bigg(\frac{T_{ik} - t}{h_G}\Bigg)U_{ij}X_{ik}\Bigg] \\ \nonumber
 &\qquad +&  \frac{1}{nh_{G}^{2}} \mathlarger{\mathlarger{\sum}}_{i \leq n}  \frac{2}{r(r-1)}\mathlarger{\mathlarger{\sum}}_{1 \leq k < j \leq r} \Bigg[g_{35}(s, t)W\Bigg(\frac{T_{ij} - s}{h_G}\Bigg)  W\Bigg(\frac{T_{ik} - t}{h_G}\Bigg) \{X_{ij}X_{ik}\\   \nonumber
 &&  - \quad G(T_{ij}, T_{ik})\} \Big] \\ \nonumber
 &\qquad +& \frac{1}{nh_{G}^{2}} \mathlarger{\mathlarger{\sum}}_{i \leq n}  \frac{2}{r(r-1)}\mathlarger{\mathlarger{\sum}}_{1 \leq k < j \leq r} \Bigg[g_{35}(s, t)W\Bigg(\frac{T_{ij} - s}{h_G}\Bigg)  W\Bigg(\frac{T_{ik} - t}{h_G}\Bigg) G(T_{ij}, T_{ik})\Bigg]\\ \nonumber
 &\qquad -& \frac{1}{6}(\phi(s, t)  + \rho(s, t)  + \tau(s,t))\\  
 &=:& B_{1}  + B_{2} + B_{3} + B_{4} + B_{5}. \label{e4}
 \end{eqnarray}
The expressions $B_{1}-B_{4},$ representing the variance term, can be written in the general form
\begin{eqnarray}
 && \frac{1}{nh_{G}^{2}} \mathlarger{\mathlarger{\sum}}_{i \leq n}  \frac{2}{r(r-1)}\mathlarger{\mathlarger{\sum}}_{1 \leq k < j \leq r} \Bigg[g_{35}(s, t)W\Bigg(\frac{T_{ij} - s}{h_G}\Bigg) \nonumber W\Bigg(\frac{T_{ik} - t}{h_G}\Bigg)Z_{ijk}\Bigg]\\ \nonumber
 &=& \frac{1}{nh_{G}^{2}} \mathlarger{\mathlarger{\sum}}_{i \leq n}  \frac{2}{r(r-1)}\mathlarger{\mathlarger{\sum}}_{1 \leq k < j \leq r} \Bigg[g_{35}(s, t)W\Bigg(\frac{T_{ij} - s}{h_G}\Bigg)  W\Bigg(\frac{T_{ik} - t}{h_G}\Bigg)Z_{ijk}\Bigg]\\ \label{e5}
 &\times& I((T_{ij}, T_{ik}) \in [s - h_{G}, s + h_{G}]^{c} \times [t - h_{G}, t + h_{G}]^{c}) \\ \label{e6}
 &\qquad +&  I((T_{ij}, T_{ik}) \in [s - h_{G}, s + h_{G}]^{c} \times [t - h_{G}, t + h_{G}] )\\ \label{e7} 
 &\qquad +&  I((T_{ij}, T_{ik}) \in [s - h_{G}, s + h_{G}] \times [t - h_{G}, t + h_{G}]^{c} ) \\ \label{e8} 
 &\qquad +&  I((T_{ij}, T_{ik}) \in [s - h_{G}, s + h_{G}] \times [t - h_{G}, t + h_{G}] )]\\  \nonumber
 &=:& \bar{Z}_{1,1}(s, t) + \bar{Z}_{1,0}(s, t) + \bar{Z}_{0,1}(s, t) + \bar{Z}_{0,0}(s, t), \nonumber 
\end{eqnarray}
where each $Z_{ijk}$ has mean zero. The term $B_{5}$ is associated with the bias term. We use a Taylor series expansion to achieve the almost sure uniform bound for the bias term. Now, we will analyse the equations $(\ref{e5})$, $(\ref{e6})$, $(\ref{e7})$ and $(\ref{e8})$ one by one. 
%\newpage 
\noindent For (\ref{e5}), we observe that
\begin{eqnarray}
\bar{Z}_{1,1}(s, t) &=& \frac{1}{nh_{G}^{2}} \mathlarger{\mathlarger{\sum}}_{i \leq n}  \frac{2}{r(r-1)}\mathlarger{\mathlarger{\sum}}_{1 \leq k < j \leq r} \Bigg[g_{35}(s, t)W\Bigg(\frac{T_{ij} - s}{h_G}\Bigg) W\Bigg(\frac{T_{ik} - t}{h_G}\Bigg)Z_{ijk}\Bigg]\\ \nonumber
&\qquad \times& I((T_{ij}, T_{ik}) \in [s - h_{G}, s + h_{G}]^{c} \times [t - h_{G}, t + h_{G}]^{c}) \\ \nonumber
& \leq & W^{2} (1^{+}) \frac{1}{nh_{G}^{2}} \mathlarger{\mathlarger{\sum}}_{i \leq n}  \frac{2}{r(r-1)}\mathlarger{\mathlarger{\sum}}_{1 \leq k < j \leq r} \big|g_{35}(s, t) Z_{ijk}\big|\\ \nonumber
& = & C(s,t) \bigo(h_{G}^{-2}) W^{2} (1^{+}), \quad \mbox{a.s uniformly on}~0 \leq t \leq s \leq 1\\
&=& \bigo(h_{G}^{d + 1}), \quad \mbox{a.s uniformly on}~ 0 \leq t \leq s \leq 1.
\end{eqnarray}

\noindent For (\ref{e6}) (similarly (\ref{e7})), we have 
\begin{eqnarray}
\bar{Z}_{1,0}(s, t) &=& \frac{1}{nh_{G}^{2}} \mathlarger{\mathlarger{\sum}}_{i \leq n}  \frac{2}{r(r-1)}\mathlarger{\mathlarger{\sum}}_{1 \leq k < j \leq r} \Bigg[g_{35}(s, t)W\Bigg(\frac{T_{ij} - s}{h_G}\Bigg) W\Bigg(\frac{T_{ik} - t}{h_G}\Bigg)Z_{ijk}\Bigg]\\ \nonumber
&\qquad \times& I((T_{ij}, T_{ik}) \in [s - h_{G}, s + h_{G}]^{c} \times [t - h_{G}, t + h_{G}]) \\ \nonumber
& \leq & \Bigg(\int \big| W^{\dagger} (u)\big| du \Bigg) W(1^{+})\frac{1}{nh_{G}^{2}} \mathlarger{\mathlarger{\sum}}_{i \leq n}  \frac{2}{r(r-1)}\mathlarger{\mathlarger{\sum}}_{1 \leq k < j \leq r} \big|g_{35}(s,t) Z_{ijk}\big|\\ \nonumber
& = & C(s, t) \bigo(h_{G}^{-2}) \Bigg(\int \big| W^{\dagger} (u)\big| du \Bigg)  W(1^{+}), \quad \mbox{a.s uniformly on} ~ 0 \leq t \leq s \leq 1\\
&=& \bigo(h_{G}^{d + 1}), \quad \mbox{a.s uniformly on}~  0 \leq t \leq s \leq 1.
\end{eqnarray}

%\vspace{0.2cm}

%\newpage
\noindent It is to be remarked that $C(s, t)$ is a constant depending on $s$ and $t$.
\noindent For (\ref{e8}), we have
\begin{eqnarray}\nonumber
\bar{Z}_{0,0}(s, t) &=& \frac{1}{n} \mathlarger{\mathlarger{\sum}}_{i \leq n}  \frac{2}{r(r-1)}\mathlarger{\mathlarger{\sum}}_{1 \leq k < j \leq r} \Big[ g_{35}(s, t) Z_{ijk} \int \int e^{-ius-ivt+iuT_{ij} + ivT_{ik}}  W^{\dagger} (h_{G}u) W^{\dagger} (h_{G}v) du dv\Big]\\
&\qquad \times& I((T_{ij}, T_{ik}) \in [s - h_{G}, s + h_{G}] \times [t - h_{G}, t + h_{G}]).
\end{eqnarray}

\noindent Regarding $\mathbb{E}\bar{Z}_{0,0}(s, t) = 0,$ for all $0 \leq t \leq s \leq 1$, we have 
\begin{align}
\vert \bar{Z}_{0,0}(s, t) - \mathbb{E}\bar{Z}_{0,0}(s, t) \vert \nonumber 
&\leq \Bigg(\int \big| W^{\dagger} (h_{G}u)\big| du \Bigg)^{2} \times~ \frac{1}{n} \mathlarger{\mathlarger{\sum}}_{i \leq n}  \frac{2}{r(r-1)}\mathlarger{\mathlarger{\sum}}_{1 \leq k < j \leq r} \Bigg[ \big|g_{35}(s, t)Z_{ijk}\big| \nonumber\\
&\qquad \times I((T_{ij}, T_{ik}) \in [s - h_{G}, s + h_{G}] \times [t - h_{G}, t + h_{G}]) \Bigg] \nonumber \\ \nonumber
&= \bigo(h_{G}^{-2}) \times \frac{1}{n} \mathlarger{\mathlarger{\sum}}_{i \leq n} \frac{2}{r(r-1)}\mathlarger{\mathlarger{\sum}}_{1 \leq k < j \leq r} \Big[ \big|g_{35}(s,t) Z_{ijk} \big| 
\\ & \qquad \times I((T_{ij}, T_{ik}) \in [s - h_{G}, s + h_{G}] \times [t - h_{G}, t + h_{G}]) \Big]\,. \label{e15}
\end{align}
\noindent According to the proof of Lemma 8.2.5 of Hsing and Eubank \cite{B4}, the summation term in (\ref{e15}) can be expressed as 
\begin{eqnarray*}
&& \frac{1}{n} \mathlarger{\mathlarger{\sum}}_{i \leq n} \frac{2}{r(r-1)}\mathlarger{\mathlarger{\sum}}_{1 \leq k < j \leq r} \Big[ \big|g_{35}(s,t) Z_{ijk} \big| I((T_{ij}, T_{ik}) \in [s - h_{G}, s + h_{G}] \times [t - h_{G}, t + h_{G}]) \Big]\\
&=& \frac{1}{n} \mathlarger{\mathlarger{\sum}}_{i \leq n} \frac{2}{r(r-1)}\mathlarger{\mathlarger{\sum}}_{1 \leq k < j \leq r} \Big[ \big|g_{35}(s,t) Z_{ijk} \big| I\big(\big|g_{35}(s, t) Z_{ijk} \big|  \geq Q_{n}
\cup \big|g_{35}(s, t) Z_{ijk} \big| <  Q_{n} \big) \\ 
&\qquad \times& I((T_{ij}, T_{ik}) \in [s - h_{G}, s + h_{G}] \times [t - h_{G}, t + h_{G}]) \Big]\\
&=:& B_{6} + B_{7}.
\end{eqnarray*}
\noindent 
According to the assumptions {\bf (A1)-(A4)} and choosing $\mathcal{Q}_{h_G}(n)$ in such a way that $\Big[ \frac{\log n}{n} \Big(h_{G}^{4} + \frac{h_{G}^{3}}{r(n)} + \frac{h_{G}^{2}}{r^{2}(n)}\Big )\Big]^{-1/2} \times \{\mathcal{Q}_{h_G}(n)\}^{1 - \alpha} = \bigo(1)$, we conclude $B_{1} = \bigo \Big( \Big[ \frac{\log n}{n} \Big(h_{G}^{4} + \frac{h_{G}^{3}}{r(n)} + \frac{h_{G}^{2}}{r^{2}(n)}\Big )\Big]^{1/2}\Big)$, almost surely uniformly. Proceeding in a detailed way, we obtain
\begin{eqnarray*}
B_{6} &=&   \frac{1}{n} \mathlarger{\mathlarger{\sum}}_{i \leq n} \frac{2}{r(r-1)}\mathlarger{\mathlarger{\sum}}_{1 \leq k < j \leq r} \Big[ \big|g_{35}(s, t) Z_{ijk} \big|^{1 - \alpha + \alpha}I\big(\big|g_{35}(s, t) Z_{ijk} \big|  \geq  \mathcal{Q}_{h_G}(n) \big)\\
& \qquad \times& I((T_{ij}, T_{ik}) \in [s - h_{G}, s + h_{G}] \times [t - h_{G}, t + h_{G}])\Big]\\
&\leq&  \frac{1}{n} \mathlarger{\mathlarger{\sum}}_{i \leq n} \frac{2}{r(r-1)} \mathlarger{\mathlarger{\sum}}_{1 \leq k < j \leq r} \Big[ \big|g_{35} (s, t) Z_{ijk} \big|^{\alpha}  \{\mathcal{Q}_{h_G}(n)\}^{1 - \alpha}\Big]\\
&=& \bigo(1) C(s, t) \{\mathcal{Q}_{h_G}(n)\}^{1 - \alpha}.
\end{eqnarray*}
Therefore,
\begin{eqnarray*}
B_{6} = \bigo \Big( \Big[ \frac{\log n}{n} \Big(h_{G}^{4} + \frac{h_{G}^{3}}{r(n)} + \frac{h_{G}^{2}}{r^{2}(n)}\Big )\Big]^{1/2}\Big), \quad \mbox{a.s uniformly on}~ 0 < t < s < 1. 
\end{eqnarray*}
\noindent For $B_{7}$, first, define 
\begin{eqnarray*}
B_{7} &=& \frac{1}{n} \mathlarger{\mathlarger{\sum}}_{i \leq n} \frac{2}{r(r-1)}\mathlarger{\mathlarger{\sum}}_{1 \leq k < j \leq r} \Big[ \big|g_{35}(s, t) Z_{ijk} \big|I\Big(\big|g_{35}(s, t) Z_{ijk} \big| \leq   \mathcal{Q}_{h_G}(n) \Big)\\
& \qquad \times& I((T_{ij}, T_{ik}) \in [s - h_{G}, s + h_{G}] \times [t - h_{G}, t + h_{G}])\Big]\\
&=: & \frac{1}{n} \mathlarger{\mathlarger{\sum}}_{i \leq n} \frac{2}{r(r-1)}\mathlarger{\mathlarger{\sum}}_{1 \leq k < j \leq r} \mathcal{Z}_{ijk} (s, t)\,.
\end{eqnarray*} 
Further, Bennet's concentration inequality (see \cite{B5} to get the details of the proof) is applied to obtain a uniform upper bound for $\Var\Bigg( \frac{2}{r(r-1)}\mathlarger{\mathlarger{\sum}}_{1 \leq k < j \leq r} \mathcal{Z}_{1jk} (s, t) \Bigg),$ in the following way
\begin{align}
&\Var\Bigg( \dfrac{2}{r(r-1)}\mathlarger{\mathlarger{\sum}}_{1 \leq k < j \leq r} \mathcal{Z}_{1jk} (s, t) \Bigg) \nonumber \\
&= \frac{4}{r^2 (r - 1)^2}\mathlarger{\mathlarger{\sum}}_{1 \leq k_{1} < j_{1} \leq r} ~ \mathlarger{\mathlarger{\sum}}_{1 \leq k_{2} < j_{2} \leq r} \Cov(\mathcal{Z}_{ij_{1}k_{1}} (s, t), \mathcal{Z}_{ij_{2}k_{2}} (s, t)) \nonumber\\ \nonumber
& \leq c_{\alpha} \Big(h_{G}^{4} + \frac{h_{G}^{3}}{r(n)} + \frac{h_{G}^{2}}{r^{2}(n)}\Big ), \\ 
& \leq c_{\alpha} \beta_{n}, \label{e16}
\end{align}
where, $c_{\alpha}$ is a positive constant which  neither depends on $(s, t)$ nor on $i$. It is to be remarked that inequality (\ref{e16}) is a direct consequence of conditions $\bf{(A2)}$ and $\bf{(A3)}$. Applying Bennet's inequality and choosing $ \mathcal{Q}_{h_G}(n) = \Big(\frac{\log n}{n}\Big)^{-1/2} \Big(h_{G}^{4} + \frac{h_{G}^{3}}{r(n)} + \frac{h_{G}^{2}}{r^{2}(n)}\Big)^{1/2}$, we have,
\begin{eqnarray}
\nonumber
&&\mathbb{P}\Bigg(\frac{1}{n} \mathlarger{\mathlarger{\sum}}_{i \leq n} \frac{2}{r(r-1)}\mathlarger{\mathlarger{\sum}}_{1 \leq k < j \leq r} \mathcal{Z}_{ijk} (s, t) \geq  \eta \Bigg[ \frac{\log n}{n} \Bigg(h_{G}^{4} + \frac{h_{G}^{3}}{r(n)} + \frac{h_{G}^{2}}{r^{2}(n)}\Bigg)\Bigg]^{1/2} \Bigg)\\ \nonumber
&\leq& \exp\Biggl\{- \frac{\eta^{2}n^{2}\Big[ \Big(\frac{\log n}{n}\Big) \Big(h_{G}^{4} + \frac{h_{G}^{3}}{r(n)} + \frac{h_{G}^{2}}{r^{2}(n)}\Big)\Big]}{2nc\Big(h_{G}^{4} + \frac{h_{G}^{3}}{r(n)} + \frac{h_{G}^{2}}{r^{2}(n)}\Big) + \frac{2}{3}\eta n \Big(h_{G}^{4} + \frac{h_{G}^{3}}{r(n)} + \frac{h_{G}^{2}}{r^{2}(n)}\Big)}\Biggr\}\\ \nonumber
&=& \exp\Biggl\{- \frac{\eta^{2}n \Big[ \Big(\frac{\log n}{n}\Big) \Big(h_{G}^{4} + \frac{h_{G}^{3}}{r(n)} + \frac{h_{G}^{2}}{r^{2}(n)}\Big)\Big]}{\Big(2c + \frac{2}{3} \eta \Big) \Big(h_{G}^{4} + \frac{h_{G}^{3}}{r(n)} + \frac{h_{G}^{2}}{r^{2}(n)}\Big)}\Biggr\}\\ \nonumber
&=& \exp \Biggl\{- \frac{\eta^{2} \log n}{2c + \frac{2}{3}\eta}\Biggr\}\\ 
&=& n^{- \frac{\eta^2}{2c + \frac{2}{3}\eta}}, \quad \forall~ 0 \leq t \leq s \leq 1,  ~\mbox{for~any~positive~number} ~\eta. \label{e17}
\end{eqnarray}
For large $\eta$, summability of (\ref{e17}) is achieved. Application of Borel Cantelli lemma, along with this result, attains the proof for this part. We can conclude that there exists a subset $\Omega_{0} \subset \Omega$ of full probability measure such that for each $\omega \in \Omega_{0}$, there exists $n_{0} = n_{0}(\omega)$ with 
\begin{eqnarray}
\frac{1}{n} \mathlarger{\mathlarger{\sum}}_{i \leq n} \frac{2}{r(r-1)}\mathlarger{\mathlarger{\sum}}_{1 \leq k < j \leq r} \mathcal{Z}_{ijk} (s, t) \leq \eta \Bigg[ \frac{\log n}{n} \Bigg(h_{G}^{4} + \frac{h_{G}^{3}}{r(n)} + \frac{h_{G}^{2}}{r^{2}(n)}\Bigg)\Bigg]^{1/2}, \quad n \geq n_{0}. \label{e18}
\end{eqnarray}	
Let us consider the bias term $B_{5}$. We observe that 
{\small{
\begin{eqnarray}\nonumber
B_{5} &=& \frac{1}{n} \mathlarger{\mathlarger{\sum}}_{i \leq n} \frac{2}{r(r-1)}\mathlarger{\mathlarger{\sum}}_{1 \leq k < j \leq r}\Bigg[\frac{g_{35}(s, t)}{h_{G}^{2}} W\Bigg(\frac{T_{ij} - s}{h_G}\Bigg) W\Bigg(\frac{T_{ik} - t}{h_G}\Bigg) G(T_{ij}, T_{ik})\Bigg]  -  \frac{1}{6}(\phi(s, t) + \rho(s, t) + \tau(s, t))\\ \nonumber
&=&\frac{1}{n} \mathlarger{\mathlarger{\sum}}_{i \leq n} \frac{2}{r(r-1)}\mathlarger{\mathlarger{\sum}}_{1 \leq k < j \leq r}\frac{g_{35}(s, t)}{h_{G}^{2}} W\Bigg(\frac{T_{ij} - s}{h_G}\Bigg) W\Bigg(\frac{T_{ik} - t}{h_G}\Bigg) \Bigg[G(T_{ij}, T_{ik}) -  \frac{1}{6}\Bigg(\frac{p_{1} + p_{2} + p_{3}}{A_{11}}\Bigg) G(s, t)\\ \nonumber 
&\qquad - &\Bigg(\frac{p_{4} + p_{5} + p_{6}}{A_{11}}\Bigg) (T_{ij} - s) \partial_{s}{G} (s,t) - \Bigg(\frac{p_{7} + p_{8} + p_{9}}{A_{11}}\Bigg)(T_{ik} - t) \partial_{t}{G} (s,t)\Bigg] \\
&=& \bigo (h_{G}^{d + 1}), \quad \mbox{a.s uniformly on}~ 0 \leq t \leq s \leq 1.
\end{eqnarray}}}\hfill$\Box$
\begin{singlespace}
  \noindent {\bf Proof of $(ii)_2$\,.}
Moving on to the proof of second part of Theorem \ref{Thm-Main-1} (ii), we have,
\begin{eqnarray*}
h_{G} \widehat{\partial_{s}G}(s, t) - h_{G} \partial_{s} G(s, t) 
&=& \Bigg(\dfrac
{g_{27}(s,t) + g_{36}(s,t) + g_{37}(s,t) - g_{38}(s,t) - g_{39}(s,t) - g_{40}(s,t)}{g_{7}(s,t) + g_{8}(s,t) + g_{9}(s,t) 
-2g_{10}(s,t) - 1}\Bigg)\\
&\qquad -& (\phi_{1}(s, t) + \rho_{1}(s, t) + \tau_{1}(s, t)),
\end{eqnarray*}
where,
\begin{eqnarray*}
g_{27}(s, t) &=& \dfrac{\sum \limits_{i \leq n} \sum \limits_{k < j}(T_{ik} - t)K(.,.) \times \sum \limits_{i \leq n} \sum \limits_{k < j} Y_{ik}Y_{ij}(T_{ij} - s) K(.,.)\times \sum \limits_{i \leq n} \sum \limits_{k < j}(T_{ik} - t) K(.,.) }{\sum \limits_{i \leq n} \sum \limits_{k < j} K(.,.)\times \sum \limits_{i \leq n} \sum \limits_{k < j}(T_{ij} - s)^{2} K(.,.) \times \sum \limits_{i \leq n} \sum \limits_{k < j}(T_{ik} - t)^{2} K(.,.)},\\
g_{36}(s, t) &=& \dfrac{\sum \limits_{i \leq n} \sum \limits_{k < j} Y_{ik} Y_{ij}K(.,.) \times \sum \limits_{i \leq n} \sum \limits_{k < j} (T_{ij} - s) K(.,.)}{\sum \limits_{i \leq n} \sum \limits_{k < j} K(.,.)\times \sum \limits_{i \leq n} \sum \limits_{k < j}(T_{ij} - s)^{2} K(.,.)},\\
g_{37}(s, t) &=& \dfrac{\sum \limits_{i \leq n} \sum \limits_{k < j} (T_{ij} - s)(T_{ik} - t) K(.,.) \times \sum \limits_{i \leq n} \sum \limits_{k < j} Y_{ik} Y_{ij} (T_{ik} - t) K(.,.)}{\sum \limits_{i \leq n} \sum \limits_{k < j} (T_{ij} - s)^2 K(.,.)\times \sum \limits_{i \leq n} \sum \limits_{k < j}(T_{ik} - t)^{2} K(.,.)},\\
g_{38}(s, t) &=& \dfrac{\sum \limits_{i \leq n} \sum \limits_{k < j} Y_{ik}Y_{ij}(T_{ij} - s) K(.,.)}{\sum \limits_{i \leq n} \sum \limits_{k < j} (T_{ij} - s)^2 K(.,.)},\\
g_{39}(s, t) &=& \dfrac{\sum \limits_{i \leq n} \sum \limits_{k < j} Y_{ik} Y_{ij}K(.,.) \times \sum \limits_{i \leq n} \sum \limits_{k < j} (T_{ij} - s)(T_{ik} - t) K(.,.) \times \sum \limits_{i \leq n} \sum \limits_{k < j}(T_{ik} - t) K(.,.) }{\sum \limits_{i \leq n} \sum \limits_{k < j} K(.,.)\times \sum \limits_{i \leq n} \sum \limits_{k < j}(T_{ij} - s)^{2} K(.,.) \times \sum \limits_{i \leq n} \sum \limits_{k < j}(T_{ik} - t)^{2} K(.,.)},\\
g_{40}(s, t) &=& \dfrac{\sum \limits_{i \leq n} \sum \limits_{k < j} (T_{ik} - t) K(.,.) \times \sum \limits_{i \leq n} \sum \limits_{k < j} (T_{ij} - s) K(.,.) \times \sum \limits_{i \leq n} \sum \limits_{k < j} Y_{ik} Y_{ij} (T_{ik} - t) K(.,.) }{\sum \limits_{i \leq n} \sum \limits_{k < j} K(.,.)\times \sum \limits_{i \leq n} \sum \limits_{k < j}(T_{ij} - s)^{2} K(.,.) \times \sum \limits_{i \leq n} \sum \limits_{k < j}(T_{ik} - t)^{2} K(.,.)},\\
K(.,.) &=& K_{{H}_{G}}((T_{ij} - s),(T_{ik} - t)),\\
\phi_{1}(s, t) &=& \Big(\dfrac{p_{10} + p_{11}+ p_{12}}{A_{11}}\Big) G(s, t), \\
\rho_{1}(s,t) &=&  \Big(\dfrac{p_{13} + p_{14} + p_{15}}{A_{11}}\Big) h_{G} \partial_{s}{G} (s,t),\\
\tau_{1}(s,t) &=& \Big(\dfrac{p_{16} + p_{17} + p_{18}}{A_{11}}\Big) h_{G} \partial_{t}{G} (s,t),   
\end{eqnarray*}
{\footnotesize{\begin{eqnarray}\nonumber
p_{10} &=& \big[(adw_{1} + bew_{2} + cfw_{3})(dw_{1} + ew_{2} + fw_{3}) - (aw_{1} + bw_{2} + cw_{3})(d^{2}w_{1} + e^{2}w_{2} + f^{2}w_{3})\big](w_{1} + w_{2} + w_{3}),\\ \nonumber
p_{11} &=& \big[(d^{2}w_{1} + e^{2}w_{2} + f^{2}w_{3})(w_{1} + w_{2} + w_{3}) - (dw_{1} + ew_{2} + fw_{3})^{2}\big](aw_{1} + bw_{2} + cw_{3}),\\ \nonumber
p_{12} &=& \big[(aw_{1} + bw_{2} + cw_{3})(dw_{1} + ew_{2} + fw_{3}) - (adw_{1} + bew_{2} + cfw_{3})(w_{1} + w_{2} + w_{3})\big](dw_{1} + ew_{2} + fw_{3}),\\ \nonumber
p_{13} &=& \big[(adw_{1} + bew_{2} + cfw_{3})(dw_{1} + ew_{2} + fw_{3}) - (aw_{1} + bw_{2} + cw_{3})(d^{2}w_{1} + e^{2}w_{2} + f^{2}w_{3})\big](aw_{1} + bw_{2} + cw_{3}),\\ \nonumber
p_{14} &=& \big[(d^{2}w_{1} + e^{2}w_{2} + f^{2}w_{3})(w_{1} + w_{2} + w_{3}) - (dw_{1} + ew_{2} + fw_{3})^{2}\big](a^{2}w_{1} + b^{2}w_{2} + c^{2}w_{3}),\\ \nonumber
p_{15} &=& \big[(aw_{1} + bw_{2} + cw_{3})(dw_{1} + ew_{2} + fw_{3}) - (adw_{1} + bew_{2} + cfw_{3})(w_{1} + w_{2} + w_{3})\big](adw_{1} + bew_{2} + cfw_{3}),\\ \nonumber
p_{16} &=&  \big[(adw_{1} + bew_{2} + cfw_{3})(dw_{1} + ew_{2} + fw_{3}) - (aw_{1} + bw_{2} + cw_{3})(d^{2}w_{1} + e^{2}w_{2} + f^{2}w_{3})\big](dw_{1} + ew_{2} + fw_{3}),\\ \nonumber
p_{17} &=& \big[(d^{2}w_{1} + e^{2}w_{2} + f^{2}w_{3})(w_{1} + w_{2} + w_{3}) - (dw_{1} + ew_{2} + fw_{3})^{2}\big](adw_{1} + bew_{2} + cfw_{3}),\\
p_{18} &=& \big[(aw_{1} + bw_{2} + cw_{3})(dw_{1} + ew_{2} + fw_{3}) - (adw_{1} + bew_{2} + cfw_{3})(w_{1} + w_{2} + w_{3})\big](d^2 w_{1} + e^2w_{2} + f_{2}w_{3}).
\label{1.2}
\end{eqnarray}}}

\noindent The terms $g_{7}(s,t), g_{8}(s,t), g_{9}(s,t)$ and $g_{10}(s,t)$ are the same as defined in the previous proof. We considered the term $g_{27}(s,t)$ only and analysed the sub term of $g_{27}(s,t)$, which contains $Y_{ik}$ and $Y_{ij}$. All other terms in the numerator and denominator of $h_{G} \widehat{\partial_{s}G}(s, t)$, except the sub term of $g_{27}(s,t)$, which contains $Y_{ik}$ and $Y_{ij}$, can be expressed as a function of $s$ and $t$. The ratio of, all other sub terms of $g_{35}(s,t)$ (which don't contain $Y_{ik}$ and $Y_{ij}$), and the terms in the denominator, is taken as $g_{36}(s, t).$ Similarly, we subtracted $\frac{1}{6}h_{G} \partial_{s} G(s, t)$ from $h_{G} \widehat{\partial_{s}G}(s, t).$ We continued our analysis from here the same way, as is done in the previous proof. We found that the rates of convergence are similar. 
\end{singlespace} \hfill$\Box$

\begin{singlespace}
\noindent {\bf Proof of $(i)_1$\,.} 
 For $d = 2$, we partially differentiate the sum of squared errors in Equation \ref{beta}, with respect to the parameters, and set these equations to zero respectively. The corresponding equations are:
 
\begin{equation}  
    \widehat{m}(t) = \frac{f_{41}(t) + f_{42}(t) + f_{43}(t) - f_{44}(t) - f_{45}(t) - f_{46}(t)}{f_{47}(t) + f_{48}(t) + f_{49}(t) - 2f_{50}(t) - 1},
 \end{equation} 
where,\\
$f_{41}(t) = \dfrac{g_{411}(t) \times (g_{412}(t))^{2}}{\prod \limits_{i=4}^{6} g_{41i}(t)},~~~
f_{42}(t) = \dfrac{\prod \limits_{i=1}^{2} g_{42i}(t)}{\prod \limits_{i=4}^{5} g_{41i}(t)},~~~
f_{43}(t) = \dfrac{g_{415}(t) \times g_{432}(t)}{g_{414}(t) \times g_{416}(t)},$\\
$f_{44}(t) = \dfrac{\sum \limits_{i \leq n} \sum \limits_{j \leq r(n)} Y_{ij} K_{h_{m}^{2}}(T_{ij} - t)}{\sum \limits_{i \leq n} \sum \limits_{j \leq r(n)} K_{h_{m}^{2}}(T_{ij} - t)},~~
f_{45}(t) = \dfrac{g_{421}(t) \times g_{412}(t) \times g_{432}(t)}{\prod \limits_{i=4}^{6} g_{41i}(t)},$\\
$f_{46}(t) = \dfrac{g_{415}(t) \times g_{422}(t) \times g_{412}(t)}{\prod \limits_{i=4}^{6} g_{41i}(t)},~~
f_{47}(t) = \dfrac{(g_{421}(t))^{2}}{\prod \limits_{i=4}^{5} g_{41i}(t)},~~
f_{48}(t) = \dfrac{(g_{415}(t))^{2}}{g_{414}(t) \times g_{416}(t)},$\\
$f_{49}(t) = \dfrac{(g_{412}(t))^{2}}{\prod \limits_{i=5}^{6} g_{41i}(t)}, ~\mbox{and}~ 
f_{50}(t) =  \dfrac{g_{421}(t) \times g_{412}(t) \times g_{415}(t)}{\prod \limits_{i=4}^{6} g_{41i}(t)}$.\\
\noindent The $g_{()}'s$ in the aforementioned functions are:
 \begin{align*}
g_{411}(t) &= \sum \limits_{i \leq n} \sum \limits_{j \leq r(n)} Y_{ij} K_{h_{m}^{2}}(T_{ij} - t),\\
g_{412}(t) &= \sum \limits_{i \leq n} \sum \limits_{j \leq r(n)} (T_{ij} - t)^{3} K_{h_{m}^{2}}(T_{ij} - t),\\
g_{414}(t) &= \sum \limits_{i \leq n} \sum \limits_{j \leq r(n)} K_{h_{m}^{2}}(T_{ij} - t),\\
g_{415}(t) &= \sum \limits_{i \leq n} \sum \limits_{j \leq r(n)} (T_{ij} - t)^{2} K_{h_{m}^{2}}(T_{ij} - t),\\
g_{416}(t) &= \sum \limits_{i \leq n} \sum \limits_{j \leq r(n)} (T_{ij} - t)^{4} K_{h_{m}^{2}}(T_{ij} - t),\\
g_{421}(t) &= \sum \limits_{i \leq n} \sum \limits_{j \leq r(n)} (T_{ij} - t) K_{h_{m}^{2}}(T_{ij} - t),\\
g_{422}(t) &= \sum \limits_{i \leq n} \sum \limits_{j \leq r(n)} Y_{ij} (T_{ij} - t) K_{h_{m}^{2}}(T_{ij} - t),\\ 
g_{432}(t) &= \sum \limits_{i \leq n} \sum \limits_{j \leq r(n)} Y_{ij} (T_{ij} - t)^{2} K_{h_{m}^{2}}(T_{ij} - t).
\end{align*}

\noindent Similarly, we obtained,
\begin{equation}
h_{m}
 \widehat{\partial m}(t) = \frac{g_{41}(t) + g_{42}(t) + g_{43}(t) - g_{44}(t) - g_{45}(t) - g_{46}(t)}{g_{47}(t) + g_{48}(t) + g_{49}(t) - 2g_{50}(t) - 1},
\end{equation}
where,\\
$g_{41}(t) = \dfrac{(f_{411}(t))^{2} \times f_{412}(t)}{\prod \limits_{i=4}^{6} f_{41i}(t)},~~
g_{42}(t) = \dfrac{\prod \limits_{i=1}^{2} f_{42i}(t)}{\prod \limits_{i=4}^{5} f_{41i}(t)},~~
g_{43}(t) = \dfrac{\prod \limits_{i=1}^{2} f_{43i}(t)}{\prod \limits_{i=5}^{6} f_{41i}(t)},$\\
$g_{44}(t) = \dfrac{\sum \limits_{i \leq n} \sum \limits_{j \leq r(n)} Y_{ij} (T_{ij} - t) K_{h_{m}^{2}}(T_{ij} - t)}{\sum \limits_{i \leq n} \sum \limits_{j \leq r(n)} (T_{ij} - t)^{2} K_{h_{m}^{2}}(T_{ij} - t)},~~
g_{45}(t) = \dfrac{f_{421}(t) \times f_{432}(t) \times f_{415}(t)}{\prod \limits_{i=4}^{6} f_{41i}(t)},$\\
$g_{46}(t) = \dfrac{f_{415}(t) \times f_{422}(t) \times f_{431}(t)}{\prod \limits_{i=4}^{6} f_{41i}(t)},~~
g_{47}(t) = \dfrac{(f_{422}(t))^{2}}{\prod \limits_{i=4}^{5} f_{41i}(s,t)},~~
g_{48}(t) = \dfrac{(f_{411}(t))^{2}}{f_{414}(t) \times f_{416}(t)},$\\
$g_{49}(t) = \dfrac{(f_{432}(t))^{2}}{\prod \limits_{i=5}^{6} f_{41i}(t)} ~\mbox{and}~ 
g_{50}(t) =  \dfrac{f_{422}(t) \times f_{432}(t) \times f_{411}(t)}{\prod \limits_{i=4}^{6} f_{41i}(t)}.$\\

\noindent The $f_{()}'s$ in the aforementioned functions are:
\begin{align*}
f_{411}(t) &= \sum \limits_{i \leq n} \sum \limits_{j \leq r(n)} (T_{ij} - t)^{2} K_{h_{m}^{2}}(T_{ij} - t),\\
f_{412}(t) &= \sum \limits_{i \leq n} \sum \limits_{j \leq r(n)} Y_{ij} (T_{ij} - t) K_{h_{m}^{2}}(T_{ij} - t),\\
f_{414}(t) &= \sum \limits_{i \leq n} \sum \limits_{j \leq r(n)} K_{h_{m}^{2}}(T_{ij} - t),\\
f_{415}(t) &= \sum \limits_{i \leq n} \sum \limits_{j \leq r(n)} (T_{ij} - t)^{2} K_{h_{m}^{2}}(T_{ij} - t),\\
f_{416}(t) &= \sum \limits_{i \leq n} \sum \limits_{j \leq r(n)} (T_{ij} - t)^{4} K_{h_{m}^{2}}(T_{ij} - t),\\
f_{421}(t) &= \sum \limits_{i \leq n} \sum \limits_{j \leq r(n)} Y_{ij} K_{h_{m}^{2}}(T_{ij} - t),\\
f_{422}(t) &= \sum \limits_{i \leq n} \sum \limits_{j \leq r(n)} (T_{ij} - t) K_{h_{m}^{2}}(T_{ij} - t),\\
f_{431}(t) &= \sum \limits_{i \leq n} \sum \limits_{j \leq r(n)} Y_{ij} (T_{ij} - t)^{2} K_{h_{m}^{2}}(T_{ij} - t)~\mbox{and}~\\
f_{432}(t) &= \sum \limits_{i \leq n} \sum \limits_{j \leq r(n)} (T_{ij} - t)^{3} K_{h_{m}^{2}}(T_{ij} - t).
\end{align*}

\noindent Now, we should obtain $m(t)$. Following Mohammadi and Panaretos \cite{B3}, we observe that  
\begin{eqnarray*}
\begin{bmatrix}
	m(t)\\
	h_{m} \partial{m}(t)\\
	\beta_{2}
\end{bmatrix}
= \big[X_{1}^T W_{m} X_{1}\big]^{-1} \big[X_{1}^T W_{m} X_{1}\big] 
\begin{bmatrix}
	m(t)\\
	h_{m} \partial{m}(t)\\
	\beta_{2}
\end{bmatrix},
\end{eqnarray*}\\
where,
\vspace{0.1cm}
\begin{align}\label{def-X-matrix1}
    X_{1} := \begin{bmatrix}
1 & a & d_{1}\\
1 & b & e_{1}\\
1 & c & f_{1}
\end{bmatrix},
\end{align}
the elements $a, b, c, d_{1}, e_{1}, f_{1}$ in $X_{1}$ are defined as
\begin{eqnarray*}
 a =  T_{1j} - t,  \quad b =  T_{2j} - t, \quad c =  T_{3j} - t, \\ 
 d_{1} =  (T_{1j} - t)^{2}, \quad e_{1} =  (T_{2j} - t)^{2}, \quad f_{1} =  (T_{3j} - t)^{2},
\end{eqnarray*}
and the matrix $W_{m}$ is defined as
$$
W_{m} := \begin{bmatrix}
    w_{1m} & 0 & 0\\
    0 & w_{2m} & 0\\
    0 & 0 & w_{3m}
\end{bmatrix},
$$
where, for $i=1, 2$ and $3$, $w_{im}$ is defined as
\[
w_{im} :=  \frac{1}{h_{m}^{2}} W\Bigg(\frac{T_{ij} - t}{h_m^{2}}\Bigg) = K_{h_{m}^{2}}(T_{ij} - t)\,.
\]

\noindent Solving the aforementioned matrix setup, we obtain,
\begin{eqnarray}
m(t) = \phi_{2} (t) + \rho_{2} (t) + \tau_{2}(t). \label{e12}
\end{eqnarray}
In Equation (\ref{e12}),\\ 
$\phi_{2}(t) = \Big(\dfrac{p_{1m} + p_{2m}+ p_{3m}}{A_{12}}\Big) m(t),~ \rho_{2}(t) =  \Big(\dfrac{p_{4m} + p_{5m} + p_{6m}}{A_{12}}\Big) h_{m} \partial{m}(t), \tau_{2}(t) = \Big(\dfrac{p_{7m} + p_{8m} + p_{9m}}{A_{12}}\Big) \beta_{2},$\\

\noindent where, $p_{1m}, p_{2m}, p_{3m}, p_{4m}, p_{5m}, p_{6m}, p_{7m}, p_{8m}, p_{9m},$ have the same representation (except $d, e, f, w_{1}, w_{2}, w_{3}$ are replaced by $d_{1}, e_{1}, f_{1}, w_{1m}, w_{2m}~ \mbox{and}~ w_{3m}$, respectively) as defined in Equation \ref{1.1}, in the proof of Theorem - $5.2 (ii)_{1},$ and $A_{12}$ represents the determinant of the matrix $(X_{1}^{T} W_{m}X_{1}).$ 

\noindent Our objective is to find $\widehat{m}(t)-m(t)$. Hence, we obtain 
\onehalfspacing
\begin{eqnarray}\nonumber
\underset{0 \leq t \leq 1}{\sup}|\widehat{m}(t)-m(t)|
&=&  \underset{0 \leq t \leq 1}{\sup} \Bigg|\frac{f_{41}(t) + f_{42}(t) + f_{43}(t) - f_{44}(t) - f_{45}(t) - f_{46}(t)}{f_{47}(t) + f_{48}(t) + f_{49}(t) - 2f_{50}(t) - 1} \\ \nonumber
 &\qquad -& (\phi_{2}(t) + \rho_{2}(t) + \tau_{2}(t))\Bigg|.
\end{eqnarray} 
\onehalfspacing
Now, we are considering only the first term of the numerator in $\widehat{m}(t)$, i.e., $f_{41}(t)$, because the expression is very long. However, the denominator remains the same. Therefore,
\begin{eqnarray}\nonumber
\underset{0 \leq t \leq 1}{\sup}|\widehat{m}(t)-m(t)|
&=&  \underset{0 \leq t \leq 1}{\sup} \Bigg|\frac{f_{41}(t)}{f_{47}(t) + f_{48}(t) + f_{49}(t) - 2f_{50}(t) - 1}\\ \nonumber
& \quad \quad -& (\phi_{2}(t) + \rho_{2} (t) + \tau_{2}(t))\Bigg|\\ \nonumber
&=& \underset{0 \leq t \leq 1}{\sup} \Bigg|\frac{g_{411}(t) \times (g_{412}(t))^{2}}{(\prod \limits_{i = 4}^{6} g_{41i}(t)(f_{47}(t) + f_{48}(t) + f_{49}(t) - 2f_{50}(t) - 1)} \\ 
&\qquad -& (\phi_{2}(t) + \rho_{2} (t) + \tau_{2}(t))\Bigg|.
\label{e21}
\end{eqnarray}
According to Hsing and Eubank \cite{B4}, the terms, $g_{412}(t), g_{414}(t), g_{415}(t), g_{416}(t), f_{47}(t), f_{48}(t), f_{49}(t)$ and $f_{50}(t)$, which have summations, can again be expressed as some function of $t$. However, we are not considering the term $g_{411}(t)$ in such context, because it contains the term $Y_{ij}$, which is really needed for further analysis.  Hence, Equation (\ref{e21}) can be expressed as 
\begin{eqnarray}\nonumber
\underset{0 \leq t \leq 1}{\sup}|\widehat{m}(t)-m(t)|
&=& \underset{0 \leq t \leq 1}{\sup} \Bigg|\frac{g_{411}(t) \times g_{417}(t)}{(g_{418}(t) \times g_{419}(t) \times g_{420}( t)}\\\nonumber
& \qquad \times& \frac{1}{(g_{423}(t) + g_{424}(t) + g_{425}(t) - 2g_{426}(t) - 1)} \\ \nonumber 
& \qquad -& (\phi_{2} (t) + \rho_{2} (t) + \tau_{2}(t))\Bigg|.\\\nonumber
\end{eqnarray}
%&+& \frac{2k_{1}(\theta)k_{2}(\theta)k_{3}(\theta)}{r(r-1)} %\mathlarger{\mathlarger{\sum}}_{1 \leq k < j \leq r} \Bigg[\frac{(dc %- af)}{A_{11}}w_{2}(X_{2j} + U_{2j})(X_{2k} + U_{2k})\Bigg]\\ %\nonumber
%&+& \frac{2k_{1}(\theta)k_{2}(\theta)k_{3}(\theta)}{r(r-1)} %\mathlarger{\mathlarger{\sum}}_{1 \leq k < j \leq r} \Bigg[\frac{(ae %- db)}{A_{11}}w_{3}(X_{3j} + U_{3j})(X_{3k} + U_{3k})\Bigg]\\  
Therefore, 
$$\underset{0 \leq t \leq 1}{\sup}|\widehat{m}(t)-m(t)| = \underset{0 \leq t \leq 1}{\sup} \Bigg|(g_{411}(t) \times g_{435}(t)) - (\phi_{2} (t) + \rho_{2} (t) + \tau_{2}(t))\Bigg|,$$
where, $$g_{435}(t) = \dfrac{g_{417}(t)}{(g_{418}(t) \times g_{419}(t) \times g_{420}(t))(g_{423}(t) + g_{424}(t) + g_{425}(t) - 2g_{426}(t) - 1)}.$$

\noindent Proceeding further, Equation (\ref{e21}) can be written as 
\begin{eqnarray} \nonumber
\underset{0 \leq t \leq 1}{\sup}|\widehat{m}(t)-m(t)|
 &=& \underset{0 \leq t \leq 1}{\sup}\Bigg|\frac{1}{n} \mathlarger{\mathlarger{\sum}}_{i \leq n}  \frac{2}{r(r-1)}\mathlarger{\mathlarger{\sum}}_{j \leq r(n)} \bigg[Y_{ij} K_{h_{m}^{2}}(T_{ij} - t) \times  g_{435}(t)\bigg]\\  \nonumber 
 &\qquad -& (\phi_{2}(t) + \rho_{2}(t) + \tau_{2}(t))\Bigg|.
\end{eqnarray}
 From here onwards, we continue our analysis in a similar sequence, as done in the proof of Theorem - $5.2 (ii)_{1}.$ Since, lines of proof are similar, we have not presented the whole proof. However, one major change is that we have used 
$\mathcal{B}_{h_m}(n),$ 
instead of $\mathcal{Q}_{h_G}(n)$ in the derivation, as defined in Theorem \ref{Thm-Main-1}.
After derivation, we found that the rates of convergence are $\bigo(h_{m}^{d + 1}).$\hfill$\Box$

\noindent {\bf Proof of $(i)_2$\,.}
 For proof of second part of Theorem \ref{Thm-Main-1} (i), we have,
\begin{eqnarray*}
h_{m} \widehat{\partial m}(t) - h_{m} \partial m(t) 
&=& \Bigg(\dfrac{g_{51}(t) + g_{52}(t) + g_{53}(t) - g_{54}(t) - g_{55}(t) - g_{56}(t)}{g_{47}(t) + g_{48}(t) + g_{49}(t) - 2g_{50}(t) - 1}\\
&\qquad -& (\phi_{3}(t) + \rho_{3}(t) + \tau_{3}(t))\Bigg),
\end{eqnarray*}
where,
\begin{align*}
g_{51}(t) &= \dfrac{\sum \limits_{i \leq n} \sum \limits_{j \leq r(n)}(T_{ij} - t)^{2} K(.) \sum \limits_{i \leq n} \sum \limits_{j \leq r(n)} Y_{ij}(T_{ij} - t) K(.) \times \sum \limits_{i \leq n} \sum \limits_{j \leq r(n)}(T_{ij} - t)^{2} K(.) }{\sum \limits_{i \leq n} \sum \limits_{j \leq r(n)} K(.)\times \sum \limits_{i \leq n} \sum \limits_{j \leq r(n)}(T_{ij} - t)^{2} K(.) \times \sum \limits_{i \leq n} \sum \limits_{j \leq r(n)}(T_{ij} - t)^{4} K(.)},\\
g_{52}(t) &= \dfrac{\sum \limits_{i \leq n} \sum \limits_{j \leq r(n)} Y_{ij}K(.) \times \sum \limits_{i \leq n} \sum \limits_{j \leq r(n)} (T_{ij} - t) K(.)}{\sum \limits_{i \leq n} \sum \limits_{j \leq r(n)} K(.)\times \sum \limits_{i \leq n} \sum \limits_{j \leq r(n)}(T_{ij} - t)^{2} K(.)},\\
g_{53}(t) &= \dfrac{\sum \limits_{i \leq n} \sum \limits_{j \leq r(n)} Y_{ij} (T_{ij} - t)^{2} K(.) \times \sum \limits_{i \leq n} \sum \limits_{j \leq r(n)} (T_{ij} - t)^{3} K(.)}{\sum \limits_{i \leq n} \sum \limits_{j \leq r(n)} (T_{ij} - t)^{4} K(.)\times \sum \limits_{i \leq n} \sum \limits_{j \leq r(n)} (T_{ij} - t)^{2} K(.)},\\
g_{54}(t) &= \dfrac{\sum \limits_{i \leq n} \sum \limits_{j \leq r(n)} Y_{ij}(T_{ij} - t) K(.)}{\sum \limits_{i \leq n} \sum \limits_{j \leq r(n)} (T_{ij} - t)^2 K(.)},\\
g_{55}(t) &= \dfrac{\sum \limits_{i \leq n} \sum \limits_{j \leq r(n)} Y_{ij} K(.) \times \sum \limits_{i \leq n} \sum \limits_{j \leq r(n)} (T_{ij} - t)^{3} K(.) \times \sum \limits_{i \leq n} \sum \limits_{j \leq r(n)} (T_{ij} - t)^{2} K(.)}{\sum \limits_{i \leq n} \sum \limits_{j \leq r(n)} K(.) \times \sum \limits_{i \leq n} \sum \limits_{j \leq r(n)} (T_{ij} - t)^{2} K(.) \times \sum \limits_{i \leq n} \sum \limits_{j \leq r(n)} (T_{ij} - t)^{4} K(.)},\\
g_{56}(t) &= \dfrac{\sum \limits_{i \leq n} \sum \limits_{j \leq r(n)} (T_{ij} - t)^{2} K(.) \times \sum \limits_{i \leq n} \sum \limits_{j \leq r(n)} (T_{ij} - t) K(.) \times \sum \limits_{i \leq n} \sum \limits_{j \leq r(n)} Y_{ij} (T_{ij} - t)^{2} K(.)}{\sum \limits_{i \leq n} \sum \limits_{j \leq r(n)} K(.) \times \sum \limits_{i \leq n} \sum \limits_{j \leq r(n)} (T_{ij} - t)^{2} K(.) \times \sum \limits_{i \leq n} \sum \limits_{j \leq r(n)} (T_{ij} - t)^{4} K(.)},\\
K(.) &= K_{h_{m}^{2}}(T_{ij} - t),\\
 \phi_{3}(t) &= \Big(\dfrac{p_{10m} + p_{11m}+ p_{12m}}{A_{12}}\Big) m(t), \\
\rho_{3}(t) &=  \Big(\dfrac{p_{13m} + p_{14m} + p_{15m}}{A_{12}}\Big) h_{m} \partial {m} (t),\\
\tau_{3}(t) &= \Big(\dfrac{p_{16m} + p_{17m} + p_{18m}}{A_{12}}\Big) \beta_{2}, 
\end{align*}
$p_{10m}, p_{11m}, p_{12m}, p_{13m}, p_{14m}, p_{15m}, p_{16m}, p_{17m}, p_{18m},$ have the same representation (except $d, e, f, w_{1}, w_{2}, w_{3}$ are replaced by $d_{1}, e_{1}, f_{1}, w_{1m}, w_{2m}~ \mbox{and}~ w_{3m}$, respectively) as defined in equation \ref{1.2}, in the proof of Theorem $5.2 (ii)_{2}.$ The terms $g_{47}(t), g_{48}(t), g_{49}(t)$ and $g_{50}(t)$ are the same as defined in the previous proof.  We considered the term $g_{51}(t)$ only and analysed the sub term of $g_{51}(t)$, which contains $Y_{ij}$. All other terms in the numerator and denominator of $h_{m} \widehat{\partial m}(t)$, except the sub term of $g_{51}(t)$, which contains $Y_{ij}$, can be expressed as a function of $t$. The ratio of, all other sub terms of $g_{51}(t)$ (which don't contain $Y_{ij}$), and the terms in the denominator, is taken as $g_{57}( t).$ Similarly, we subtracted $\frac{1}{6}h_{m} \partial m(t)$ from $h_{m} \widehat{\partial m}(t).$ We continued our analysis from here the same way, as is done in the previous proof. We found that the rates of convergence are similar.\hfill$\Box$    
\end{singlespace}

\vspace{0.5cm}
\noindent {\bf Acknowledgment: } Subhra Sankar Dhar gratefully acknowledges his core research grant (CRG/2022/001489), Government of India.

\vspace{0.3cm}
%\newpage

\newpage
{\bf Supplement to ``Least Square Estimation: SDE Perturbed by L\'evy Noise with Sparse Sample Paths"}

%\begin{abstract}
{\bf Abstract :} In this supplementary materials, the proposed methodology is applied to a benchmark dataset of functional data/curves, and a small simulation study is conducted to illustrate the findings.

%This article investigates the least squares estimators (LSE) of the unknown parameters involved in the stochastic differential equations (SDEs) perturbed by L\'evy noise when the sample paths are sparse enough. In other words, given $n$ many sparsely observed curves associated with the aforesaid model, we obtain the least squares estimators of the unknown parameters, namely, the drift coefficient, the diffusion coefficient and the jump-diffusion coefficient, and the asymptotic rate of convergence of the proposed LSE estimators are also derived. Moreover, the proposed methodology is implemented on a benchmark functional data/curves, and a small simulation study is also conducted. 

%It is observed that quantile based estimator like the {\tt LAD} performs better when the curves having heavy tails, and in general, the features of the curves can be characterized by the behaviour of the proposed estimators with various choices of the quantile index.     
%\end{abstract}

%\maketitle

%\noindent
%\textbf{Keywords and phrases:}
% 		{Stochastic Differential Equations, L\'evy noise, It\^o Lemma, Local Polynomial Regression, Nonparametric Estimation,  Consistency of Estimators}

%\medskip

%\noindent	
%\textbf{AMS subject classification (2020): 60H10, 60H35}, {62G08, 62G20}
 	
%\pagestyle{myheadings}
%\thispagestyle{plain}
%\markboth{B. K. Jha, S. S. Dhar, and A. A. Panda}{Supplement to ``Least Square Estimation: SDEs Perturbed by L\'evy Noise with Sparse Sample Paths"} 
 	
%\medskip

%%%%%%%%%%%%%%%%

\section{Simulation Studies}\label{SS}
In this section, we have conducted some simulation studies to see the performance of the proposed methodologies/estimators in the case of finite samples. Motivated by the model considered in \cite{Zhang}, here we consider the following model : 

\begin{align}\label{EX-1}
dX(t)= -(2 + \sin t)X(t)\,dt + \frac{1}{2}\sin t\, dB(t)+ \int_{|y|\leq 1} \sin t\,\tilde{\eta}(dt,dy),   
\end{align}  where $t\geq 0$. Note that in comparison with model described in (1.1) in the main manuscript, we have $\mu (t) = - (2 + \sin t)$, $\sigma (t) = \frac{1}{2}\sin t$ and $\xi (t) = \sin t$. 

\vspace{0.25in}

\noindent {\bf Case 1:} Suppose that $B(t)$ is a random element associated with standard Brownian motion, and for numerical studies, we consider $t\in [0, 50]$, and n many curves are generated independently using the model described in \eqref{EX-1}.
Let $\widehat{\mu}_{{{\tt LSE}}}(t)$ (essentially, same as $\widehat{\mu}(t)$ defined in Equation (4.6) be the {\tt LSE} estimator of $\mu(t)$ based on the aforesaid $n$ many curves. Next, we compute the empirical mean squared error ({\tt EMSE}) of $\widehat{\mu}_{{{\tt LSE}}}(t)$, which is defined as $${\tt EMSE}(\widehat{\mu}_{{{\tt LSE}}}(t)) : = \int\limits_{0}^{50}\{\widehat{\mu}_{{{\tt LSE}}}(t) - \mu(t)\}^{2} dt.$$ In Figure \ref{fig:1}, we have plotted the values of ${\tt EMSE}(\widehat{\mu}_{{{\tt LSE}}}(t))$ for different choices of $n$. It is indicated from this diagram that ${\tt EMSE}(\widehat{\mu}_{{{\tt LSE}}}(t))$ decreases as $n$ increases, which further supports the theoretical finding (see Theorem 5.3) that $\widehat{\mu}_{{{\tt LSE}}}(t)$ converges uniformly to $\mu(t)$ with probability 1.

\begin{figure}[htbp!]
 \centering
     \includegraphics[width=4.5in, height=4.5 in]{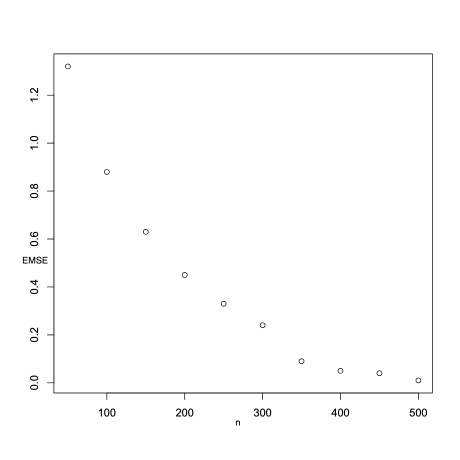}
   \caption{Plot of EMSE for various choices of $n$}
    \label{fig:1}
\end{figure}

\begin{figure}[htbp!]
 \centering
     \includegraphics[width=4.5in, height=4.5 in]{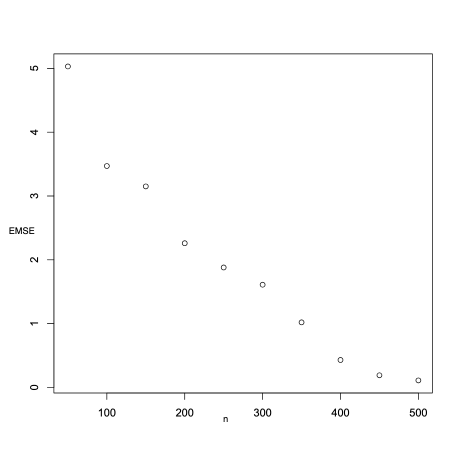}
   \caption{Plot of EMSE for various choices of $n$}
    \label{fig:2}
\end{figure}

%In this case, we obtain $$\frac{{\tt EMSE}(\widehat{\mu}_{{\theta}_{{\tt LSE}}(t))}}{{\tt EMSE}(\widehat{\mu}_{{\theta}_{{\tt LAD}}}(t))} = 0.809,$$ which indicates that the {\tt LSE} estimator is more empirically efficient the {\tt LAD} estimator. In this context, we would like to mention that $\widehat{\mu}_{{\theta}_{{\tt LAD}}}(t)$ is expected to have smaller MSE than that of $\widehat{\mu}_{{\theta}_{{\tt LSE}}}(t)$ as in the case of standard Brownian motion, the marginal distributions have light tails. Similar numerical studies can be done for the LSE and LAD estimators of $\sigma(t)$ and $\xi(t)$ as well, which is left as a future work.  

\vspace{0.25in}

\noindent {\bf Case 2:} Suppose that $B(t)$ is a random element associated with zero mean Gaussian process with covariance kernel $k(s, t) = 50\min(s, t)$ for all $s\in [0, 50]$ and $t\in [0, 50]$. In this case, in Figure \ref{fig:2}, we have plotted the values of EMSE for different choices of $n$. It is indicated by this diagram that here EMSE values are larger relative to the EMSE values for Case 1 (see Figure \ref{fig:1}), although here also the EMSE value decreases as $n$ increases. Hence, in this case also, the study further favour the theoretical result that $\widehat{\mu}_{{{\tt LSE}}}(t)$ converges uniformly to $\mu(t)$ with probability 1. Overall, the LSE estimator of $\mu(.)$ performs better in Case 1 compared to Case 2 because in this case, the data are generated from the process having marginal distributions with heavy tail.

In the end, here also, we should mention that one can do a similar study for $\sigma(t)$ and $\xi(t)$.

%in the same setting of simulation study, we obtain $$\frac{{\tt EMSE}(\widehat{\mu}_{{\theta}_{{\tt LSE}}}(t))}{{\tt EMSE}(\widehat{\mu}_{{\theta}_{{\tt LAD}}}(t))} = 2.471,$$ which indicates that the {\tt LAD} estimator of $\mu$, i.e., $\widehat{\mu}(t)$ at $\theta = \frac{1}{2}$ performs better than the {\tt LSE} estimator of $\mu$ when the curves are generated from the process having marginal distributions with heavy tail.

%\begin{figure}[htbp!]
 %\centering
  %   \includegraphics[width=3in, %height=3 in]{Plot_1.pdf}\includegraphics[width=3in, height= 3in]{Plot_2.pdf}
%   \caption{3D Plots}
%    \label{fig:3}
%\end{figure}

%Finally, for both cases 1 and 2, we plot the $\widehat{\mu}(t)$ for different choices of $t\in [0, 50]$ and $\theta\in (0, 1)$. From the diagrams in Figure ....clearly indicates that $\widehat{\mu}(t)$ has heavier tails for case 2, and it further gives the feature of $\mu(t)$ for $t\in [0, 50]$, and this is the key importance of the proposed methodology. At the end, here also, we should mention that one can do the similar study for $\sigma(t)$ and $\xi(t)$.     

\vspace{0.3cm}

\section{Real Data Analysis}\label{RDA}
In this section, well-known Canadian Weather data is analysed. This data set consists of daily temperature and precipitation at $n = 35$ locations in Canada averaged over 1960 to 1994. \cite{Dette2024} studied this data set in the context of functional regression, and \cite{Bhar2024} analysed this data set regarding the test for independence of infinite diemnsional random elements. A version of the data is available in \url{https://climate.weather.gc.ca/historical_data/search_historic_data_e.html}. The dimension of the data equals 365, which is much larger than the sample size $= 35$ of the data, and therefore embedding such a high-dimensional data into an infinite-dimensional space (specifically speaking, functional in nature) is legitimate enough, and hence the measure associated with the stochastic process involved in model described in (1.1) in the main manuscript can be considered as parent measure for generating these data. In addition, to make it the time parameter space $[0, 1]$, the temperature and the precipitation of 365
days are considered as the equally spaced 365 time points over $[0, 1]$. 

%\begin{figure}[htbp!]
% \centering
 %    \includegraphics[width=2in, height=3 in]{Plot_3.pdf}\includegraphics[width=2in, height= 3in]{Plot_4.pdf}\includegraphics[width=2in, height= 3in]{Plot_5.pdf}
 %  \caption{2D Plots}
 %   \label{fig:3}
%\end{figure}

Now, let this data set be tied to the model described in (1.1) in the main manuscript and compute the least squares estimator of $\mu(t)$, $\sigma_{D}^{2}(t)$ and $\xi(t)$ using (4.6), (4.7) and (4.8), respectively in the main manuscript, at $t = \frac{1}{2}$. We obtain $\widehat{\mu}_{{\tt LSE}}(t)$, $\widehat{\sigma}_{D, {\tt LSE}}^{2}(t)$ and $\widehat{\xi}_{{\tt LSE}}^{2}(t)$ as -2.65, 0.32 and 1.78, respectively. Next, in order to investigate the performance of $\widehat{\mu}_{{\tt LSE}}(t)$, $\widehat{\sigma}_{D, {\tt LSE}}^{2}(t)$ and $\widehat{\xi}_{{\tt LSE}}^{2}(t)$, we compute the Bootstrap Mean Squared Error (BMSE) in the following way. 

Suppose that $B$ many bootstrap resamples are generated with the same size from the original data, and let $\widehat{\mu}_{{\tt i, LSE}}(t)$, $\widehat{\sigma}_{D, {\tt i, LSE}}^{2}(t)$ and $\widehat{\xi}_{{\tt i, LSE}}^{2}(t)$ denote the values of $\widehat{\mu}_{{\tt LSE}}(t)$, $\widehat{\sigma}_{D, {\tt LSE}}^{2}(t)$ and $\widehat{\xi}_{{\tt LSE}}^{2}(t)$ for $i$-th Bootstrap resample, respectively. Here $i = 1, \ldots, B$. Now, 

$$BMSE(\widehat{\mu}_{{\tt LSE}}(t)) = \frac{1}{B}\sum\limits_{i = 1}^{B}(\widehat{\mu}_{{\tt i, LSE}}(t) - \widehat{\mu}_{{\tt LSE}}(t))^{2},$$

$$BMSE(\widehat{\sigma}_{D, {\tt LSE}}^{2}(t)) = \frac{1}{B}\sum\limits_{i = 1}^{B}(\widehat{\sigma}_{D, {\tt i, LSE}}^{2}(t) - \widehat{\sigma}_{D, {\tt LSE}}^{2}(t))^{2},$$ and 

$$BMSE(\widehat{\xi}_{{\tt i, LSE}}^{2}(t)) = \frac{1}{B}\sum\limits_{i = 1}^{B}(\widehat{\xi}_{{\tt i, LSE}}^{2}(t) - \widehat{\xi}_{{\tt LSE}}^{2}(t))^{2}.$$

At $t = \frac{1}{2}$ and $B = 1000$, we obtain $BMSE(\widehat{\mu}_{{\tt LSE}}(t))$, $BMSE(\widehat{\sigma}_{D, {\tt LSE}}^{2}(t))$ and $BMSE(\widehat{\xi}_{{\tt i, LSE}}^{2}(t))$ as 1.32, 0.87 and 1.11, respectively. This study suggests that even for a fairly large number of bootstrap replications, the BMSE values are not small enough relative to zero. Hence, in view of this fact, one may conclude that the LSE does not perform well for this data, and the possible reason may be the presence of outliers/influential observations. This analysis indicates that for parameter estimation in SDE model with jumps, one may consider LAD or in general, quantile estimator to have a more robust version of the estimator, which may be of interest for future research.
\vspace{0.3cm}

%and at the same time, note that for $\theta = \frac{1}{2}$, the values of $\widehat{\mu}(t)$, $\widehat{\sigma}_{D}^{2}(t)$ and $\widehat{\xi}^{2}(t)$ are the median estimates of $\mu(t)$, $\sigma_{D}^{2}(t)$ and $\xi(t)$. It is indicated by them that median estimates are more robust than the {\tt LSE} estimates since this data set may have some outliers. Beside, the quantile estimates of $\mu(t)$, $\sigma_{D}^{2}(t)$ and $\xi(t)$ further indicate that this data set is most likely generated from process whose marginals are heavy tailed distributions. 

\vspace{0.3cm}
%\newpage

\end{document}